\documentclass[10pt,conference]{IEEEtran}

\pdfoutput=1

%
\ifCLASSOPTIONcompsoc
  \usepackage[nocompress]{cite}
\else
  \usepackage{cite}
\fi

%
\ifCLASSINFOpdf
  \usepackage[pdftex]{graphicx}
\else
  \usepackage[dvips]{graphicx}
\fi

%
\usepackage{mathtools}
\usepackage{amsthm}
\usepackage{amssymb}
\usepackage{amsmath}
\usepackage{stmaryrd}

\usepackage{hyperref}
\hypersetup{
    colorlinks=true,
    citecolor=Green,
    linkcolor=NavyBlue,
    filecolor=magenta,
    urlcolor=NavyBlue,
  }

\urlstyle{same} 
\usepackage[capitalise,noabbrev]{cleveref}

\usepackage[dvipsnames]{xcolor}
\usepackage{tikz-cd}
\usepackage{wrapfig}

\usepackage{enumerate}

\usepackage{fontawesome5}

\usepackage{microtype}

\DeclarePairedDelimiter{\set}{\{}{\}}
\DeclarePairedDelimiter{\pa}{(}{)}

\newcommand{\colonequiv}{\mathrel{\vcentcolon\mspace{-1.2mu}\equiv}}
\newcommand{\defeq}{\colonequiv}

\DeclareMathOperator{\inl}{\mathsf{inl}}
\DeclareMathOperator{\inr}{\mathsf{inr}}

\DeclarePairedDelimiter{\squash}{\|}{\|}

\newcommand{\proptrunc}{\squash}

\newcommand{\Zero}{\mathbf 0}
\newcommand{\One}{\mathbf 1}
\newcommand{\Two}{\mathbf 2}

\newcommand{\lambdadot}[2]{\mathop{\lambda}{#1}\mathrel{.}#2}

\renewcommand{\iff}{\ensuremath{\leftrightarrow}}

\newcommand{\U}{\mathcal U}

\DeclareMathOperator{\acc}{\mathsf{is-accessible}}

\DeclareMathOperator{\Prop}{\mathsf{Prop}}

\newcommand{\V}{\mathbb V}
\newcommand{\Vord}{\V_{\textup{ord}}}
\DeclareMathOperator{\Vsetop}{\V-\textup{\textsf{set}}\hspace{1pt}}
\newcommand{\Vset}[2]{\Vsetop\pa*{#1,#2}}

\newbox\qqBoxA
\newdimen\qqCornerHgt
\setbox\qqBoxA=\hbox{$\ulcorner$}
\global\qqCornerHgt=\ht\qqBoxA
\newdimen\qqArgHgt
\def\Quinequote #1{%
    \setbox\qqBoxA=\hbox{$#1$}%
    \qqArgHgt=\ht\qqBoxA%
    \ifnum     \qqArgHgt<\qqCornerHgt \qqArgHgt=0pt%
    \else \advance \qqArgHgt by -\qqCornerHgt%
    \fi \raise\qqArgHgt\hbox{$\ulcorner$} \box\qqBoxA %
    \raise\qqArgHgt\hbox{$\urcorner$}}

\newcommand{\Ord}{\mathsf{Ord}}

\newcommand{\mewo}{\ensuremath{\mathsf{MEWO}}}
\newcommand{\cewo}{\ensuremath{{\mathsf{MEWO}_{\mathsf{cov}}}}}

\DeclareMathOperator{\mar}{m}
\DeclareMathOperator{\Mar}{M}
\newcommand{\simbelow}{\mathrel{\leq_{\Mar}}}

\newcommand{\presinitial}[2]{\mathsf{psim}_{#1}(#2)}

\newcommand{\initseg}{\mathbin\downarrow}
\newcommand{\downs}[1]{\mathop{\downarrow^{#1}}}
\newcommand{\down}{\downs{+}}

\DeclareMathOperator{\totalspace}{\mathbb T}
\DeclareMathOperator{\im}{im}
\newcommand{\insmall}{\mathbin\in_{\U}}

\newcommand{\tc}{\mathrel{<^+}}
\newcommand{\trc}{\mathrel{<^*}}

\newcommand{\sing}[1]{\ensuremath{\{#1\}}}

\newcommand{\qedNoProof}{\qed}


\newcommand{\formalised}{{\color{NavyBlue!75!White}{\raisebox{-0.5pt}{\scalebox{0.8}{\faCog}}}}}
\newcommand{\flinkurl}[1]{\href{#1}{\formalised}}
\newcommand{\flink}[1]{\flinkurl{https://tdejong.com/agda-html/st-tt-ordinals/index.html\##1}}


\begin{document}

\title{Set-Theoretic and Type-Theoretic Ordinals Coincide}

\author{\IEEEauthorblockN{Tom de Jong\IEEEauthorrefmark{1},
  Nicolai Kraus\IEEEauthorrefmark{1},
  Fredrik Nordvall Forsberg\IEEEauthorrefmark{2} and
  Chuangjie Xu\IEEEauthorrefmark{3}}
\IEEEauthorblockA{\IEEEauthorrefmark{1}School of Computer Science,
  University of Nottingham,
  Nottingham, UK\\
  Email: \{tom.dejong, nicolai.kraus\}@nottingham.ac.uk}
\IEEEauthorblockA{\IEEEauthorrefmark{2}Department of Computer and Information Sciences,
  University of Strathclyde,
  Glasgow, UK\\
  Email: fredrik.nordvall-forsberg@strath.ac.uk}
\IEEEauthorblockA{\IEEEauthorrefmark{3}Research and Development Team,
  SonarSource GmbH,
  Bochum, Germany\\
  Email: chuangjie.xu@sonarsource.com}
}

\theoremstyle{plain}
\newtheorem{theorem}{Theorem}
\newtheorem{lemma}[theorem]{Lemma}
\newtheorem{proposition}[theorem]{Proposition}
\newtheorem{corollary}[theorem]{Corollary}
\theoremstyle{definition}
\newtheorem{definition}[theorem]{Definition}
\newtheorem{example}[theorem]{Example}
\theoremstyle{remark}
\newtheorem{remark}[theorem]{Remark}

\thispagestyle{plain}
\pagestyle{plain}

\IEEEoverridecommandlockouts
\IEEEpubid{\phantom{\makebox[\columnwidth]{979-8-3503-3587-3/23/\$31.00~
\copyright2023 IEEE \hfill} \hspace{\columnsep}\makebox[\columnwidth]{ }}}

\maketitle

\begin{abstract}
  In constructive set theory, an ordinal is a hereditarily transitive set.  In
  homotopy type theory (HoTT), an ordinal is a type with a transitive,
  wellfounded, and extensional binary relation. We show that the two definitions
  are equivalent if we use (the HoTT refinement of) Aczel's interpretation of
  constructive set theory into type theory.
  Following this, we generalize the notion of a type-theoretic ordinal to
  capture \emph{all} sets in Aczel's interpretation rather than only the
  ordinals.  This leads to a natural class of ordered structures which contains
  the type-theoretic ordinals and realizes the higher inductive interpretation
  of set theory.
  All our results are formalized in Agda.
\end{abstract}

\section{Introduction}

Set theory and dependent type theory are two very different settings in which
constructive mathematics can be developed, but not always in comparable ways.
Lively discussions on what foundation is ``better'' are not
uncommon.
While we do not dare to offer a judgment on this question, we can at least
report that the choice of foundation is in a certain sense insignificant for the
development of constructive ordinal theory.  We consider this an interesting
finding since ordinals are fundamental in the foundations of set theory and are
used in theoretical computer science in termination arguments~\cite{Floyd1967}
and semantics of inductive definitions~\cite{Aczel1977,DybjerSetzer1999}.

In constructive set theory, following Powell's seminal work~\cite{Powell1975}, a
standard definition%
\footnote{In a classical setting, Cantor's ordinals~\cite{Cantor1883} can be
  presented in multiple equivalent ways. In a constructive setting, these
  presentations are not equivalent and are often not as well-behaved as one
  would wish.  Therefore, various reasonable definitions of ordinals
  are known and have been studied; for example, Taylor~\cite{Taylor1996} gave a
  constructively better-behaved formulation, and in previous work, we compared
  several approaches~\cite{KrausNordvallForsbergXu2022}.  }
of an ordinal is that of a \emph{transitive} set whose elements are again
transitive sets (also cf.~Aczel and Rathjen~\cite{AczelRathjen2010}). A set
\(x\) is transitive if for every \(y \in x\) and \(z \in y\), we have
\(z \in x\), i.e., if \(y \in x\) implies \(y \subseteq x\).
Note how this definition makes essential use of how the membership predicate
\({\in}\) in set theory is global, by simultaneously referring to \(z \in y\)
and \(z \in x\).
In type theory, on the other hand, the statement ``if \(y : x\) and \(z : y\)
then \(z : x\)'' is ill-formed, and so ordinals need to be defined differently.
In homotopy type theory, an ordinal is defined to be a type equipped with an
order relation that is transitive, extensional, and
wellfounded~\cite[\S{}10.3]{HoTTBook}.

A priori, the set-theoretic and the type-theoretic approaches to ordinals are
thus quite different.  One way to compare them is to interpret one foundation
into the other.  Aczel~\cite{Aczel1978} gave an interpretation of Constructive
ZF set theory into type theory using so-called setoids, which was later refined
using a higher inductive type \(\V\) in the HoTT book~\cite[\S{}10.5]{HoTTBook},
referred to as the \emph{cumulative hierarchy}.  Through this construction,
homotopy type theory hosts a model of set theory, and we make use of this to
study the set-theoretic approach to ordinals within it.

To be specific, the cumulative hierarchy \(\V\) allows us to define a set
membership relation \({\in}\), which makes it possible to consider the type
\(\Vord\) of elements of \(\V\) that are set-theoretic ordinals.  Similarly, we
write $\Ord$ for the type of all type-theoretic ordinals, i.e., for the type of
transitive, extensional, and wellfounded order relations.  We show that $\Vord$ and
$\Ord$ are equivalent (isomorphic), meaning that we can translate between
type-theoretic and set-theoretic ordinals.

This translation by itself would not be satisfactory if it were not well-behaved;
what makes it valuable is that it preserves the respective order.  A fundamental
result of type-theoretic ordinals is that the type \(\Ord\) of (small)
ordinals is itself a type-theoretic ordinal when ordered by inclusion of
strictly smaller initial segments (also referred to as \emph{bounded
  simulations}).  To complement this, we show that the type $\Vord$ of
set-theoretic ordinals also canonically carries the structure of a
type-theoretic ordinal.  The isomorphisms that we construct respect these
orderings, and our first main result (\cref{Ord-and-Vord-coincide}) is that
$\Ord$ and $\Vord$ are isomorphic as ordinals (and, consequently, equal, by a
standard application of univalence).  Thus, the set-theoretic and type-theoretic
approaches to ordinals coincide in homotopy type theory.

Going further, we dive deeper into the study of the isomorphism
\(\Vord \to \Ord\).  The analogue to this function in set theory computes the
\emph{rank}~\cite{firstrank,Aczel1977,AczelRathjen2010} of sets recursively.
While our definition is recursive as well, we show that it is possible to give a
conceptually simpler, non-recursive description of the rank of transitive sets,
although this requires paying close attention to size issues.
Specifically, we show that the rank of a set-theoretic ordinal \(\alpha\) is
isomorphic to --- but not equal to for size reasons --- the type of all members
of~\(\alpha\) (\cref{cor:rank-nonrecursively}).

In the second part of the paper, we generalize the isomorphism between set- and
type-theoretic ordinals.
Given that the subtype \(\Vord\) of \(\V\) is isomorphic to \(\Ord\), a type of
ordered structures, it is natural to ask what type of ordered structures
captures \emph{all} of \(\V\).
\begin{wrapfigure}{r}{0.2\textwidth}
  \begin{equation}\label{big-picture}
    \begin{tikzcd}
      \Vord \ar[d,hookrightarrow] \ar[r,"\simeq"] & \Ord \ar[d,hookrightarrow] \\
      \V \ar[r,"\simeq"] & T
    \end{tikzcd}
  \end{equation}
\end{wrapfigure}
That is, we look for a natural type \(T\) of ordered
structures such that the diagram on the right commutes.
Since \(\V\) is \(\Vord\) with transitivity dropped, it is tempting to
try to choose $T$ to be \(\Ord\) without transitivity, i.e., the type
of extensional and wellfounded relations. However, such an attempt is too
naive to work:
consider the type-theoretic ordinal \(\alpha\) with two elements \(0 < 1\), whose
corresponding set in \(\Vord\) is the set \(2 = \{\emptyset, \{\emptyset\}\}\).
The latter is the set-theoretic transitive closure of the non-transitive set
\(\{\{\emptyset\}\} \subseteq 2\), but the only extensional, wellfounded order
whose order-theoretic transitive closure is \(\alpha\) is \(\alpha\) itself.
In other words, there cannot be an order-preserving isomorphism between \(\V\)
and the type of extensional, wellfounded order relations, since there is no corresponding
order for the set \(\{\{\emptyset\}\}\) --- we need additional structure to
fully capture this set. 

To this end, we introduce the theory of \emph{(covered) marked extensional
  wellfounded orders (mewos)}, i.e., extensional, wellfounded relations with
additional structure in the form of a \emph{marking}.
The idea is that the carrier of the order also contains elements representing
elements of elements of the set, with the marking designating the ``top-level''
elements: the set \(\{\{\emptyset\}\}\) is again represented by the order \(\alpha\)
with two elements \(0 < 1\), but with only element \(1\) marked.
Such a marking is \emph{covering} if any element can be reached from a marked
top-level element, i.e., if the order contains no ``junk''.
Since every ordinal can be equipped with the trivial covering by marking all
elements, the type \(\Ord\) of ordinals is a subtype of the type \(\cewo\) of covered
mewos, as requested by Diagram~\ref{big-picture}.

The idea of encoding sets as wellfounded structures is not new; see, e.g.,
\cite[\S{}7]{Osius1974}, \cite[\S{}3]{Taylor1996},
\cite[\S{}4.7]{AdamekMiliusMossSousa2013} and Aczel's~\cite{Aczel1988} ``canonical
picture''~\cite[Ex~4.22]{AdamekMiliusMossSousa2013}.
Instead, the point is to have a notion that allows for a smooth type-theoretic
generalization of the theory of ordinals.
Additionally, covered mewos are shown to work predicatively (i.e., without the
need to assume resizing axioms), which is not obvious for the previously
mentioned approaches.

Aiming for an isomorphism \(\V \simeq \cewo\), we develop the theory of the
covered mewos: the type of covered mewos is itself a covered mewo, and it has
both 
a successor operation, and least upper bounds of arbitrary
(small) families of covered mewos.
Compared to the theory of ordinals, some additional care is required as the
orders involved are not assumed to be transitive.
Using successors and least upper bounds of mewos, we construct a map
\(\V \to \cewo\) by the recursive formula for the rank of a set, and show that
it has an inverse.

\subsection{Summary of contributions}

\begin{itemize}
\item We show that set-theoretic and type-theoretic ordinals coincide
  (\cref{Ord-and-Vord-coincide}).
\item We show that the rank of an ordinal can be defined in a non-recursive way
  (\cref{cor:rank-nonrecursively}).
\item We show that the model of set theory $\V$ is equivalently represented by
  the structure of covered marked extensional wellfounded order relations
  (\cref{thm:second-main-result}).
\end{itemize}

\subsection{Related work}

Constructive treatments of ordinals can be found in Joyal and
Moerdijk~\cite{JoyalMoerdijk1995} and Taylor~\cite{Taylor1996}.
In the context of homotopy type theory, what we call type-theoretic ordinals
were developed in the HoTT book~\cite[\S{}10.3]{HoTTBook}, and their theory
significantly expanded by Escard\'o and
collaborators~\cite{TypeTopologyOrdinals}.
In previous work~\cite{KrausNordvallForsbergXu2021ordHoTT,KrausNordvallForsbergXu2022},
we
developed a framework for different notions of constructive ordinals, and showed
that all ordinals we considered embed into the type-theoretic ordinals in an
order-preserving way.
In fact, the current paper grew out of an attempt to locate the set-theoretic
ordinals somewhere between the countable Brouwer tree ordinals (as considered
by, e.g., Brouwer~\cite{Brouwer1927}, Church~\cite{Church1938},
Kleene~\cite{Kleene1938}, Martin-L\"of~\cite{MartinLof1970}, and Coquand,
Lombardi and Neuwirth~\cite{Coquand2022}) and the type-theoretic ordinals in
this framework, before we realised that they actually coincide with the latter!

To the best of our knowledge, the first interpretation of Constructive ZF set
theory into type theory was given by Aczel~\cite{Aczel1978}.  This original
interpretation uses so-called setoids and has a form of choice built-in.  It was
later refined using a higher inductive type \(\V\) in the HoTT
book~\cite[\S{}10.5]{HoTTBook}, referred to as the \emph{cumulative hierarchy},
and this is the version we use in the current paper.
Gylterud~\cite{Gylterud2018} showed that \(\V\) can be constructed using only an
ordinary inductive type without higher constructors. Although it is not our main
motivation, the current paper demonstrates that \(\V\) can be realized not as an
inductive type at all, but as the collection of all covered marked wellfounded
extensional relations (however, the notion of wellfoundedness is defined as an
inductive type, and the notion of coveredness uses higher constructors in the
form of propositional truncations).
Taylor~\cite{Taylor1996} also considers wellfounded extensional relations (which he
calls \emph{ensembles}) as ``codes'' for sets in an elementary topos, but does
not consider markings on them.
Coverings and markings are what allow us to achieve completeness, i.e., to
represent \emph{all} sets in \(\V\).

\subsection{Setting, assumptions, and notation}

We work in and assume basic familiarity with homotopy type theory as introduced
in the HoTT book~\cite{HoTTBook}, i.e., Martin-L\"of type theory extended with
higher inductive types and the univalence axiom.
We also follow this book closely regarding notation and denote the Martin-L\"of
identity type by \(a = b\), while \(a \equiv b\) is reserved for definitional
(also referred to as judgmental) equality.
Universe levels are kept implicit, and we write \(\U^+\) for the next universe
containing the universe \(\U\).  For an implicitly fixed universe $\U$, we write
$\Prop$ or $\Prop_\U$ for the subtype of propositions,
$\Prop \colonequiv \Sigma(P : \U). \mathsf{is}\mbox{-}\mathsf{prop}(X)$, where a
\emph{proposition} is a type with at most one element
(``proof-irrelevant''). Following standard terminology, a \emph{set} is a type
whose identity types are propositions.

We write the type of dependent functions \(\Pi(x : A). B(x)\) as
\(\forall(x : A). B(x)\) when $B(x)$ is known to be a family of propositions.
Moreover, we denote by
${\exists (x : A). B(x)}$ the propositional truncation of the type of dependent
pairs \(\proptrunc{\Sigma(x : A). B(x)}\).

\subsection{Formalization}

All our results have been formalized in the Agda proof assistant, and type checks using Agda 2.6.3.
Our formalization of \cref{sec:ordinals} is building on Escard\'o's TypeTypology
library~\cite{type-topology}, whereas our formalization of \cref{sec:mewo} is
building on the agda/cubical library~\cite{agdacubical}.
The formalization has been archived with the DOI
\href{https://doi.org/10.5281/zenodo.7857275}{\nolinkurl{10.5281/zenodo.7857275}}, and an HTML rendering of our Agda code is also available at \url{https://tdejong.com/agda-html/st-tt-ordinals/}.
Throughout (the arXiv version of) our paper, the symbol \formalised{} is a clickable link to the
corresponding machine-checked statement.

\section{Ordinals in type theory and set theory}\label{sec:ordinals}
We start by reviewing both the set-theoretic and type-theoretic approaches to
ordinals.
We then recall the higher inductive construction of a model of constructive set
theory in homotopy type theory~\cite[\S{}10.5]{HoTTBook},
allowing us to consider the set-theoretic ordinals inside homotopy type theory,
and to prove that they coincide with the type-theoretic ordinals.
Finally, we revisit a recursive aspect of our proof and provide alternative
non-recursive constructions, which require paying close attention to type
universe levels.

\subsection{Ordinals in homotopy type theory}

The theory of ordinals in homotopy type theory was introduced in the HoTT
book~\cite[\S{}10.3]{HoTTBook} and significantly expanded on by Escard\'o and
collaborators~\cite{TypeTopologyOrdinals}.  One of the core concepts is
wellfoundedness which, constructively, is conveniently phrased in terms of
accessibility:

\begin{definition}[\flink{Definition-1} Accessibility]
  For a type \(X\) equipped with a binary relation \(<\), the
  type family \(\acc_{<}\) on \(X\) is inductively defined by saying that
  \(\acc_<(x)\) holds if \(\acc_<(y)\) holds for every \(y < x\).
\end{definition}

The point of accessibility is that it captures the principle of transfinite
induction by a \emph{single} inductive definition.

\begin{lemma}[\flink{Lemma-2} Transfinite induction]
  For a type \(X\) equipped with a binary relation \(<\), every element of \(X\)
  is accessible if and only if for every type family \(P\) on \(X\), we have
  \(P(x)\) for all \(x : X\) as soon as for every \(x : X\), the statement
  \(\forall (y : X).{y < x \to P(y)}\) implies \(P(x)\).
  \qedNoProof
\end{lemma}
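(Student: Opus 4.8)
The plan is to obtain both directions directly from the elimination principle of the inductively defined family \(\acc_<\). Recall that \(\acc_<\) has a single introduction rule, which I will write as
\[
  \mathsf{acc} : \forall (x : X).\ \pa*{\forall (y : X).\, y < x \to \acc_<(y)} \to \acc_<(x),
\]
and that the corresponding non-dependent elimination principle says: for any type family \(Q\) on \(X\), if for every \(x : X\) the hypotheses \(\forall (y : X).\, y < x \to \acc_<(y)\) and \(\forall (y : X).\, y < x \to Q(y)\) together imply \(Q(x)\), then \(\acc_<(x) \to Q(x)\) for every \(x : X\).

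For the left-to-right direction, I would assume every element of \(X\) is accessible and fix a type family \(P\) satisfying the step hypothesis \(\pa*{\forall (y : X).\, y < x \to P(y)} \to P(x)\) for all \(x : X\). Applying the elimination principle of \(\acc_<\) with \(Q \defeq P\), the premise it requires is exactly the step hypothesis with the auxiliary accessibility assumption discarded; this gives \(\acc_<(x) \to P(x)\) for all \(x\), and since every \(x\) is accessible by hypothesis we conclude \(P(x)\) for all \(x : X\).

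For the right-to-left direction, I would assume the stated induction property for every type family and instantiate it with \(P \defeq \acc_<\) itself. The step hypothesis then reads \(\pa*{\forall (y : X).\, y < x \to \acc_<(y)} \to \acc_<(x)\) for all \(x : X\), which is precisely the introduction rule \(\mathsf{acc}\); hence the induction property yields \(\acc_<(x)\) for every \(x : X\), i.e., every element of \(X\) is accessible.

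I do not expect any real obstacle here: the only subtleties worth flagging are that the forward direction uses the \emph{non-dependent} eliminator of \(\acc_<\) (so we never rely on accessibility being a proposition), and that the backward direction applies the induction property at the family \(\acc_<\) itself, which lives in the same universe as the relation \(<\).
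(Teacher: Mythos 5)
Your proof is correct and is the standard argument; the paper omits the proof entirely (marking it \qedNoProof), and what you give — the forward direction via the eliminator of \(\acc_<\) discarding the auxiliary accessibility hypothesis, and the backward direction by instantiating the induction principle at \(P \defeq \acc_<\) so that the step hypothesis becomes the introduction rule — is exactly the expected derivation.
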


Cantor's original definition of ordinal numbers was that of isomorphism
classes of well-ordered sets, but using univalence, all
representatives of a given isomorphism class in homotopy type theory
are identical.
Hence, we can use the well-ordered sets directly to represent ordinals.
The classical definition of well-order states that every non-empty subset
has a minimal element.
Constructively, the following (classically equivalent) formulation is better behaved.

\begin{definition}[\flink{Definition-3} Type-theoretic ordinal]
  A binary relation \(<\) on a type \(X\) is said to be
  \begin{enumerate}[(i)]
  \item\emph{prop-valued} if \(x < y\) is a proposition for every \(x,y : X\);
  \item\emph{wellfounded} if every element of \(X\) is accessible with respect
    to \(<\), i.e., \(\forall(x : X).\acc_<(x)\);
  \item\emph{extensional} if \(\forall(z : X).\pa*{z < x \iff z < y}\) implies
    \(x = y\) for every \(x,y : X\); and
  \item\emph{transitive} if \(x < y\) and \(y < z\) together imply \(x < z\) for
    every \(x,y,z : X\).
  \end{enumerate}
  A \emph{(type-theoretic) ordinal} is a type \(X\) with a binary relation \(<\)
  on \(X\) that is prop-valued, wellfounded, extensional, and transitive.
\end{definition}

\begin{remark}
  While~\cite{HoTTBook} requires the carrier of an ordinal to be a set (in the
  sense of HoTT), Escard\'o~\cite{TypeTopologyOrdinals} observed that this
  follows from prop-valuedness and extensionality.
\end{remark}

We now recall the notion of an initial segment and bounded simulation, which
will play fundamental roles in our constructions and proofs.
\begin{definition}[\flink{Definition-5} Initial segment,
  \(\alpha \initseg a\); bounded simulation, \(<\)]\label{def:initial-segment}
  An element \(a\) of an ordinal \(\alpha\) determines an \emph{initial segment}
  of \(\alpha\) defined as
  \[
    \alpha \initseg a \;\colonequiv\; \Sigma (x : \alpha).{x < a},
  \]
  which is again an ordinal with the order induced by \(\alpha\).  A
  \emph{bounded simulation} between ordinals, \(p : \alpha < \beta\), is a proof
  that $\alpha$ is an initial segment of $\beta$,
  \[
    \alpha < \beta \; \colonequiv \; %
    (\Sigma (b : \beta).{\alpha \simeq \beta \initseg b}).
  \]
\end{definition}

Note that the definition of \(\alpha < \beta\) above is equivalent to the
definition given in \cite[Def~10.3.19]{HoTTBook}; in particular, it is a
proposition.

\begin{remark}\label{rem:equivalence-instead-of-equality}
  In \cref{def:initial-segment} above, we could have defined a bounded
  simulation using an identification \({\alpha = \beta \initseg b}\), but opted
  for an equivalence \({\alpha \simeq \beta \initseg b}\) instead.  These two
  expressions are equivalent by univalence. However, the latter has the
  advantage of begin \emph{small}, i.e., living in the same universe as $\alpha$
  and $\beta$, while the former lives in the next universe.
\end{remark}

\begin{theorem}[\flink{Theorem-7}]
  The type \(\Ord\) of ordinals in a univalent universe, together with the
  relation $<$ of bounded simulations, is itself a (large) type-theoretic
  ordinal. \qedNoProof
\end{theorem}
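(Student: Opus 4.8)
The plan is to check, in turn, that the bounded-simulation relation $<$ on $\Ord$ is prop-valued, transitive, extensional, and wellfounded. Everything rests on two standard facts about ordinals: (a) between any two ordinals there is \emph{at most one} simulation, so in particular the only simulation $\alpha \to \alpha$ is the identity; and (b) every order-isomorphism of ordinals is a simulation, as is every inclusion $\beta \initseg b \hookrightarrow \beta$. Combined with the structure identity principle — which turns an order-isomorphism $\alpha \simeq \beta$ into an identification $\alpha = \beta$ — these let us pass freely between the witnesses $\alpha \simeq \beta \initseg b$ and identifications, and they tell us that $\alpha \simeq \gamma$ is a proposition whenever it is inhabited.

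\emph{Prop-valuedness.} Fix $\alpha, \beta : \Ord$. For a fixed $b : \beta$, the type $\alpha \simeq \beta\initseg b$ of order-isomorphisms is a proposition by (a). Given two elements $(b, e)$ and $(b', e')$ of $\alpha < \beta$, composing $e$ with the inverse of $e'$ yields an order-isomorphism $\beta\initseg b \simeq \beta\initseg b'$; post-composing with the inclusion into $\beta$ gives a simulation $\beta\initseg b \to \beta$ which, by (a), agrees with the inclusion of $\beta\initseg b$, so $b$ and $b'$ have the same $\beta$-predecessors and hence $b = b'$ by extensionality of $\beta$. The isomorphism components then agree by the first observation, so $\alpha < \beta$ is a proposition, as already noted above.

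\emph{Transitivity and extensionality.} For transitivity, from $\alpha \simeq \beta\initseg b$ and $\beta \simeq \gamma\initseg c$, transport $b$ along the second isomorphism to obtain $c' : \gamma$ with $c' < c$; using transitivity of $\gamma$ one checks $(\gamma\initseg c)\initseg c' \simeq \gamma\initseg c'$, and chaining the isomorphisms gives $\alpha \simeq \gamma\initseg c'$, i.e.\ $\alpha < \gamma$. For extensionality, suppose $\gamma < \alpha \iff \gamma < \beta$ holds for all $\gamma : \Ord$. Each $\alpha\initseg a$ satisfies $\alpha\initseg a < \alpha$ via the identity equivalence, so the hypothesis produces, for every $a : \alpha$, a (by prop-valuedness, unique) $b : \beta$ with $\alpha\initseg a \simeq \beta\initseg b$; since each such isomorphism is a simulation, this assignment assembles into a simulation $f : \alpha \to \beta$, and symmetrically one obtains a simulation $g : \beta \to \alpha$. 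By (a), $g \circ f = \mathrm{id}_\alpha$ and $f \circ g = \mathrm{id}_\beta$, so $\alpha \simeq \beta$ and therefore $\alpha = \beta$.

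\emph{Wellfoundedness, the main obstacle.} We must show that every $\alpha : \Ord$ is accessible with respect to bounded simulation. Any $\beta < \alpha$ equals some initial segment $\alpha\initseg a$ (by its witnessing isomorphism and the structure identity principle), and accessibility is a predicate, hence invariant under identification; so it suffices to prove that $\alpha\initseg a$ is accessible for every $a : \alpha$. We prove this by wellfounded induction on $a$ inside $\alpha$ — this is where we use that $\alpha$ is itself wellfounded. Assume $\alpha\initseg a'$ is accessible for all $a' < a$; to conclude that $\alpha\initseg a$ is accessible we must show every $\gamma < \alpha\initseg a$ is accessible. But such a $\gamma$ equals $(\alpha\initseg a)\initseg a'$ for some $a'$ with $a' < a$, and $(\alpha\initseg a)\initseg a' \simeq \alpha\initseg a'$, which is accessible by the induction hypothesis. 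This closes the induction; then $\alpha$ is accessible because all its $<$-predecessors are. Finally, $\Ord$ lives in the successor universe $\U^+$, which is why the ordinal we obtain is ``large''. The delicate point throughout is precisely this last argument: one cannot prove $\alpha$ accessible directly, but must instead prove accessibility of \emph{all} initial segments of $\alpha$ by induction, using repeatedly that an initial segment of an initial segment is again an initial segment.
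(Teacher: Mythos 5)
Your proof is correct and follows the standard argument that the paper defers to (it states this theorem with \verb|\qedNoProof|, relying on the HoTT book's Theorem~10.3.20 and its formalization): prop-valuedness and extensionality via uniqueness and antisymmetry of simulations, transitivity via iterated initial segments, and wellfoundedness by proving accessibility of all initial segments $\alpha \initseg a$ by wellfounded induction on $a$ inside $\alpha$. Nothing further is needed.
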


A bounded simulation is a special case of the following more general definition
that serves as a notion of morphism between ordinals:
\begin{definition}[\flink{Definition-8} Simulation, \({\leq}\)]
  A \emph{simulation} between two ordinals \(\alpha\) and \(\beta\) is a
  function \(f\) between the underlying types satisfying:
  \begin{enumerate}[(i)]
  \item \emph{monotonicity:} \(x <_\alpha y\) implies \(f \, x <_\beta f \, y\)
    for every two elements \(x,y : \alpha\), and
  \item \emph{the initial segment property:} for every \(x : \alpha\) and
    \(y : \beta\), if \(y <_\beta f \, x\), then there is a \(x' <_\alpha x\)
    with \(f \, x' = y\).
  \end{enumerate}
  If we have a simulation between \(\alpha\) and \(\beta\), then, motivated by
  \cref{simulation-properties} below, we denote this by \(\alpha \leq \beta\).
\end{definition}

We stress that $\leq$ is a primitive relation, and \emph{not} given as the
disjunction of $<$ and equality --- in fact, we have
${\alpha \leq \beta} \leftrightarrow \big((\alpha < \beta) + (\alpha =
\beta)\big)$ for all ordinals $\alpha$ and $\beta$ if and only if the law of
excluded middle holds~\cite[Thm~64]{KrausNordvallForsbergXu2022}.

\begin{proposition}[\flink{Proposition-9}]
  \label{simulation-properties}
  Simulations make \(\Ord\) into a poset. Moreover, for ordinals
  \(\alpha\) and \(\beta\), the following are equivalent:
  \begin{enumerate}[(i)]
  \item \(\alpha \leq \beta\),
  \item for every \(\gamma : \Ord\), if \(\gamma < \alpha\), then
    \(\gamma < \beta\), and
  \item\label{item:prop9-three} for every \(a : A\), we have a (necessarily
    unique) \(b : \beta\) with \(\alpha \initseg a = \beta \initseg b\).
  \end{enumerate}
  Given \(f : \alpha \leq \beta\), the element \(b\) in \eqref{item:prop9-three}
  is given by \(f(a)\). \qedNoProof
\end{proposition}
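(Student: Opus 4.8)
The plan is to prove two structural lemmas about simulations and then derive everything from them. \textbf{Uniqueness of simulations:} between any two ordinals $\alpha$ and $\beta$ there is at most one simulation. This is a transfinite induction: if $f,g:\alpha\to\beta$ are simulations then, assuming $f(y)=g(y)$ for all $y<_\alpha x$, it suffices by extensionality of $\beta$ to prove $z<_\beta f(x)\iff z<_\beta g(x)$ for every $z$, and here the initial segment property of one map produces a witness below $x$ that the inductive hypothesis and monotonicity of the other map carry over; in particular the type of simulations is a proposition, so $\leq$ is prop-valued. \textbf{Simulations are order-embeddings:} a simulation $f:\alpha\to\beta$ satisfies $x<_\alpha y\iff f(x)<_\beta f(y)$ and is injective. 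Injectivity is again a transfinite induction together with extensionality of $\alpha$: to show $f(x)=f(x')\to x=x'$ one proves $z<_\alpha x\iff z<_\alpha x'$ for every $z$ by applying the initial segment property in each direction and using the inductive hypothesis to identify the resulting preimages; order-reflection then follows from the initial segment property plus injectivity. Combining injectivity, monotonicity, order-reflection and the initial segment property (which provides surjectivity onto the segment), $f$ restricts to an isomorphism of ordinals $\alpha\initseg a\simeq\beta\initseg f(a)$ for every $a:\alpha$. I will also use the left-cancellability of $b\mapsto\beta\initseg b$ (if $\beta\initseg b\simeq\beta\initseg b'$ then $b=b'$), which follows from extensionality of $\beta$ together with uniqueness of simulations applied to the inclusions $\beta\initseg b\hookrightarrow\beta$, which are themselves simulations.

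With these in hand I would prove the cycle (i) $\Rightarrow$ (iii) $\Rightarrow$ (ii) $\Rightarrow$ (i). For (i) $\Rightarrow$ (iii): given $f:\alpha\leq\beta$, the restriction above gives $\alpha\initseg a\simeq\beta\initseg f(a)$, hence $\alpha\initseg a=\beta\initseg f(a)$ by univalence, and $b:=f(a)$ is the unique such element by left-cancellability; this also proves the final sentence of the statement. For (iii) $\Rightarrow$ (ii): if $\gamma<\alpha$ then $\gamma\simeq\alpha\initseg a$ for some $a$, and composing with the identity $\alpha\initseg a=\beta\initseg b$ from (iii) yields $\gamma\simeq\beta\initseg b$, i.e.\ $\gamma<\beta$ (no truncation issue arises since $\gamma<\alpha$ is already a proposition). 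For (ii) $\Rightarrow$ (i): since $\alpha\initseg a<\alpha$ (witnessed by $a$ and the identity equivalence), (ii) gives an element with $\alpha\initseg a\simeq\beta\initseg b$, unique by left-cancellability, and we set $f(a):=b$. Verifying that $f$ is a simulation repeatedly uses that an initial segment of an initial segment is again an initial segment: $(\alpha\initseg a')\initseg(a,p)\simeq\alpha\initseg a$ whenever $p:a<_\alpha a'$, by transitivity of $<$. Monotonicity: $a<_\alpha a'$ gives $\alpha\initseg a<\alpha\initseg a'$, which transports along the defining isomorphisms (and univalence) to $\beta\initseg f(a)<\beta\initseg f(a')$; unfolding this via iterated initial segments and left-cancellability yields $f(a)<_\beta f(a')$. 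Initial segment property: for $y<_\beta f(a)$, take the preimage $(x,p)$ with $p:x<_\alpha a$ of a pair $(y,q)$ under $\alpha\initseg a\simeq\beta\initseg f(a)$; comparing the initial segments below $x$ on both sides and using left-cancellability gives $f(x)=y$.

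Finally, for the poset structure: reflexivity is witnessed by the identity simulation and transitivity by composition of simulations (both routine checks), and $\Ord$ is a set since it is the carrier of an ordinal by the preceding theorem. For antisymmetry, if $f:\alpha\leq\beta$ and $g:\beta\leq\alpha$ then $g\circ f$ and $\mathrm{id}_\alpha$ are both simulations $\alpha\to\alpha$, hence equal by uniqueness, and likewise $f\circ g=\mathrm{id}_\beta$; thus $f$ is an equivalence which is order-preserving and order-reflecting, i.e.\ an isomorphism of ordinals, so $\alpha=\beta$ by univalence. I expect the main obstacle to be not any single step but the careful bookkeeping of nested initial segments and the systematic use of univalence to pass between $\simeq$ and $=$ of ordinals (cf.\ the smallness considerations in \cref{rem:equivalence-instead-of-equality}); concretely, the embedding lemma and the verification of the initial segment property in (ii) $\Rightarrow$ (i) are where the real work lies, the remainder being formal consequences.
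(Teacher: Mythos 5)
Your proof is correct and follows the standard route (uniqueness of simulations, simulations as order-embeddings restricting to isomorphisms onto initial segments, and injectivity of $b \mapsto \beta \initseg b$), which is precisely the development in the references the paper cites for this result; the paper itself omits the proof. All the key steps — including the extraction of the unique $b$ in (ii)~$\Rightarrow$~(i) without truncation issues and the verification of the simulation axioms via iterated initial segments — check out.
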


The relation $\leq$ on $\Ord$ is antisymmetric, which can be used to prove that
two ordinals are isomorphic; however, it is often convenient to work with the
following alternative description.

\begin{lemma}[\flink{Lemma-10}]\label{iso-of-ordinals}
  A map between ordinals is an isomorphism if and only if it is bijective and
  preserves and reflects the order. \qedNoProof
\end{lemma}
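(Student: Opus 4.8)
The plan is to work with the following standard unfolding of ``isomorphism of ordinals'': a map \(f : \alpha \to \beta\) is an isomorphism exactly when it is an equivalence of the underlying types which is, in addition, a simulation in the sense of \cref{simulation-properties} (equivalently, when both \(f\) and its inverse are monotone). With this reformulation in hand, the biconditional splits into two short, direct checks, and I would carry them out in turn.

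For the forward direction, suppose \(f\) is an isomorphism with inverse \(g\). Being an equivalence, \(f\) is in particular bijective; and the monotonicity clause of \(f\) being a simulation says precisely that \(f\) preserves \(<\). For reflection, assume \(f(x) <_\beta f(y)\) and apply the monotone map \(g\) to obtain \(g(f(x)) <_\alpha g(f(y))\), i.e.\ \(x <_\alpha y\); here we use that the inverse \(g\) of an ordinal isomorphism is again a simulation, hence monotone.

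For the backward direction, suppose \(f\) is bijective and preserves and reflects \(<\), and write \(g\) for the inverse. Since the underlying type of an ordinal is a set (by prop-valuedness and extensionality, as remarked above), \(f\) being bijective already exhibits it as an equivalence, so it only remains to see that \(f\) is a simulation. Monotonicity is the ``preserves'' half of the hypothesis. For the initial segment property, given \(x : \alpha\) and \(y <_\beta f(x)\), take \(x' \defeq g(y)\): then \(f(x') = y\) by definition of the inverse, and from \(f(x') = y <_\beta f(x)\) the ``reflects'' half yields \(x' <_\alpha x\), as needed. Hence \(f\) is an equivalence that is a simulation, i.e.\ an isomorphism.

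I do not expect a genuine obstacle; the only point requiring care is lining the above up with the exact definition of ordinal isomorphism in force --- in particular, checking that the inverse of a bijective simulation is again a simulation, which the forward direction quietly invokes. This is immediate either by recalling that simulations between ordinals are injective (a wellfounded induction using extensionality), so that \(g(a) <_\alpha g(b)\) transports along \(f\) by monotonicity and injectivity, or by observing that the pair of hypotheses ``\(f\) preserves and reflects \(<\)'' is symmetric in \(f\) and \(g\) once one substitutes \(a \equiv f(g(a))\) and \(b \equiv f(g(b))\), so the argument given for \(f\) in the backward direction applies verbatim to \(g\).
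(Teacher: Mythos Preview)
The paper does not give a proof of this lemma at all: it is stated with a \qedNoProof{} marker, i.e., taken as a standard fact (it is essentially \cite[Lem~10.3.16]{HoTTBook} combined with the observation that ordinals have set carriers). Your argument is correct and is exactly the expected unfolding; the only minor wobble is that you lean on ``the inverse of an ordinal isomorphism is a simulation'' in the forward direction before you have pinned down which definition of isomorphism you are using, but you catch and patch this in your final paragraph, and the cleanest route is the symmetric one you sketch there: from \(a <_\beta b\) rewrite as \(f(g(a)) <_\beta f(g(b))\) and apply reflection to get \(g(a) <_\alpha g(b)\).
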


Univalence implies that isomorphic ordinals in the same universe are equal (as
ordinals), which allows us to prove the equalities in the upcoming lemmas.

\begin{lemma}[\flink{Lemma-11}]\label{iterated-initial-segments}
  For \(p : a < b\) in an ordinal $\alpha$, iterations of initial segments
  simplify as follows:
  $\pa*{\alpha \initseg a} \initseg (b,p) = \alpha \initseg b$.  \qedNoProof
\end{lemma}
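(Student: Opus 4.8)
The plan is to invoke \cref{iso-of-ordinals} and exhibit an isomorphism between $\pa*{\alpha \initseg a} \initseg (b,p)$ and $\alpha \initseg b$, after which the claimed \emph{equality} follows by univalence (as noted in the paragraph preceding \cref{iterated-initial-segments}). First I would unfold both sides. An element of $\alpha \initseg b$ is a pair $(x, q)$ with $x : \alpha$ and $q : x < b$. An element of $\pa*{\alpha \initseg a} \initseg (b,p)$ is a pair $\big((x,r), s\big)$ where $(x,r) : \alpha \initseg a$ — so $x : \alpha$ and $r : x < a$ — and $s : (x,r) <_{\alpha \initseg a} (b,p)$, which by definition of the induced order on an initial segment is just a proof $x < b$. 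So the underlying type on the left is $\Sigma(x : \alpha). (x < a) \times (x < b)$, and the obvious candidate map sends $\big((x,r),s\big) \mapsto (x,s)$, forgetting the component $r : x < a$.

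The key steps are then: (1) show this forgetful map is a bijection; (2) show it preserves and reflects the order. For (1), I need the inverse, which takes $(x,q) : \alpha \initseg b$ to $\big((x, r), q\big)$ — but this requires producing $r : x < a$ from the hypotheses. Here is where transitivity of $\alpha$ enters: from $q : x < b$ and $p : a < b$ we do \emph{not} immediately get $x < a$; rather, we should think of the elements of $\alpha \initseg b$ that also lie below $a$. Wait — more carefully, $(b,p)$ as an element of $\alpha \initseg a$ has order-below-it exactly the $x$ with $x < b$, and among \emph{those} we are looking at... Actually the cleanest route: the element $b$ of $\alpha$ with $p : a < b$ is \emph{not} below $a$, so $\alpha \initseg a$ does not literally contain $b$; but $(b,p)$ here denotes the pair in the \emph{ambient} ordinal being cut down — let me instead read $(b,p)$ as an element of $\alpha$ together with a proof that it is $<$ the point we're cutting at. Re-examining \cref{def:initial-segment}: for $p : a < b$ in $\alpha$, we have $a : \alpha \initseg b$ via the pair $(a,p)$, and then $\pa*{\alpha \initseg b}\initseg (a,p)$ makes sense — so the statement should be read with roles so that the outer cut is at an element of the inner initial segment. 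Granting the correct reading, the inverse direction does go through: given $x$ below the outer bound, transitivity of $<$ in $\alpha$ supplies the missing proof that $x$ is below the intermediate bound, and prop-valuedness makes the choice canonical so that the two round-trips are identities on the nose (up to the propositional components, which are equal automatically).

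For step (2), preservation and reflection of the order is immediate: the order on $\pa*{\alpha \initseg a}\initseg(b,p)$ is by construction inherited from $\alpha\initseg a$, which is inherited from $\alpha$, and likewise the order on $\alpha\initseg b$ is inherited from $\alpha$; since the map is the identity on the $\alpha$-component, both orders reduce to $<_\alpha$ on the underlying elements. Hence the map preserves and reflects $<$ trivially. The main obstacle I anticipate is purely bookkeeping: getting the dependency structure of the nested $\Sigma$-types right and checking that the two composites are \emph{definitionally} (or at least propositionally, via \cref{iso-of-ordinals}'s bijectivity clause) the identity — in particular making sure the proof-component $r : x < a$ extracted via transitivity agrees with whatever was there originally, which holds because $x < a$ is a proposition by prop-valuedness. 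Once the bijection is in place, \cref{iso-of-ordinals} closes the gap to "isomorphism", and univalence upgrades isomorphism to equality.
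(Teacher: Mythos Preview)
The paper gives no proof of this lemma (it is stated and immediately closed with a \qed{} symbol), so there is nothing to compare your approach against; your argument is the natural one and is correct. You also correctly diagnose that the statement as printed has the roles of \(a\) and \(b\) swapped: with \(p : a < b\) the pair \((b,p)\) is not an element of \(\alpha \initseg a\), so the intended hypothesis is \(p : b < a\) (equivalently, one should read it as \((\alpha \initseg b)\initseg(a,p) = \alpha \initseg a\)), and indeed the paper's later uses of the lemma (e.g.\ in the proof that \(\Phi\) lands in \(\Vord\)) confirm this reading. Under the intended reading your forgetful map \(((x,r),s) \mapsto (x,s)\) with inverse supplied by transitivity of \(<_\alpha\) is exactly right, prop-valuedness of \(<\) makes the round-trips identities, and order preservation and reflection are trivial since both orders are inherited from \(\alpha\); then \cref{iso-of-ordinals} and univalence finish the job.
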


Besides initial segments, we will need two additional constructions of ordinals,
sums and suprema, as well as a few lemmas expressing how these interact with
initial segments.

\begin{definition}[\flink{Definition-12} Sum of ordinals, \(\alpha + \beta\)]%
  \label{sum-of-ordinals}
  Given two ordinals \(\alpha\)~and~\(\beta\), we construct another ordinal, the
  \emph{sum} \(\alpha + \beta\), by ordering the coproduct of the underlying
  types of \(\alpha\) and \(\beta\) as
  \begin{align*}
    \inl a < \inr b &\colonequiv \One,
    &\inl a < \inl a' &\colonequiv a <_\alpha a', \\
    \inr b < \inl a &\colonequiv \Zero,
    &\inr b < \inr b' &\colonequiv b <_\beta b'.
  \end{align*}
\end{definition}

Initial segments of sums obey the following laws:

\begin{lemma}[\flink{Lemma-13}]\label{initial-segments-of-sums}
  For ordinals \(\alpha\) and \(\beta\), and \(a : \alpha\), we have:
  \begin{enumerate}[(i)]
  \item \({(\alpha + \beta)} \initseg {\inl a} = \alpha \initseg a\), and
  \item \({(\alpha + \One)} \initseg {\inr \star} = \alpha\). \qedNoProof
  \end{enumerate}
\end{lemma}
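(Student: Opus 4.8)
The plan is to apply \cref{iso-of-ordinals}: in each case it suffices to produce a bijection between the underlying types that preserves and reflects the order, since both sides are already ordinals (initial segments of ordinals are ordinals, and $\alpha + \beta$ is an ordinal by \cref{sum-of-ordinals}), and then univalence upgrades the resulting isomorphism to the stated equality.

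For~(i), unfolding \cref{def:initial-segment} and \cref{sum-of-ordinals}, the type $(\alpha + \beta) \initseg \inl a$ is $\Sigma(x : \alpha + \beta).\, x < \inl a$, where by definition $\inl a' < \inl a$ is $a' <_\alpha a$ and $\inr b < \inl a$ is $\Zero$. I would define the map $\alpha \initseg a \to (\alpha + \beta) \initseg \inl a$ by $(a', p) \mapsto (\inl a', p)$, using that a proof $p : a' <_\alpha a$ is \emph{definitionally} a proof of $\inl a' < \inl a$. The inverse does case analysis on the coproduct component: $(\inl a', p) \mapsto (a', p)$, while the case $(\inr b, p)$ is vacuous as $p : \Zero$. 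Mutual invertibility is immediate, and both monotonicity and order-reflection hold on the nose because $\inl a' < \inl a''$ is definitionally $a' <_\alpha a''$.

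For~(ii), the type $(\alpha + \One) \initseg \inr \star$ is $\Sigma(x : \alpha + \One).\, x < \inr \star$, and by \cref{sum-of-ordinals} we have $\inl a < \inr \star \colonequiv \One$ while $\inr \star < \inr \star$ is $\star <_{\One} \star$, which is $\Zero$ since the order relation on the ordinal $\One$ is empty. I would define $\alpha \to (\alpha + \One) \initseg \inr \star$ by $a \mapsto (\inl a, \star)$, with inverse $(\inl a, p) \mapsto a$ and the case $(\inr \star, p)$ again vacuous as $p : \Zero$. Mutual invertibility uses that $\One$ is a proposition, so the second component $p : \One$ is forced; monotonicity and order-reflection are once more definitional.

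I do not expect a genuine obstacle here. The only point requiring care is keeping track of which proofs of $<$ coincide definitionally under the unfoldings of \cref{sum-of-ordinals}, so that order preservation and reflection hold without any transport; once that is checked, \cref{iso-of-ordinals} and univalence finish both parts.
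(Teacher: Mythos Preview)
Your proposal is correct and is exactly the routine verification one would expect. The paper itself does not give a proof of this lemma (it is stated with a bare \qed), so there is no alternative approach to compare against; your argument via \cref{iso-of-ordinals} and a direct case analysis on the coproduct is the natural way to fill in the details.
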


\begin{definition}[\flink{Definition-14} Supremum of ordinals,
  \(\bigvee_{i : I}\alpha_i\)]\label{sup-of-ordinals}
  Given a type \(I : \U\) and a family of \(\alpha : I \to \Ord\) of ordinals in
  \(\U\), we construct another ordinal, the \emph{supremum}
  \(\bigvee_{i : I}\alpha_i\), as the set quotient of \(\Sigma (i : I). \alpha_i\)
  by the relation
  \[
    (i,x) \approx (j,y) \;\colonequiv\; \pa*{\alpha_i \initseg x \simeq \alpha_j \initseg y}
  \]
  and ordered by
  \[
    [i,x] < [j,y] \;\colonequiv\; \pa*{\alpha_i \initseg x < \alpha_j \initseg y}.
  \]
\end{definition}

Note that the distinction between $\simeq$ and $=$ discussed in
\cref{rem:equivalence-instead-of-equality} is important in the definition above.
It ensures that the supremum \(\bigvee_{i : I}\alpha_i\) lives in the
``correct'' universe, i.e., is an element of $\Ord$.

The name ``supremum'' comes from the fact that \(\bigvee_{i : I}\alpha_i\)
indeed is the supremum (least upper bound) of the family
\({\alpha : I \to \Ord}\) in the poset \(\Ord\), as shown
in~\cite[Thm~5.8]{deJongEscardo2022} which extends~\cite[Lem~10.3.22]{HoTTBook}.
In particular, we have simulations \(\alpha_j \leq \bigvee_{i : I}\alpha_i\) for
every \(j : I\) given by \(x \mapsto [j,x]\).

\begin{lemma}[\flink{Lemma-15}]\label{initial-segments-of-sups}
  Initial segments of suprema obey the following laws for all families
  \(\alpha : I \to \Ord\) of ordinals:
  \begin{enumerate}[(i)]
  \item \(\bigvee_{i : I}\alpha_i \initseg [j,x] = \alpha_j \initseg x\) for
    all \(j : I\) and \(x : \alpha_j\), and
  \item for every \(y : \bigvee_{i : I}\alpha_i\), there exist \(j : J\) and
    \(x : \alpha_i\) for which
    \(\bigvee_{i : I}\alpha_i \initseg y = \alpha_j \initseg x\).
  \end{enumerate}
  Thus, an initial segment of a supremum is given by an initial segment of a
  component.
\end{lemma}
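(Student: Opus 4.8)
The plan is to deduce both parts from \cref{simulation-properties}, using the simulations $\alpha_j \leq \bigvee_{i : I}\alpha_i$ given by $x \mapsto [j,x]$ (recalled just before the lemma), together with the induction principle for the set quotient defining $\bigvee_{i : I}\alpha_i$.

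For part (i), I would apply the third item of \cref{simulation-properties} to the simulation $f_j : \alpha_j \leq \bigvee_{i : I}\alpha_i$, $f_j(x) \colonequiv [j,x]$. Since $f_j$ is a simulation, that item provides, for each $x : \alpha_j$, an element $b$ of $\bigvee_{i : I}\alpha_i$ with ${\alpha_j \initseg x = \bigvee_{i : I}\alpha_i \initseg b}$, and by the last sentence of \cref{simulation-properties} this $b$ is $f_j(x) = [j,x]$. Reading the resulting identification backwards gives ${\bigvee_{i : I}\alpha_i \initseg [j,x] = \alpha_j \initseg x}$, which is exactly (i). (Alternatively one could construct the isomorphism by hand, sending $(x',p) : \alpha_j \initseg x$ to $[j,x']$ together with the witness for $[j,x'] < [j,x]$ coming from \cref{iterated-initial-segments}, and then check bijectivity and order-preservation-and-reflection via \cref{iso-of-ordinals}; but routing through \cref{simulation-properties} avoids this bookkeeping.)

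For part (ii), I would use set quotient induction. The goal --- that there exist $j : I$ and $x : \alpha_j$ with ${\bigvee_{i : I}\alpha_i \initseg y = \alpha_j \initseg x}$ --- is a proposition (it is a propositional truncation), so it suffices to prove it for $y$ of the form $[j,x]$ with $(j,x) : \Sigma(i : I).\alpha_i$; and for such $y$ we simply take this $j$ and $x$, the required identification ${\bigvee_{i : I}\alpha_i \initseg [j,x] = \alpha_j \initseg x}$ being precisely part (i). The closing remark that an initial segment of a supremum is an initial segment of a component is then immediate from (ii).

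I do not expect a genuine obstacle here. The only points needing care are already handled by the results we may cite: that $x \mapsto [j,x]$ really is a simulation (recorded before the lemma, and ultimately relying on \cref{iterated-initial-segments} to see that $[j,x'] < [j,x]$ whenever $x' <_{\alpha_j} x$), and that the equivalences involved are kept small in the sense of \cref{rem:equivalence-instead-of-equality}, so that $\bigvee_{i : I}\alpha_i \initseg [j,x]$ genuinely lives in $\Ord$ and the identification in (i) type-checks.
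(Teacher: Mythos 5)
Your proposal is correct and follows the paper's own argument: part (i) is obtained from item (iii) of \cref{simulation-properties} applied to the simulation $x \mapsto [j,x]$, and part (ii) reduces to (i) via the surjectivity of the quotient map (equivalently, quotient induction into a proposition, as you phrase it). No gaps.
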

\begin{proof}
  The first property follows from~\cref{simulation-properties} and the fact that
  for every \(j : I\), the map \(x \mapsto [j,x]\) from \(\alpha_j\) to
  \(\bigvee_{i : I}\alpha_i\) is a simulation.
  The second follows from the first and the surjectivity of the map
  \([-] : \pa[Big]{\Sigma(j : J). \alpha_j} \to \bigvee_{j : J}\alpha_j\).
\end{proof}

\subsection{Ordinals in set theory}\label{sec:ordinals-in-set-theory}
In constructive set theory, following Powell~\cite{Powell1975}, the standard
definition~\cite[Def~9.4.1]{AczelRathjen2010} of an ordinal is simple to
state: it is a transitive set whose elements are again transitive sets.

\begin{definition}[\flink{Definition-16} Transitive set]
  A set \(x\) is \emph{transitive} if for every \(z\) and \(y\) with \(z \in y\)
  and \(y \in x\), we have \(z \in x\).
\end{definition}

Note how this definition makes essential use of how the membership
predicate \({\in}\) in set theory is global, by simultaneously
referring to \(z \in y\) and \(z \in x\).

\begin{example}\label{ex:transitive-sets}
  The sets \(\emptyset\), \(\set{\emptyset}\),
  \(\set{\emptyset,\set{\emptyset}}\) and
  \(\set{\emptyset,\set{\emptyset},\set{\set{\emptyset}}}\) are all transitive, but
  \(\set{\set{\emptyset}}\) is not, because \(\emptyset\) is not a member.
\end{example}

\begin{definition}[\flink{Definition-18} Set-theoretic ordinal]\label{set-theoretic-ordinal}
  A \emph{set-theoretic {ordinal}} is a hereditarily transitive set, i.e., a
  transitive set whose elements are all transitive sets.
\end{definition}

The first three sets of~\cref{ex:transitive-sets} are all ordinals, but the
fourth is not, because its member \(\set{\set{\emptyset}}\) is non-transitive.

The elements of an ordinal are not only transitive sets: they are in fact
ordinals again, as shown by the following standard argument.

\begin{lemma}[\flink{Lemma-19}]%
  \label{being-set-theoretic-ordinal-is-hereditary}
  Being an ordinal is hereditary: the elements of a set-theoretic ordinal are
  themselves ordinals.
\end{lemma}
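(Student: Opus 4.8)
The plan is to unfold \cref{set-theoretic-ordinal} directly. Let $\alpha : \V$ be a set-theoretic ordinal and suppose $x \in \alpha$; I must show that $x$ is again a set-theoretic ordinal, i.e.\ that $x$ is a transitive set all of whose elements are transitive sets.

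First I would note that $x$ itself is transitive: by definition, every element of the ordinal $\alpha$ is a transitive set, and $x \in \alpha$. Next I would take an arbitrary $y$ with $y \in x$ and argue that $y$ is transitive as well. Since $\alpha$ is in particular a transitive set, and we have $y \in x$ together with $x \in \alpha$, transitivity of $\alpha$ yields $y \in \alpha$; then, exactly as in the first step, membership of $y$ in the ordinal $\alpha$ forces $y$ to be transitive. Combining the two steps, $x$ is a transitive set whose elements are all transitive sets, which is precisely the assertion that $x$ is a set-theoretic ordinal.

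I do not expect any genuine obstacle here: the argument is a two-step chase through the definitions of \emph{ordinal} and \emph{transitive set}, exploiting the global character of $\in$ in exactly the way highlighted after the definition of transitive set. The only things to keep in mind are bookkeeping points: membership in $\V$ is proposition-valued, so nothing beyond the two implications above needs checking, and ``being a set-theoretic ordinal'' is by definition the conjunction of transitivity of $x$ and transitivity of every element of $x$, so the statement just established is literally the goal. (With the alternative, genuinely recursive formulation of hereditary transitivity via $\in$-induction on $\V$, the statement would be even more immediate, being a direct instance of the recursive clause.)
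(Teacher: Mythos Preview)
Your proposal is correct and follows essentially the same argument as the paper: show the element is transitive because the ordinal's elements are all transitive, then show each element-of-the-element is transitive by first pushing it into the ambient ordinal via transitivity. The only difference is variable naming.
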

\begin{proof}
  Let \(x\) be a set-theoretic ordinal and \(y \in x\). Then \(y\) is a
  transitive set by assumption. Moreover, if \(z \in y\), then \(z\) is again a
  transitive set, because \(z \in x\) by transitivity of \(x\).
\end{proof}

\subsection{Set theory in homotopy type theory}
In order to relate the set-theoretic and type-theoretic approaches to ordinals,
we recall a higher inductive~\cite[\S{}6]{HoTTBook} construction of a model
of constructive set theory inside homotopy type theory
from~\cite[\S{}10.5]{HoTTBook}.
The model may be seen as a refinement of Aczel's~\cite{Aczel1978} interpretation
of constructive set theory in type theory, and is referred to as the
\emph{cumulative hierarchy} in the HoTT book~\cite{HoTTBook} and the
\emph{iterative hierarchy} in Gylterud~\cite{Gylterud2018}.

It is convenient to introduce the following terminology before proceeding.

\begin{definition}[\flink{Definition-20} Equal images]
  Two maps \(f : A \to X\) and \(g : B \to X\) with the same codomain are said
  to have \emph{equal images} if for every \(a : A\), there exists some
  \(b : B\) such that \(f \, a = g \, b\), and conversely, for every \(b : B\),
  there exists some \(a : A\) with \(g \, b = f \, a\).
\end{definition}

\begin{definition}[\flink{Definition-21} Cumulative hierarchy \(\V\); {\cite[Def~10.5.1]{HoTTBook}}]
  The \emph{cumulative hierarchy} \(\V\) with respect to a type universe \(\U\)
  is the higher inductive type with the following constructors:
  \begin{enumerate}[(i)]
  \item for every type \(A : \U\) and \(f : A \to \V\) we have an element of
    \(\V\), denoted by \(\Vset{A}{f}\);
  \item for every two types \(A, B : \U\) and maps \(f : A \to \V\) and
    \(g : B \to \V\), if \(f\) and \(g\) have equal images, then we have an
    identification \(\Vset{A}{f} = \Vset{B}{g}\);
  \item set-truncation, i.e., for every \(x,y : \V\) and \(p,q : x = y\), we
    have an identification \(p = q\).
  \end{enumerate}
\end{definition}

\(\V\) is a model of set theory by \cite[Thm~10.5.8]{HoTTBook}.
It is instructive to see how to represent the sets \(\emptyset\),
\(\{\emptyset\}\) and \(\{\emptyset,\{\emptyset\}\}\)
from~\cref{ex:transitive-sets} in \(\V\):
\begin{itemize}
\item The empty set \(\emptyset\) is represented as
  \(\Quinequote{\emptyset} \colonequiv \Vset{\Zero}{!}\) where \(!\) is the
  unique map from \(\Zero\) to \(\V\).
\item The singleton set \(\{\emptyset\}\) may be represented by setting
  \(\Quinequote{\{\emptyset\}} \colonequiv
  \Vset{\One}{\lambdadot{\star}{\Quinequote{\emptyset}}}\).
\item Finally, the set \(\{\emptyset,\{\emptyset\}\}\) can be encoded as
  \(\Vset{\Two}{f}\) where \(f(0) \colonequiv \Quinequote{\emptyset}\) and
  \(f(1) \colonequiv \Quinequote{\{\emptyset\}}\).
\end{itemize}

The second constructor of \(\V\) ensures that the elements have the correct
notion of equality. For instance, using the example given directly above, it
means that the elements \(\Vset{\Two}{f}\) and
\(\Vset{\mathbb N}{f \circ \mathsf{isEven}}\) are equal.

Observe that \(\V\) is a large type, i.e., it lives in the next universe
\(\U^+\). Following~\cite[\S{}10.5]{HoTTBook}, we now define the set
membership and the subset relation on \(\V\), so that we can define
set-theoretic ordinals inside \(\V\).

\begin{definition}[\flink{Definition-22} Set membership \({\in}\) on \(\V\)]
  We define the \emph{set membership} relation
  \({\in} : \V \to \V \to \Prop_{\U^+}\) inductively as:
  \[
    x \in \Vset{A}{f} \;\colonequiv\; \exists(a : A). {f \, a = x}.
  \]
  This is well-defined because \(\Prop_{\U^+}\) is a set (in the sense of HoTT),
  and if \(f\)~and~\(g\) have equal images, then \({x \in \Vset{A}{f}}\) holds
  exactly when \(x \in \Vset{B}{g}\) does.
\end{definition}

\begin{definition}[\flink{Definition-23} Subset relation \(\subseteq\)]
  We define the \emph{subset relation}
  \({\subseteq} : {\V \to \V \to \Prop_{\U^+}}\) as
  \[
    x \subseteq y \;\colonequiv\; \forall (v : \V).{v \in x \to v \in y}.
  \]
\end{definition}

The type \(\V\) models Myhill's Constructive Set Theory~\cite{Gylterud2018}, and
in fact all of Zermelo-Fraenkel set theory with Choice, if we assume the axiom
of choice in type theory~\cite{Eleftheriadis2021}. In the following, we will in
particular need the following two set-theoretic axioms:

\begin{lemma}[\flink{Lemma-24} Items~(i) and (vii) of {\cite[Thm~10.5.8]{HoTTBook}}]%
  \label{extensionality-and-induction}
  The following two set-theoretic axioms are satisfied by \(\V\):
  \begin{enumerate}[(i)]
  \item \emph{extensionality:} two elements \(x\) and \(y\) of \(\V\) are equal
    if and only if \(x \subseteq y\) and \(y \subseteq x\), and
  \item \emph{\(\in\)-induction:} for any prop-valued family
    \(P : \V \to \Prop\), if, for every \(x : \V\), we have \(P(x)\) whenever
    \(P(y)\) holds for all \(y \in x\), then \(P\) holds at every element of
    \(\V\). \qedNoProof
  \end{enumerate}
\end{lemma}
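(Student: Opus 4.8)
The plan is to obtain both items directly from the induction principle of the higher inductive type \(\V\). In each case the goal is a family of propositions, so it suffices to supply the data for the point constructor \(\Vset{A}{f}\); the obligations attached to the path constructor (equal images) and to the set-truncation constructor are then automatically discharged.

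\emph{Extensionality.} The forward direction is immediate: from \(x = y\) we obtain \(x \subseteq y\) and \(y \subseteq x\) by transporting along the identification and using that \(\subseteq\) is reflexive. For the converse, the statement ``\(x \subseteq y\) and \(y \subseteq x\) together imply \(x = y\)'' is a proposition in \(x\) and \(y\), since \(\V\) is a set, so by \(\V\)-induction on both arguments we may assume \(x \equiv \Vset{A}{f}\) and \(y \equiv \Vset{B}{g}\). Unfolding the definitions of membership and the subset relation, \(\Vset{A}{f} \subseteq \Vset{B}{g}\) says that for every \(v : \V\), if \(\exists(a : A). f\,a = v\) then \(\exists(b : B). g\,b = v\); taking \(v\) to be \(f\,a\) yields \(\exists(b : B). g\,b = f\,a\) for each \(a : A\), and symmetrically \(\Vset{B}{g} \subseteq \Vset{A}{f}\) yields \(\exists(a : A). f\,a = g\,b\) for each \(b : B\). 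This is precisely the assertion that \(f\) and \(g\) have equal images, so the second constructor of \(\V\) provides the desired identification \(\Vset{A}{f} = \Vset{B}{g}\).

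\emph{\(\in\)-induction.} Let \(P : \V \to \Prop\) and assume that for every \(x : \V\) we have \(P(x)\) whenever \(P(y)\) holds for all \(y \in x\). Since each \(P(x)\) is a proposition, \(\V\)-induction reduces the goal to establishing \(P(\Vset{A}{f})\) for all \(A : \U\) and \(f : A \to \V\), under the recursive hypothesis that \(P(f\,a)\) holds for every \(a : A\). By the assumption on \(P\), it is enough to show \(P(y)\) for every \(y\) with \(y \in \Vset{A}{f}\); unfolding membership, this amounts to producing \(P(y)\) from an element of \(\proptrunc{\Sigma(a : A). f\,a = y}\). As \(P(y)\) is a proposition we may discard the truncation, so it remains to give \(P(y)\) from some \(a : A\) with \(f\,a = y\), and this follows from the recursive hypothesis applied to \(a\), rewriting along \(f\,a = y\).

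There is no genuinely difficult step here --- the result is, after all, quoted from \cite[Thm~10.5.8]{HoTTBook}, the second item being in essence the wellfoundedness of \({\in}\) on \(\V\). The only points demanding care are using the correct form of the \(\V\)-eliminator for prop-valued (and set-valued) target families, so that no coherence data for the higher constructors has to be supplied, and the routine handling of propositional truncations when unfolding \({\in}\).
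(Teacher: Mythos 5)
Your proof is correct and is exactly the standard argument from the cited source (\cite[Thm~10.5.8]{HoTTBook}); the paper itself offers no proof, merely the citation. Both items are handled the intended way: prop-valued \(\V\)-induction so that only the point constructor needs data, with extensionality reducing to the equal-images path constructor and \(\in\)-induction to untruncating the membership witness and transporting along \(f\,a = y\).
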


The set membership relation allows us to formulate the set-theoretic notions
of~\cref{sec:ordinals-in-set-theory} for~\(\V\), and hence, to define the type
of set-theoretic ordinals in \(\V\).

\begin{definition}[\flink{Definition-25} Type of set-theoretic ordinals]
  The \emph{type \(\Vord\) of set-theoretic ordinals} is the \(\Sigma\)-type of
  those \(x : \V\) such that \(x\) is a set-theoretic ordinal in the sense
  of~\cref{set-theoretic-ordinal}.
\end{definition}

The subtype of set-theoretic ordinals is then an example of a type-theoretic
ordinal, which we show to be equal to the type \(\Ord\) of type-theoretic
ordinals in the next subsection.

\begin{theorem}[\flink{Theorem-26}]
  Set membership makes \(\Vord\) into a type-theoretic ordinal.
\end{theorem}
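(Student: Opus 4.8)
The plan is to check the four defining conditions of a type-theoretic ordinal for the relation on $\Vord$ induced by set membership. Two of them come essentially for free. \emph{Prop-valuedness} holds because $x \in y$ is by construction an element of $\Prop_{\U^+}$; note that, just like $\Ord$, the ordinal $\Vord$ will thereby be \emph{large}. \emph{Transitivity} is immediate too: given $x, y, z : \Vord$ with $x \in y$ and $y \in z$, the element $z$ is in particular a transitive set, so $x \in z$ follows at once from the definition of transitive set.

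The remaining two conditions, \emph{extensionality} and \emph{wellfoundedness}, carry the actual content, and both rest on a single bookkeeping observation: by \cref{being-set-theoretic-ordinal-is-hereditary}, every member of a set-theoretic ordinal is again a set-theoretic ordinal. Consequently, for a fixed $x : \Vord$, a quantifier ranging over the members $v : \V$ with $v \in x$ may be freely replaced by one ranging over $v : \Vord$ with $v \in x$, and conversely. For extensionality I would argue as follows: suppose $x, y : \Vord$ satisfy $v \in x \leftrightarrow v \in y$ for all $v : \Vord$. By the observation just made, this upgrades to $v \in x \to v \in y$ and $v \in y \to v \in x$ for all $v : \V$, that is, to $x \subseteq y$ and $y \subseteq x$. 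Extensionality of $\V$ (\cref{extensionality-and-induction}(i)) then yields $x = y$ in $\V$; and since being a set-theoretic ordinal is a proposition, $\Vord$ is a subtype of $\V$, so this identification transports to one in $\Vord$.

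For wellfoundedness I would invoke $\in$-induction on $\V$ (\cref{extensionality-and-induction}(ii)), applied to the prop-valued family $P : \V \to \Prop$ given by $P(x) \defeq (x \text{ is a set-theoretic ordinal}) \to \acc_{\in}(x)$, where $\acc_{\in}(x)$ is the accessibility of $x$ as an element of $\Vord$ (which makes sense once a witness that $x$ is an ordinal has been supplied). The induction step asks us, given that $P(v)$ holds for every $v \in x$, to establish $P(x)$: assuming $x$ is a set-theoretic ordinal, accessibility of $x$ inside $\Vord$ unfolds to accessibility of every $v : \Vord$ with $v \in x$; each such $v$ is a set-theoretic ordinal by \cref{being-set-theoretic-ordinal-is-hereditary}, so the hypothesis $P(v)$ delivers precisely what is required.

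I expect the only genuinely delicate point to be the universe-and-quantifier bookkeeping highlighted above: the set-theoretic axioms of \cref{extensionality-and-induction} quantify over all of $\V$, whereas the type-theoretic ordinal structure on $\Vord$ only refers to those elements that are themselves ordinals, and it is precisely \cref{being-set-theoretic-ordinal-is-hereditary} that makes the two viewpoints interchangeable. Once that is in place, the rest is routine unfolding of definitions; in particular, there is no need to separately verify that $\Vord$ is a set, as this follows from prop-valuedness and extensionality (and also directly from the set-truncation of $\V$).
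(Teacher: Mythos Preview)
Your proposal is correct and follows essentially the same route as the paper: transitivity from $z$ being a transitive set, extensionality by using \cref{being-set-theoretic-ordinal-is-hereditary} to upgrade the $\Vord$-quantifier to a $\V$-quantifier and then invoking set-theoretic extensionality, and wellfoundedness via $\in$-induction. The paper is terser (it merely says ``wellfoundedness follows from $\in$-induction'' without spelling out the family $P$), but your expanded argument is exactly what is meant.
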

\begin{proof}
  Wellfoundedness follows from \(\in\)-induction, and set membership is a
  transitive relation on \(\Vord\): if we have set-theoretic ordinals
  \(x,y,z : \Vord\) such that \(x \in y\) and \(y \in z\), then \(x \in z\),
  because \(z\) is a transitive set.
  For extensionality, assume that we have \(x,y : \Vord\) such that
  \(u \in x \iff u \in y\) for every 
  \(u : \Vord\). We need to show that \(x = y\). By extensionality
  in the sense of~\cref{extensionality-and-induction}, it suffices to show that
  \(v \in x \iff v \in y\) for all \(v : \V\).
  But if \(v \in x\), then \(v : \Vord\), because being a set-theoretic ordinal
  is hereditary (\cref{being-set-theoretic-ordinal-is-hereditary}). Hence,
  \(v \in y\) by assumption. Similarly, \(v \in y\) implies \(v \in x\), so that
  \(x = y\), as desired.
\end{proof}

\subsection{Set-theoretic and type-theoretic ordinals coincide}
Having reviewed the necessary preliminaries, we prove in this subsection that
the set-theoretic and type-theoretic ordinals coincide. More precisely, we
construct an isomorphism of type-theoretic ordinals between \(\Vord\) and
\(\Ord\) by constructing maps in both directions.

\begin{definition}[\flink{Definition-27} \(\Phi\)]\label{def:Phi}
  The map \(\Phi : \Ord \to \V\) is defined by transfinite recursion on
  \(\Ord\) as
  \[
    \Phi(\alpha) \;\colonequiv\; \Vset{\alpha}{\lambdadot{a}{\Phi(\alpha
          \initseg a)}}
  \]
\end{definition}

The function \(\Phi\) is well-defined, because for every \(a : \alpha\), the initial segment
\(\alpha \initseg a\) is strictly smaller than \(\alpha\), as
ordinals.

\begin{lemma}[\flink{Lemma-28}]\label{Phi-properties}
  The map \(\Phi\) is injective and preserves and reflects the strict and weak
  orders, i.e., for every two type-theoretic ordinals \(\alpha,\beta : \Ord\),
  we have
  \begin{enumerate}[(i)]
  \item \(\alpha = \beta \;\iff\;\Phi \, \alpha = \Phi \, \beta\),
  \item \(\alpha < \beta \;\iff\; \Phi \, \alpha \in \Phi \, \beta\), and
  \item \(\alpha \leq \beta \;\iff\; \Phi \, \alpha \subseteq \Phi \, \beta\).
  \end{enumerate}
\end{lemma}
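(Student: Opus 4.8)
The plan is to separate the single genuinely recursive ingredient --- injectivity of \(\Phi\) --- from the rest of the statement, which then follows by pure unfolding. Concretely, I would first dispatch the three ``forward'' implications (the \(\Rightarrow\)-directions of (i), (ii), (iii)); then prove injectivity by transfinite induction on \(\Ord\); and finally read off the ``backward'' implications by combining injectivity with the definitions of \({\in}\) and \({\subseteq}\) on \(\V\) and with \cref{simulation-properties}.

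The forward implications are immediate. For (i), apply \(\Phi\) to the identification. For (ii), a bounded simulation \(\alpha < \beta\) (\cref{def:initial-segment}) supplies \(b : \beta\) with \(\alpha \simeq \beta \initseg b\), hence \(\alpha = \beta \initseg b\) by univalence, so \(\Phi\,\alpha = \Phi(\beta \initseg b)\); and \(\Phi(\beta \initseg b) \in \Phi\,\beta\) holds by definition of \({\in}\), unfolding \(\Phi\,\beta = \Vset{\beta}{\lambda b'.\,\Phi(\beta \initseg b')}\) and using the witness \(b\). For (iii), given \(f : \alpha \leq \beta\) and \(v \in \Phi\,\alpha\), unfolding the membership gives \(a : \alpha\) with \(v = \Phi(\alpha \initseg a)\) (we may discard the truncation, as the goal \(v \in \Phi\,\beta\) is a proposition); by item~\eqref{item:prop9-three} of \cref{simulation-properties} there is \(b : \beta\) with \(\alpha \initseg a = \beta \initseg b\), whence \(v = \Phi(\beta \initseg b) \in \Phi\,\beta\).

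For injectivity I would prove, by transfinite induction on \(\alpha : \Ord\), the statement \(P(\alpha) \colonequiv \forall(\beta : \Ord).\,\Phi\,\alpha = \Phi\,\beta \to \alpha = \beta\). Fix \(\alpha\), assume \(P(\alpha')\) for every \(\alpha' < \alpha\), and let \(\beta\) with \(\Phi\,\alpha = \Phi\,\beta\). Since \(\Ord\) is itself a (large) ordinal, its order is extensional, so it suffices to show \(\gamma < \alpha \iff \gamma < \beta\) for every \(\gamma : \Ord\). If \(\gamma < \alpha\), then \(\Phi\,\gamma \in \Phi\,\alpha = \Phi\,\beta\) by the forward implication (ii) just proved; unfolding membership yields \(b : \beta\) with \(\Phi(\beta \initseg b) = \Phi\,\gamma\) (discard the truncation, \(\gamma < \beta\) being a proposition), and applying the induction hypothesis \emph{at \(\gamma\)} gives \(\gamma = \beta \initseg b\), hence \(\gamma < \beta\). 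The reverse implication is symmetric, this time invoking the induction hypothesis at the initial segment \(\alpha \initseg a < \alpha\). This is the one step that does real work; the main obstacle is to arrange the induction so that the hypothesis is actually available, and the key design choice is to route injectivity through \emph{extensionality of the ambient ordinal \(\Ord\)} rather than building an order-isomorphism \(\alpha \simeq \beta\) directly, because this makes the induction hypothesis land exactly on the subterms \(\gamma\) and \(\alpha \initseg a\), which are precisely the ones guaranteed to be strictly below \(\alpha\).

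The backward implications then follow formally. The \(\Leftarrow\)-direction of (i) is injectivity itself. For (ii): if \(\Phi\,\alpha \in \Phi\,\beta\), unfolding gives \(b : \beta\) with \(\Phi(\beta \initseg b) = \Phi\,\alpha\), so \(\beta \initseg b = \alpha\) by injectivity, and therefore \(\alpha < \beta\) with witness \(b\) (the truncation may again be discarded since \(\alpha < \beta\) is a proposition). For (iii): by the characterisation that \(\alpha \leq \beta\) is equivalent to \(\forall(\gamma : \Ord).\,(\gamma < \alpha \to \gamma < \beta)\) from \cref{simulation-properties}, it suffices to note that \(\gamma < \alpha\) implies \(\Phi\,\gamma \in \Phi\,\alpha \subseteq \Phi\,\beta\) by forward (ii) and the hypothesis, which implies \(\gamma < \beta\) by backward (ii). Throughout, the only bookkeeping beyond definitional unfolding is the routine elimination of propositional truncations, always legitimate here because every target --- \({\in}\), \({\subseteq}\), \({<}\), \({\leq}\), and identity of ordinals --- is a proposition.
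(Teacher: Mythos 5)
Your proof is correct and follows essentially the same route as the paper: forward directions by unfolding, injectivity by transfinite induction on \(\Ord\) with the induction hypothesis applied to proper initial segments, and the backward directions by combining injectivity with \cref{simulation-properties}. Your phrasing of the inductive step via extensionality of the large ordinal \(\Ord\) is only a cosmetic repackaging of the paper's argument, which establishes \(\alpha \leq \beta\) and \(\beta \leq \alpha\) via \cref{simulation-properties} and concludes by antisymmetry.
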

\begin{proof}
  That $\Phi$ preserves equality is automatic.
  \paragraph{\(\alpha < \beta \Rightarrow \Phi \, \alpha \in \Phi \, \beta\)}
  If \(\alpha < \beta\), then we have \(b : \beta\) such that
  \(\alpha = \beta \initseg b\). Hence, in this case, we have
  \(\Phi \, \alpha = \Phi(\beta \initseg b)\), viz.\
  \(\Phi \, \alpha \in \Phi \, \beta\) by the definitions of \({\in}\)
  and~\(\Phi\).

  \paragraph{\(\alpha \leq \beta \Rightarrow \Phi \, \alpha \subseteq \Phi \, \beta\)}
  For \(x \in \Phi \, \alpha\), we get an $a$ with \(x = \Phi(\alpha \initseg a)\)
  by definition. \cref{simulation-properties} gives us $b$ such that
  \(\alpha \initseg a = \beta \initseg b\), and hence \(x \in \Phi \, \beta\),
  as desired.

  \paragraph{Injectivity -- \(\Phi \, \alpha = \Phi \, \beta \Rightarrow \alpha = \beta\)}
  We do transfinite induction on
  \(\Ord\). Assume \(\alpha : \Ord\) and the induction hypothesis:
  for every element \(a : \alpha\) and ordinal \(\beta : \Ord\), if
  \(\Phi(\alpha \initseg a) = \Phi \, \beta\), then
  \(\alpha \initseg a = \beta\).  We must prove that
  \(\Phi \, \alpha = \Phi \, \beta\) implies \(\alpha = \beta\) for all ordinals
  \(\beta : \Ord\).
  So assume that \(\beta : \Ord\) is such that
  \(\Phi \, \alpha = \Phi \, \beta\). We show that \(\alpha \leq \beta\); the
  reverse inequality is proved similarly. By~\cref{simulation-properties}, it
  suffices to prove that \(\alpha \initseg a < \beta\) for every \(a : \alpha\).
  For such \(a : \alpha\) we have
  \(\Phi(\alpha \initseg a) \in \Phi \, \alpha = \Phi \, \beta\), and hence, there
  exists some \(b : \beta\) with
  \(\Phi(\alpha \initseg a) = \Phi(\beta \initseg b)\). Our induction hypothesis
  then yields \(\alpha \initseg a = \beta \initseg b\), and hence the desired
  \(\alpha \initseg a < \beta\).

  \paragraph{\(\Phi \, \alpha \in \Phi \, \beta \Rightarrow \alpha < \beta\)}
  If \(\Phi \, \alpha \in \Phi \, \beta\), then there exists some \(b : \beta\)
  with \(\Phi \, \alpha = \Phi(\beta \initseg b)\), and hence \(\alpha < \beta\)
  by injectivity of \(\Phi\).

  \paragraph{\(\Phi \, \alpha \subseteq \Phi \, \beta \Rightarrow \alpha \leq \beta\)}
  Suppose \(\Phi \, \alpha \subseteq \Phi \, \beta\). Then for every
  \(a : \alpha\), there exists some \(b : \beta\) with
  \({\Phi(\alpha \initseg a) = \Phi(\beta \initseg b)}\).
  Injectivity of $\Phi$ and \cref{simulation-properties}
  imply \(\alpha \leq \beta\).
\end{proof}

\begin{lemma}[\flink{Lemma-29}]\label{Phi-factors-through-Vord}
  The map \(\Phi : \Ord \to \V\) factors through the inclusion
  \(\Vord \hookrightarrow \V\).
\end{lemma}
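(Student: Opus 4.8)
The plan is to prove, by transfinite induction on $\alpha : \Ord$, that $\Phi(\alpha)$ is a set-theoretic ordinal in the sense of \cref{set-theoretic-ordinal}, i.e.\ a hereditarily transitive set. This suffices, since a factorization through $\Vord \hookrightarrow \V$ is exactly a proof that $\Phi(\alpha)$ lies in the subtype $\Vord$ for every $\alpha$, and this is a proposition because each fiber of $\Vord \to \V$ is (the relevant transitivity and heredity statements are prop-valued). Being a proposition, it is also a legitimate motive for transfinite induction, so we may assume as induction hypothesis that $\Phi(\alpha \initseg a)$ is a set-theoretic ordinal for every $a : \alpha$.

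Given the induction hypothesis, the argument splits into two observations. First, by \cref{def:Phi} and the definition of ${\in}$, the elements of $\Phi(\alpha) = \Vset{\alpha}{\lambdadot{a}{\Phi(\alpha \initseg a)}}$ are, up to $=$, precisely the $\Phi(\alpha \initseg a)$ with $a : \alpha$; by the induction hypothesis each of these is a set-theoretic ordinal and hence transitive. So by \cref{being-set-theoretic-ordinal-is-hereditary} it only remains to check that $\Phi(\alpha)$ itself is transitive. To that end, suppose $z \in y$ and $y \in \Phi(\alpha)$; we must show $z \in \Phi(\alpha)$, which (being a proposition) lets us strip the truncations in ${\in}$. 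From $y \in \Phi(\alpha)$ we get $a : \alpha$ with $y = \Phi(\alpha \initseg a)$, and from $z \in \Phi(\alpha \initseg a)$ we get $(b,p) : \alpha \initseg a$ with $z = \Phi\pa*{(\alpha \initseg a) \initseg (b,p)}$. By \cref{iterated-initial-segments} we have $(\alpha \initseg a) \initseg (b,p) = \alpha \initseg b$, so $z = \Phi(\alpha \initseg b)$ with $b : \alpha$, and therefore $z \in \Phi(\alpha)$ by the definitions of $\Phi$ and ${\in}$. This closes the induction. (Alternatively, one can avoid \cref{iterated-initial-segments} and instead use that $\Phi \gamma \in \Phi \delta \iff \gamma < \delta$ from \cref{Phi-properties} together with transitivity of $<$ on $\Ord$: $z = \Phi(\gamma)$ for some $\gamma < \alpha \initseg a < \alpha$, so $\gamma < \alpha$ and $z = \Phi(\gamma) \in \Phi(\alpha)$.)

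I do not expect a genuine obstacle here: the proof is essentially bookkeeping. The only two points that require a little attention are the elimination of the propositional truncations in the definition of ${\in}$, which is justified precisely because the conclusions we draw ($z \in \Phi(\alpha)$, and ultimately ``$\Phi(\alpha)$ is a set-theoretic ordinal'') are propositions, and the identification of an iterated initial segment with a single one, which is handled by \cref{iterated-initial-segments} (itself a consequence of univalence for ordinals).
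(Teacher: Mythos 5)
Your proof is correct and follows essentially the same route as the paper: transfinite induction on $\Ord$ for hereditary transitivity, with the transitivity of $\Phi(\alpha)$ itself established via the definition of ${\in}$ and \cref{iterated-initial-segments}. The only (immaterial) difference is organizational — the paper proves transitivity of $\Phi(\alpha)$ as a separate, induction-free step before running the induction for heredity, whereas you fold both into one induction.
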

\begin{proof}
  We first show directly that \(\Phi \, \alpha\) is a transitive set for every
  \(\alpha : \Ord\): if we have \(x,y : \V\) with
  \(x \in y \in \Phi \, \alpha\), then there exists \(a : \alpha\) with
  \(x = \Phi(\alpha \initseg a)\) and hence \(b : \alpha \initseg a\) with
  \(y = \Phi\pa*{\pa*{\alpha \initseg a} \initseg b}\). But
  \(\pa{\alpha \initseg a} \initseg b\) and \(\alpha \initseg b\) are equal
  ordinals by~\cref{iterated-initial-segments}, so
  \(y = \Phi(\alpha \initseg b)\) and thus \(y \in \Phi \, \alpha\), as desired.

  Now we prove that \(\Phi \, \alpha\) is a set-theoretic ordinal for every
  \(\alpha : \Ord\) by transfinite induction on \(\Ord\). We just established
  that \(\Phi \, \alpha\) is a transitive set and if \(x \in \Phi \, \alpha\), then
  \(x = \Phi(\alpha \initseg a)\) for some \(a : \alpha\), so that \(x\) must be
  a transitive set by the induction hypothesis.
\end{proof}

Thus, one half of the desired isomorphism is given by
\({\Phi : \Ord \to \Vord}\). We define a map in the other direction now.

\begin{definition}[\flink{Definition-30} \(\Psi\)]\label{def:Psi}
  We define \(\Psi : \V \to \Ord\) recursively by
  \[
    \Psi(\Vset{A}{f}) \;\colonequiv\; \bigvee_{a : A}\pa*{\Psi(f \, a) + \One}.
  \]
  This map is well-defined because \(\Ord\) is a set, and if \(f\)~and~\(g\)
  have equal images then the suprema \(\Psi(\Vset{A}{f})\) and
  \(\Psi(\Vset{B}{g})\) are seen to coincide.
\end{definition}

\begin{remark}
  This function above assigns the \emph{rank} to a set and is well-known in set
  theory, see for example~\cite[p.~743]{Aczel1977}
  and~\cite[Def~9.3.4]{AczelRathjen2010}.
\end{remark}

\begin{proposition}[\flink{Proposition-32}]\label{Psi-section-of-Phi}
  When restricted to \(\Vord\), the map \(\Psi\) is a section of \(\Phi\), i.e.,
  for 
  \(x : \Vord\), we have \(\Phi(\Psi \, x) = x\).
\end{proposition}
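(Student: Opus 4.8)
The plan is to proceed by $\in$-induction (\cref{extensionality-and-induction}), which applies because $\Phi(\Psi\,x) = x$ is a proposition for each $x : \V$, as $\V$ is a set. So I fix $x : \Vord$ and assume as induction hypothesis that $\Phi(\Psi\,y) = y$ for every $y \in x$. This hypothesis really is available at every member of $x$: by heredity of being a set-theoretic ordinal (\cref{being-set-theoretic-ordinal-is-hereditary}), each such $y$ is again an element of $\Vord$. Writing $x = \Vset{A}{f}$ and unfolding the definitions of $\Psi$ and $\Phi$, the goal becomes ${\Vset{\Psi\,x}{\lambdadot{c}{\Phi(\Psi\,x \initseg c)}} = \Vset{A}{f}}$, where $\Psi\,x = \bigvee_{a : A}(\Psi(f \, a) + \One)$. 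By the second constructor of $\V$, it therefore suffices to show that the two families $f : A \to \V$ and $\lambdadot{c}{\Phi(\Psi\,x \initseg c)}$ have equal images.

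The key computation is that of the initial segments of the supremum $\Psi\,x$: combining \cref{initial-segments-of-sups} with \cref{initial-segments-of-sums} gives $\Psi\,x \initseg [a, \inr\star] = \Psi(f \, a)$ and $\Psi\,x \initseg [a, \inl b] = \Psi(f \, a) \initseg b$ for $a : A$ and $b : \Psi(f \, a)$. For the first of the two image inclusions I would, given $a : A$, take $c \colonequiv [a, \inr\star]$, so that $\Phi(\Psi\,x \initseg c) = \Phi(\Psi(f \, a)) = f \, a$ by the induction hypothesis applied to $f \, a \in x$.

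For the converse inclusion I would fix $c : \Psi\,x$ and use \cref{initial-segments-of-sups} to obtain $j : A$ and $y : \Psi(f \, j) + \One$ with $\Psi\,x \initseg c = (\Psi(f \, j) + \One) \initseg y$; since the goal is a proposition, I may case split on $y$. If $y = \inr\star$, then $\Phi(\Psi\,x \initseg c) = \Phi(\Psi(f \, j)) = f \, j$ by the induction hypothesis, and $a \colonequiv j$ works. If $y = \inl b$ for some $b : \Psi(f \, j)$, then $\Psi\,x \initseg c = \Psi(f \, j) \initseg b$; since $\Psi(f \, j) \initseg b < \Psi(f \, j)$ as ordinals, \cref{Phi-properties} together with the induction hypothesis yields $\Phi(\Psi\,x \initseg c) \in \Phi(\Psi(f \, j)) = f \, j$. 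As $f \, j \in x$ and $x$ is a transitive set, this gives $\Phi(\Psi\,x \initseg c) \in x = \Vset{A}{f}$, which is exactly the desired existence of an $a : A$ with $f \, a = \Phi(\Psi\,x \initseg c)$.

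The hard part is this last $\inl$-case: it is the only place where transitivity of $x$ enters, which is precisely why the proposition is restricted to $\Vord$ and fails for general $x : \V$. Everything else is bookkeeping --- checking that the induction hypothesis reaches every member of $x$ through heredity, and that the various truncated existentials involved are propositions, so that the case analyses and the appeal to the second constructor of $\V$ are legitimate.
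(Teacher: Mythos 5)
Your proposal is correct and follows essentially the same route as the paper's proof: the only cosmetic difference is that you invoke $\in$-induction together with heredity and a mere presentation $x = \Vset{A}{f}$, where the paper directly applies the induction principle of $\V$ with the hypothesis indexed over $a : A$, which amounts to the same thing. The two image inclusions, the computations of initial segments of the supremum via \cref{initial-segments-of-sups,initial-segments-of-sums}, and the use of transitivity of $x$ in the $\inl$-case all match the paper's argument.
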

\begin{proof}
  Since we are proving a proposition, the induction principle of \(\V\) implies
  that it suffices to prove that for every \(A : \U\) and \(f : A \to \V\) such
  that \(\Vset{A}{f}\) is a set-theoretic ordinal, the equality
  \(\Phi(\Psi(\Vset{A}{f})) = \Vset{A}{f}\) holds, assuming the \emph{induction
    hypothesis}: \(\Phi(\Psi(f \, a)) = f \, a\) holds for all \(a : A\).
  (Note that every \(f \, a\) is a set-theoretic ordinal if \(\Vset{A}{f}\) is.)
  We compute that
  \[
    \Phi(\Psi(\Vset{A}{f})) = \Vset{s}{\lambdadot{y}{\Phi\pa*{s \initseg y}}},
  \]
  where \(s \colonequiv \bigvee_{a : A}\pa*{\Psi(f \, a) + \One}\).
  We now use the second constructor of \(\V\) to prove that
  \(\Vset{s}{\lambdadot{y}{\Phi\pa*{s \initseg y}}}\) is equal to
  \(\Vset{A}{f}\), i.e., we show that \(\lambdadot{y}{\Phi\pa*{s \initseg y}}\)
  and \(f\) have the same image.
  It is convenient to set up some notation: we write \(c_a\) for
  \(\Psi(f \, a) + \One\).

  In one direction, suppose that \(a : A\), then
  \[
    f \, a = \Phi(\Psi(f \, a)) = \Phi\pa*{{c_a} \initseg {\inr \star}} =
    \Phi\pa*{s \initseg [a,\inr \star]},
  \]
  where the first equality holds by induction hypothesis and the second and
  third by~\cref{initial-segments-of-sums,initial-segments-of-sups},
  respectively.

  Conversely, if we have \(y : s\), then by~\cref{initial-segments-of-sups}
  there exist some \(a : A\) and \(w : c_a\) such that
  \(s \initseg y = c_a \initseg w\).
  There are now two cases: either \(w = \inr \star\) or \(w = \inl x\) with
  \(x : \Psi(f \, a)\).
  If \(w = \inr \star\), then, as before,
  \[
    \Phi(s \initseg y) = \Phi(c_a \initseg \inr \star) = \Phi(\Psi(f \, a))
    = f \, a.
  \]

  So suppose that \(w = \inl x\) with \(x : \Psi(f \, a)\). It is here that we
  use our assumption that \(\Vset{A}{f}\) is a set-theoretic ordinal.
  Indeed, since \(\Psi(f \, a) \initseg x\) is an initial segment of
  \(\Psi(f \, a)\), we have
  \(\Phi(\Psi(f \, a) \initseg x) \in \Phi(\Psi(f \, a)) = f \, a\)
  by~\cref{Phi-properties} and the induction hypothesis.
  But \(f \, a \in \Vset{A}{f}\) and the latter is a transitive set, so
  \(\Phi(\Psi(f \, a) \initseg x) \in \Vset{A}{f}\).
  By definition of set membership, this means that there exists some \(a' : A\)
  with \(\Phi(\Psi(f \, a) \initseg x) = f(a')\).
  Finally,
  \[
    \Phi(s \initseg y) = \Phi(c_a \initseg \inl x) = \Phi(\Psi(f \, a) \initseg x) = f(a'),
  \]
  where the second equality holds by~\cref{initial-segments-of-sums}.
  Hence, 
  \(f\) and \(\Phi(s \initseg -)\) have the same image,
  completing the proof.
\end{proof}

We are now ready to prove the main theorem of~\cref{sec:ordinals}: the
type-theoretic and set-theoretic ordinals coincide.
\begin{theorem}[\flink{Theorem-33}]\label{Ord-and-Vord-coincide}
  The ordinals \(\Ord\) and \(\Vord\) are isomorphic (as type-theoretic
  ordinals). Hence, by univalence, they are equal.
\end{theorem}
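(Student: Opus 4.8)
The plan is to build the isomorphism directly from the maps $\Phi$ and $\Psi$ already at hand, and to reduce the notion ``isomorphism of type-theoretic ordinals'' to the more elementary condition supplied by \cref{iso-of-ordinals}: a bijection between ordinals that preserves and reflects the order. By \cref{Phi-factors-through-Vord}, $\Phi$ corestricts to a map $\Phi : \Ord \to \Vord$ between the two ordinals in question (the source being an ordinal by the earlier theorem on $\Ord$, the target by the theorem that set membership makes $\Vord$ an ordinal). Since the order on $\Vord$ is set membership, \cref{Phi-properties}(ii) says precisely that this corestricted $\Phi$ preserves and reflects the order.

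It then remains only to see that $\Phi : \Ord \to \Vord$ is bijective, and in fact I would show the stronger statement that $\Psi$, restricted to $\Vord$, is a two-sided inverse of $\Phi$. One direction is immediate from \cref{Psi-section-of-Phi}: $\Phi(\Psi\,x) = x$ for every $x : \Vord$. For the other direction, fix $\alpha : \Ord$; then $\Phi\,\alpha : \Vord$ by \cref{Phi-factors-through-Vord}, so \cref{Psi-section-of-Phi} applies to give $\Phi(\Psi(\Phi\,\alpha)) = \Phi\,\alpha$, and the injectivity of $\Phi$ recorded in \cref{Phi-properties}(i) then yields $\Psi(\Phi\,\alpha) = \alpha$. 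Hence $\Phi$ and the restriction of $\Psi$ to $\Vord$ are mutually inverse, so $\Phi$ is a bijection; by \cref{iso-of-ordinals} it is an isomorphism of ordinals. Finally, since isomorphic ordinals lying in a common universe are equal by univalence --- and both $\Ord$ and $\Vord$ live in $\U^+$ --- we conclude $\Ord = \Vord$.

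I do not anticipate a real obstacle: the substantive work has been done in the preceding lemmas, namely that $\Phi$ is order-preserving, order-reflecting and injective (\cref{Phi-properties}), that it lands in $\Vord$ (\cref{Phi-factors-through-Vord}), and that $\Psi$ inverts it on $\Vord$ (\cref{Psi-section-of-Phi}). The only points that call for a moment's care are to observe that surjectivity of $\Phi$ onto $\Vord$ is already contained in \cref{Psi-section-of-Phi} (which exhibits a section), and to reconstruct the identity $\Psi \circ \Phi = \mathrm{id}$ on $\Ord$ cheaply from that section together with injectivity of $\Phi$, rather than re-proving it by a fresh transfinite induction.
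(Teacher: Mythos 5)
Your proposal is correct and follows essentially the same route as the paper: corestrict $\Phi$ to $\Vord$ via \cref{Phi-factors-through-Vord}, get injectivity and order preservation/reflection from \cref{Phi-properties}, surjectivity from the section $\Psi$ in \cref{Psi-section-of-Phi}, and conclude by \cref{iso-of-ordinals} and univalence. Your extra step of deriving $\Psi(\Phi\,\alpha)=\alpha$ from the section plus injectivity is a harmless elaboration of the paper's ``injection and split surjection, hence bijection''.
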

\begin{proof}
  By~\cref{Phi-factors-through-Vord} we have a map \(\Phi : \Ord \to
  \Vord\). Moreover, it is an injection by~\cref{Phi-properties} and a (split)
  surjection by~\cref{Psi-section-of-Phi}. Hence, \(\Phi\) is a
  bijection. But~\cref{Phi-properties} tells us that it also preserves and
  reflects the strict orders, so it is an isomorphism of ordinals
  by~\cref{iso-of-ordinals}.
\end{proof}

\subsection{Revisiting the rank of a set}\label{subsec:revisit-rank}

The recursive nature of the map \(\Psi\) from~\cref{def:Psi} that computes the rank
of a set in \(\V\) is convenient for proving properties by induction. It is
possible, however, to give a conceptually simpler and non-recursive
description, although this requires paying close attention to size issues.

\begin{definition}[\flink{Definition-34} Type of elements, \(\totalspace x\)]
  Given an element \({x : \V}\), we write \(\totalspace x\) for its type of
  elements, i.e.,
  \[
    \totalspace x \;\colonequiv\; \Sigma(y : \V). {y \in x}.
  \]
\end{definition}

\begin{proposition}[\flink{Proposition-35}]
  If \(x : \V\) is a set-theoretic ordinal, then \(\totalspace x\) ordered by
  \({\in}\) is a type-theoretic ordinal.
\end{proposition}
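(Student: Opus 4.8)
The plan is to equip $\totalspace x$ with the relation $(y,p) < (z,q) \colonequiv y \in z$ and to verify, in turn, that it is prop-valued, transitive, wellfounded, and extensional. Prop-valuedness is immediate, since set membership on $\V$ is prop-valued by construction.

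For transitivity, suppose $(y,p),(z,q),(w,r) : \totalspace x$ with $y \in z$ and $z \in w$; I want $y \in w$. Here one should notice that transitivity of the set $x$ alone does not suffice: it would only give $z \in x$ and $y \in x$, not $y \in w$. What is needed is that $w$ itself is a transitive set, which holds because $w \in x$ and $x$ is a set-theoretic ordinal (\cref{being-set-theoretic-ordinal-is-hereditary}); then $y \in z \in w$ gives $y \in w$. I expect this to be the one subtle point of the proof — it is precisely where the \emph{hereditary} part of \cref{set-theoretic-ordinal} is genuinely used; the remaining verifications are routine unfoldings of definitions.

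For wellfoundedness I would apply $\in$-induction (\cref{extensionality-and-induction}) to the prop-valued family $P(y) \colonequiv \forall(p : y \in x).\,\acc_{\in}(y,p)$: assuming $P(z)$ for every $z \in y$ and given $p : y \in x$, the accessibility of $(y,p)$ reduces by definition to the accessibility of each $(z,q) : \totalspace x$ with $z \in y$, which is exactly what $P(z)$ provides; hence $P$ holds on all of $\V$, so every element of $\totalspace x$ is accessible. For extensionality, suppose $(y,p),(z,q) : \totalspace x$ satisfy $\big(w \in y \iff w \in z\big)$ for all $(w,r) : \totalspace x$. Since the membership proofs are propositional, it suffices to prove $y = z$, and by extensionality of $\V$ (\cref{extensionality-and-induction}) it suffices to prove $y \subseteq z$ and $z \subseteq y$. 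Given $w : \V$ with $w \in y$, transitivity of the set $x$ yields a proof $r'$ of $w \in x$, so instantiating the hypothesis at $(w, r')$ gives $w \in z$; the reverse inclusion is symmetric. This establishes all four conditions.
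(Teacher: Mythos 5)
Your proof is correct, but it takes a different route from the paper. The paper's proof is a one-line reduction: by \cref{being-set-theoretic-ordinal-is-hereditary}, every member of \(x\) is itself a set-theoretic ordinal, so \(\totalspace x = \Sigma(y : \Vord).\,y \in x\), and this subtype inherits the ordinal structure from \(\Vord\), which was already shown to be a type-theoretic ordinal (the transitivity of \(x\) is what makes the subtype \(\in\)-downward-closed in \(\Vord\), so that extensionality survives the restriction). You instead verify the four axioms from scratch, going back to \(\in\)-induction and \(\V\)-extensionality directly. Both work; the paper's version buys brevity by reusing the earlier theorem about \(\Vord\), whereas yours makes explicit exactly where each hypothesis is used --- in particular your observation that transitivity of \(\totalspace x\) needs \(w\) itself to be a transitive set (the hereditary part of the definition), not merely transitivity of \(x\), is precisely the content that the paper's appeal to \cref{being-set-theoretic-ordinal-is-hereditary} packages away. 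Two minor remarks: for transitivity you only need that \(w\) is a transitive set, which is immediate from \cref{set-theoretic-ordinal} without invoking the hereditary lemma; and your \(\in\)-induction predicate \(P(y) \colonequiv \forall(p : y \in x).\,\acc_{\in}(y,p)\) is prop-valued because accessibility with respect to a prop-valued relation is a proposition --- worth noting, since \(\in\)-induction is only stated for prop-valued families.
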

\begin{proof}
  Since being a set-theoretic ordinal is hereditary, we have
  \(\totalspace x = \Sigma(y : \Vord). {y \in x}\), so that the
  former inherits the ordinal structure from \(\Vord\).
\end{proof}

It now becomes important to pay close attention to type universe parameters, so
we will annotate them with subscripts. Notice that the \(\totalspace\)-operation
does not define a map \(\V_{\U} \to \Ord_{\U}\) like \(\Psi\) does, but rather a
map \(\V_{\U} \to \Ord_{\U^+}\), because the cumulative hierarchy \(\V_\U\) with
respect to the universe \(\U\) is itself a type in the next universe \(\U^+\).

Still, we will prove that \(\Psi(x)\) and \(\totalspace x\) are
\emph{isomorphic} ordinals for every set-theoretic ordinal \(x : \V_\U\), even
though they cannot be equal due to their different sizes.
However, we can do a bit better by observing, as in~the HoTT book~\cite[Lem~10.5.5]{HoTTBook},
that the cumulative hierarchy is locally small (in the sense of
Rijke~\cite{Rijke2017}), meaning its identity types are \(\U\)-valued up to
equivalence. Then we observe that \(\totalspace(\Vset{A}{f})\) is equal to the
image of \(f\), which is equivalent to a type in \(\U\) thanks to the fact that
\(\V\) is a locally small set.
This general fact on small images of maps into locally small sets is a ``set
replacement principle'', discussed by Rijke~\cite{Rijke2017} and
de Jong and Escard\'o~\cite{deJongEscardo2022}.
Specifically, the image of \(f\) is equivalent to the set quotient \(A/{\sim}\),
where \(A\) is the domain of \(f\) and \({\sim}\) relates two elements if \(f\)
identifies them. We then make the quotient \(A/{\sim}\) into an ordinal by
defining \([a] < [b]\) as \(f \, a \in f \, b\).
Finally, we can resize \(A/{\sim}\) to an ordinal in \(\U\) by using that
\(\V\) is locally small and by employing a \(\U\)-valued membership relation,
as explained below.

We stress that none of the above constructions rely on propositional resizing
principles.

\subsubsection{The cumulative hierarchy is locally small}
We again follow~\cite[\S{}10.5]{HoTTBook} in defining a recursive
bisimulation relation that makes \(\V\) a locally small type.

\begin{definition}[\flink{Definition-36} Bisimulation {\cite[Def~10.5.4]{HoTTBook}}]
  The \emph{bi\-simulation} relation
  \({\approx} : \V_\U \to \V_\U \to \Prop_\U\) is inductively defined by
  \begin{align*}
    {\Vset{A}{f}} \approx {\Vset{B}{g}} &\colonequiv
    \pa*{\forall (a : A).\exists (b : B).{f \, a \approx g \, b}} \\
    &\hspace{3pt}\times
    \pa*{\forall (b : B).\exists (a : A).{g \, b \approx f \, a}}.
  \end{align*}
\end{definition}

\begin{lemma}[\flink{Lemma-37} Lemma~10.5.5 of \cite{HoTTBook}]%
  \label{cumulative-hierarchy-is-locally-small}
  For every \(x,y : \V_\U\), we have an equivalence of propositions
  \(\pa*{x = y} \simeq \pa*{x \approx y}\). \qedNoProof
\end{lemma}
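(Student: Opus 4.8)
The plan is to deduce the equivalence from the logical equivalence $(x = y) \iff (x \approx y)$: since $\V$ is a set the type $x = y$ is a proposition, and $x \approx y$ is a proposition by construction (the relation $\approx$ takes values in $\Prop_\U$), so any pair of maps between them is automatically an equivalence. Thus it suffices to prove the two implications.

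For the forward direction I would use path induction to reduce to showing that $\approx$ is reflexive, i.e.\ $x \approx x$ for every $x : \V_\U$. This is proved by the induction principle of $\V$ into the universe of propositions (so that the path and set-truncation constructors impose no further obligations): for $x \equiv \Vset{A}{f}$ the goal unfolds to a product of two copies of $\forall(a : A).\exists(a' : A).{f\,a \approx f\,a'}$, and in each conjunct we take $a' \colonequiv a$ and invoke the induction hypothesis $f\,a \approx f\,a$.

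For the backward direction I would first record that $\approx$ is symmetric: this is immediate, since the defining clause for $\Vset{A}{f} \approx \Vset{B}{g}$ is literally symmetric under swapping $(A,f)$ with $(B,g)$, so no induction is needed. Then I prove $x \approx y \to x = y$ by induction on $x$ — again into propositions, since the goal $\forall(y : \V_\U).(x \approx y \to x = y)$ is a proposition — obtaining $x \equiv \Vset{A}{f}$ together with the induction hypothesis that $f\,a \approx y' \to f\,a = y'$ for all $a : A$ and $y' : \V_\U$. Fixing $y$ and applying the induction principle of $\V$ once more lets me assume $y \equiv \Vset{B}{g}$. Unfolding the hypothesis $\Vset{A}{f} \approx \Vset{B}{g}$ gives $\forall(a : A).\exists(b : B).{f\,a \approx g\,b}$ and $\forall(b : B).\exists(a : A).{g\,b \approx f\,a}$; to conclude $\Vset{A}{f} = \Vset{B}{g}$ it suffices, by the second constructor of $\V$, to show $f$ and $g$ have equal images. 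Given $a : A$, the first clause yields some $b : B$ with $f\,a \approx g\,b$, and the induction hypothesis upgrades this to $f\,a = g\,b$ (the propositional truncation is harmless because the goal $\exists(b : B).{f\,a = g\,b}$ is a proposition); given $b : B$, the second clause yields $a : A$ with $g\,b \approx f\,a$, whence $f\,a \approx g\,b$ by symmetry of $\approx$ and then $f\,a = g\,b$ by the induction hypothesis.

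The argument is essentially routine; the only points needing care are the repeated appeals to the induction principle of $\V$ \emph{into propositions} — so that no coherence data for the higher constructor is required — and bookkeeping of which induction hypothesis is in scope. In particular one should notice that the inner induction on $y$ contributes nothing, and that the asymmetry between the two halves of the ``equal images'' condition is bridged using symmetry of $\approx$ rather than a second hypothesis.
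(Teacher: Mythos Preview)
Your argument is correct and is essentially the standard proof; the paper itself does not supply one, instead citing Lemma~10.5.5 of the HoTT book and closing with a bare \qed. One small imprecision worth flagging: establishing symmetry of $\approx$ as a statement about arbitrary $x,y:\V$ does require $\V$-induction (into propositions) on both arguments in order to reach the constructor case where the defining clause is visible---what is true, and what you presumably mean, is that once there the two conjuncts are literally each other's swap, so the induction \emph{hypothesis} is never invoked. Alternatively, since in your backward direction you have already reduced to $x\equiv\Vset{A}{f}$ and $y\equiv\Vset{B}{g}$ before symmetry is needed, you could simply observe the swap in situ and avoid stating symmetry separately.
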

Hence, the bisimulation relation captures equality on \(\V\), but has the
advantage that it has values in \(\U\) rather than \(\U^+\).
This also allows us to define a \(\U\)-valued membership relation.

\begin{definition}[\flink{Definition-38} \({\insmall}\)]
  Define \({\insmall} : \V_\U \to \V_\U \to \Prop_\U\) inductively by
  \(
    x \insmall \Vset{A}{f} \colonequiv \exists (a : A).{f \, a \approx x}.
  \)
\end{definition}

\begin{lemma}[\flink{Lemma-39}]
  For every \(x,y : \V_\U\), we have an equivalence of propositions
  \(\pa*{x \in y} \simeq \pa*{x \insmall y}\).
\end{lemma}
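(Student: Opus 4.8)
The plan is to reduce to the point constructor of $\V$ and then invoke local smallness (\cref{cumulative-hierarchy-is-locally-small}). First I would note that $x \in y$ and $x \insmall y$ are propositions, so $\pa*{x \in y} \simeq \pa*{x \insmall y}$ is itself a proposition, and exhibiting such an equivalence amounts to exhibiting a biimplication $\pa*{x \in y} \iff \pa*{x \insmall y}$. Since the goal is thus proposition-valued in $y$ (with $x$ held fixed), the induction principle of $\V$ lets me assume $y \equiv \Vset{A}{f}$ for some $A : \U$ and $f : A \to \V_\U$; the path constructor of $\V$ imposes no further obligation because the target is a proposition.

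In this case, unfolding the definitions gives $\pa*{x \in \Vset{A}{f}} \equiv \exists(a : A).{f\,a = x}$ and $\pa*{x \insmall \Vset{A}{f}} \equiv \exists(a : A).{f\,a \approx x}$. Applying \cref{cumulative-hierarchy-is-locally-small} pointwise to $f\,a$ and $x$, the propositions $f\,a = x$ and $f\,a \approx x$ are equivalent for every $a : A$; and since propositional truncation preserves biimplications, the two existentials are equivalent, which is exactly what is required. Note that no induction on $x$ is needed here: membership recurses only on its second argument, and its first argument enters only through the already-defined relations $=$ and $\approx$.

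I do not expect a genuine obstacle: the statement is essentially an immediate corollary of the fact that $\V$ is locally small. The only point that needs a moment's thought is the appeal to the induction principle of $\V$ — one must check that the family in question is proposition-valued, but this is immediate since the type of equivalences between two propositions is a proposition. Well-definedness with respect to the equal-images constructor is already baked into the definitions of $\in$ and $\insmall$, so there is nothing further to verify.
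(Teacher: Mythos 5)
Your proof is correct and follows essentially the same route as the paper, which likewise argues by $\V$-induction on $y$ combined with \cref{cumulative-hierarchy-is-locally-small}; your version merely spells out the details (propositionality of the goal, unfolding both membership relations at a point constructor, and transporting the equivalence $({=})\simeq({\approx})$ under the existential) that the paper leaves implicit.
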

\begin{proof}
  By \(\V\)-induction and \cref{cumulative-hierarchy-is-locally-small}.
\end{proof}

\subsubsection{The set quotients}

Throughout this subsection, assume that we are given \(A : \U\) and
\(f : A \to \V\) such that \(\Vset{A}{f}\) is a set-theoretic ordinal.
We show that the type of elements of \(\Vset{A}{f}\) is given by a suitable
quotient of~\(A\). This simple quotient can capture all the elements of
\(\Vset{A}{f}\) precisely because \(\Vset{A}{f}\) is hereditarily transitive.

\begin{definition}[\flink{Definition-40}]
  We write \(A/{\sim}\) for the set quotient of \(A\) by the \(\U^+\)-valued
  equivalence relation
  \(
    a \sim b \colonequiv (f \, a = f \, b).
  \)
  Similarly, we write \(A/{\sim_\U}\) for the set quotient of \(A\) by the
  \(\U\)-valued equivalence relation given by
  \(
    a \sim_{\U} b \colonequiv (f \, a \approx f \, b).
  \)
\end{definition}

The important thing to note in the above definition is that \(A/{\sim} : \U^+\),
while \(A/{\sim_{\U}} : \U\).  It is easy to prove that the latter is a
\emph{small replacement} of the former:

\begin{lemma}[\flink{Lemma-41}]\label{quotient-is-image-qua-sets}
  Writing \(\im f \colonequiv \Sigma (v : \V).\exists (a : A).{f \, a = v}\) for
  the image of \(f\), we have \((A/{\sim_\U}) \simeq (A/{\sim}) = (\im f)\). \qedNoProof
\end{lemma}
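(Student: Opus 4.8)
The plan is to establish the two halves of the statement separately: the equivalence $(A/{\sim_\U}) \simeq (A/{\sim})$ follows from the fact that $\sim$ and $\sim_\U$ are logically equivalent relations on $A$, and the equality $(A/{\sim}) = (\im f)$ is the standard identification of the image of a map with the quotient of its domain by its kernel pair, promoted from an equivalence to an equality by univalence because both types are sets.

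For the first half, \cref{cumulative-hierarchy-is-locally-small} gives, for all $a,b : A$, an equivalence of propositions $(f\,a = f\,b) \simeq (f\,a \approx f\,b)$, that is, $(a \sim b) \simeq (a \sim_\U b)$. Consequently the quotient map $[-] : A \to A/{\sim_\U}$ respects $\sim$, and the quotient map $[-] : A \to A/{\sim}$ respects $\sim_\U$, so the universal property of set quotients induces maps in both directions, and uniqueness in that universal property forces the two composites to be the identity. The only role of this step is to pass from the $\U^+$-valued relation $\sim$ to the $\U$-valued relation $\sim_\U$; it necessarily yields an equivalence rather than an equality, since $A/{\sim} : \U^+$ while $A/{\sim_\U} : \U$.

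For the second half, corestricting $f$ to its image gives a surjection $q : A \to \im f$ defined by $q(a) \colonequiv (f\,a, \tosquash{(a,\refl)})$. Since $\V$ is a set, $\im f$ is a set and the first projection $\im f \to \V$ is an embedding, so for $a,b : A$ the identity type $q(a) = q(b)$ is equivalent to $f\,a = f\,b$, i.e.\ to $a \sim b$. Hence $[-] : A \to A/{\sim}$ factors through $q$ and $q$ factors through $[-]$, and because surjections onto sets are effective quotients (equivalently, $A/{\sim}$ with $[-]$ enjoys the same universal property as the image factorization), these factorizations are mutually inverse. This gives $A/{\sim} \simeq \im f$, and as both are sets living in $\U^+$, univalence upgrades this equivalence to the equality $A/{\sim} = \im f$. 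Chaining the two halves yields $(A/{\sim_\U}) \simeq (A/{\sim}) = (\im f)$.

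None of this is genuinely hard; the work is almost entirely bookkeeping. The points requiring care are the universe levels — $\im f$ and $A/{\sim}$ sit in $\U^+$ whereas $A/{\sim_\U}$ sits in $\U$, which is precisely why the statement claims an equivalence on the left but an equality on the right — and applying the universal property of set quotients to families valued in the relevant universe. The (mild) crux is the second half, namely recognizing $\im f$ as the quotient $A/{\sim}$; this is the familiar surjection–embedding factorization in a univalent universe, and if a self-contained argument is undesirable it can simply be cited from the literature on set quotients in HoTT.
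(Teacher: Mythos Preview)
Your proposal is correct and matches the paper's intended approach: the paper states this lemma without proof (the \qedNoProof marker), but the surrounding text already sketches exactly your two halves---the image of $f$ is the quotient $A/{\sim}$ by the standard surjection--embedding factorization, and $A/{\sim_\U}$ is its small replacement via \cref{cumulative-hierarchy-is-locally-small}. Your attention to why the left half is only an equivalence while the right half is an equality (universe levels) is precisely the point the paper emphasizes.
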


We define relations on the quotients that make them into large and
small type-theoretic ordinals, respectively.

\begin{definition}[\flink{Definition-42}]
  We define a \(\U^+\)-valued binary relation \(\prec\) on \(A/{\sim}\) by
  \(
    [a] \prec [b] \colonequiv (f \, a \in f \, b).
  \)
  Similarly, we define a \(\U\)-valued relation \(\prec_{\U}\) on
  \(A/{\sim_{\U}}\) by
  \(
    [a] \prec_{\U} [b] \colonequiv (f \, a \insmall f \, b).
  \)
\end{definition}

\begin{proposition}[\flink{Proposition-43}]
  The relation \(\prec\) makes \(A/{\sim}\) into an ordinal in~\(\U^+\), and
  \(\prec_{\U}\) makes \(A/{\sim_\U}\) into an ordinal in~\(\U\).
\end{proposition}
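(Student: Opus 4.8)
The plan is to verify directly the four conditions defining a type-theoretic ordinal --- prop-valuedness, transitivity, wellfoundedness, extensionality --- for the relation $\prec$ on $A/{\sim}$, and then to deduce the statement for $\prec_{\U}$ on $A/{\sim_\U}$ by transporting along the equivalences already established. First, $\prec$ is prop-valued because $\in$ is, and it is well-defined on the quotient because $a \sim a'$ means $f\,a = f\,a'$, which respects $\in$ on either argument. For transitivity, suppose $[a] \prec [b] \prec [c]$, that is, $f\,a \in f\,b \in f\,c$; since $f\,c$ belongs to the hereditarily transitive set $\Vset{A}{f}$, it is itself a transitive set, so $f\,a \in f\,c$, i.e.\ $[a] \prec [c]$.

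For wellfoundedness I would apply $\in$-induction (\cref{extensionality-and-induction}) to the prop-valued predicate $P(v) \colonequiv \forall (a : A).\, (f\,a = v) \to \acc_{\prec}([a])$: assuming $P(y)$ for all $y \in v$, any $b : A$ with $[b] \prec [a]$ and $f\,a = v$ has $f\,b \in v$, so $P(f\,b)$ instantiated at $b$ gives $\acc_{\prec}([b])$; as the quotient map is surjective and accessibility is a proposition, this shows $\acc_{\prec}([a])$, and then $P(f\,a)$ instantiated at $a$ shows every generator $[a]$ is accessible. For extensionality, suppose $[c] \prec [a] \leftrightarrow [c] \prec [b]$ for every $[c]$; by set-theoretic extensionality it suffices to prove $f\,a \subseteq f\,b$ and symmetrically. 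If $v \in f\,a$, then $v \in \Vset{A}{f}$ by transitivity of the latter, so $v = f\,c$ for some $c : A$; then $[c] \prec [a]$, hence $[c] \prec [b]$, i.e.\ $v \in f\,b$. Effectiveness of set quotients turns the resulting equality $f\,a = f\,b$ into $[a] = [b]$. Since $A : \U$ and $\prec$ is $\U^+$-valued, this exhibits $A/{\sim}$ as an ordinal in $\U^+$. (More economically, one may observe that $\im f$ is by definition the type of elements $\totalspace(\Vset{A}{f})$, so \cref{quotient-is-image-qua-sets} identifies $A/{\sim}$ with the ordinal of the preceding proposition once one checks --- by unfolding definitions --- that $\prec$ corresponds there to $\in$.)

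For the small quotient I would not redo the argument but transport it along the equivalence $A/{\sim_\U} \simeq A/{\sim}$ of \cref{quotient-is-image-qua-sets}, which is induced by the identity on $A$, the relations $\sim_\U$ and $\sim$ having the same equivalence classes by \cref{cumulative-hierarchy-is-locally-small}. Under this equivalence $\prec_\U$ corresponds to $\prec$, since $f\,a \insmall f\,b$ is equivalent to $f\,a \in f\,b$ by the $\U$-valued membership lemma. As each of the ordinal conditions depends on a prop-valued relation only through its pointwise truth values, $\prec_\U$ inherits all of them, and since $A : \U$ with $\prec_\U$ a $\U$-valued relation, $(A/{\sim_\U},\prec_\U)$ is an ordinal in $\U$.

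I expect the only delicate point to be wellfoundedness: the predicate fed to $\in$-induction must quantify over all $a : A$ mapping to a given $v : \V$ rather than directly over equivalence classes, so that the induction hypothesis is strong enough to produce an accessibility proof at $[b]$ from the membership $f\,b \in v$; everything else is a direct unfolding of definitions together with the hereditary transitivity of $\Vset{A}{f}$.
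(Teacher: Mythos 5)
Your proof is correct and follows essentially the same route as the paper: transitivity and extensionality via the hereditary transitivity of $\Vset{A}{f}$, and wellfoundedness by $\in$-induction on $\V$ with exactly the same generalized predicate quantifying over representatives $a : A$ with $f\,a = v$. The only cosmetic differences are that you handle the small quotient by transporting along the equivalence of \cref{quotient-is-image-qua-sets} where the paper just says ``analogously'' (both work), and that the step from $f\,a = f\,b$ to $[a]=[b]$ needs only the quotient constructor, not effectiveness.
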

\begin{proof}
  For transitivity, it suffices to prove that \([a] \prec [b]\) and
  \([b] \prec [c]\) together imply \([a] \prec [c]\) for all \(a,b,c : A\). But
  this follows from the fact that \(f(c)\) is a transitive set which holds
  because it is an element of the set-theoretic ordinal \(\Vset{A}{f}\).
  For extensionality, assume that \({x \prec [a] \iff x \prec [b]}\) for every
  \(x : A/{\sim}\). We have to prove that \([a] = [b]\), i.e., that
  \(f \, a = f \, b\). We show that \(f \, a \subseteq f \, b\) and note that
  the reverse inclusion is proved similarly.
  Suppose that we have \(x : \V\) with \(x \in f \, a\). Then because \(f \, a\) is
  a member of the transitive set \(\Vset{A}{f}\), we get \(x \in
  \Vset{A}{f}\). Hence, there exists some \(c : A\) with \(f(c) = x\). But then
  \(f(c) = x \in f \, a\), and so \([c] \prec [a]\). Hence, \([c] \prec [b]\) by
  assumption, and therefore, \(x = f(c) \in f \, b\), as desired.
  Further, to see that every element of \(A/{\sim}\) is accessible, we prove the
  following statement by transfinite induction in the ordinal \((\V,{\in})\):
  for every \(x : \V\) and every \(a : A\), if \(f \, a = x\), then \([a]\) is
  accessible.
  So let \(x : \V\) and \(a : A\) be such that \(f \, a = x\) and assume the
  induction hypothesis that for every \(y \in x\) and \(b : A\), if
  \(f \, b = y\), then \([b]\) is accessible.
  For accessibility of \([a]\), it suffices to prove that every \([b]\) is
  accessible whenever we have \(b : A\) with \([b] \prec [a]\).
  But given such a \(b : A\) we have \(f \, b \in f \, a = x\), and hence
  accessibility of \([b]\) by induction hypothesis.
  The claim about \(\pa*{A/{\sim_\U},{\prec_\U}}\) is proved
  analogously.
\end{proof}

Finally, the quotient is equal to the type of elements:

\begin{lemma}[\flink{Lemma-44}]\label{quotient-is-image-qua-ordinals}
  For every \(A : \U\) and \(f : A \to \V\), the ordinals \(\pa*{A/{\sim},{\prec}}\) and
  \(\pa*{\totalspace(\Vset{A}{f}),{\in}}\) are equal.
\end{lemma}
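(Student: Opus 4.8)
We want to show $(A/{\sim}, \prec) = (\totalspace(\Vset{A}{f}), \in)$ as ordinals. Since isomorphic ordinals in the same universe are equal by univalence, and both live in $\U^+$, it suffices to construct an isomorphism. By Lemma 10 (`iso-of-ordinals`), it suffices to give a bijection that preserves and reflects the order.

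The plan: the natural map is $[a] \mapsto (f\,a, p_a)$ where $p_a : f\,a \in \Vset{A}{f}$ is the evident witness $|(a, \refl)|$. This is well-defined on the quotient precisely because $a \sim b$ means $f\,a = f\,b$. I would carry out the verification in the following steps. First, well-definedness: the target element $(f\,a, p_a)$ depends on $a$ only through $f\,a$ (the witness component is in a proposition, so it carries no extra information), hence the map descends to the quotient. Second, the map preserves and reflects the order: $[a] \prec [b]$ unfolds to $f\,a \in f\,b$, which is exactly the relation $\in$ on the image elements $(f\,a,p_a)$ and $(f\,b,p_b)$ — so this is immediate by unfolding definitions. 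Third, bijectivity: surjectivity follows because any element of $\totalspace(\Vset{A}{f})$ is a pair $(v, q)$ with $q : v \in \Vset{A}{f}$, and $v \in \Vset{A}{f}$ means $\exists(a:A).\,f\,a = v$; since we are proving a proposition (being in the image of a map into a set/being hit), we may extract such an $a$, and then $(f\,a,p_a) = (v,q)$ by extensionality of $\V$ (the witnesses agree as they are in a prop). Injectivity follows because if $(f\,a,p_a) = (f\,b,p_b)$ then $f\,a = f\,b$, i.e.\ $a \sim b$, i.e.\ $[a] = [b]$ in the quotient.

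The main obstacle, such as it is, is bookkeeping around the universe of the membership witness and confirming that $\totalspace(\Vset{A}{f})$ is genuinely the image of $f$ rather than something larger — but this is exactly the content already isolated in Lemma 41 (`quotient-is-image-qua-sets`), which gives $(A/{\sim}) = (\im f)$ as types. So an alternative, shorter route is: by Lemma 41 we already have a type-level identification $(A/{\sim}) = \totalspace(\Vset{A}{f})$ (composing $(A/{\sim}) = \im f$ with $\im f = \totalspace(\Vset{A}{f})$, the latter holding by extensionality of $\V$ since both are the subtype of $\V$ consisting of elements equal to some $f\,a$), and it remains only to check that this identification transports $\prec$ to $\in$. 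Under the identification, $[a]$ corresponds to $f\,a$, and by construction $[a]\prec[b] \colonequiv (f\,a \in f\,b)$, which is precisely the $\in$-relation on the corresponding elements of $\totalspace(\Vset{A}{f})$; hence the relations agree and the identification of types upgrades to an identification of ordinals. I would present this second route, as it reuses Lemma 41 directly and keeps the size considerations in one place.
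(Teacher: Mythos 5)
Your proposal is correct and, in the second (shorter) route you say you would present, essentially identical to the paper's proof: it takes the bijection already provided by \cref{quotient-is-image-qua-sets}, observes that \([a] \prec [b]\) is by definition \(f\,a \in f\,b\) so the order is preserved and reflected, and concludes by \cref{iso-of-ordinals}. The first, more explicit route is just an unfolding of the same argument, so no further comparison is needed.
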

\begin{proof}
  By~\cref{iso-of-ordinals}, we only need to verify that the bijection from
  \cref{quotient-is-image-qua-sets} preserves and reflects the order,
  but this is clear because \([a] \prec [b]\) holds exactly when
  \(f \, a \in f \, b\).
\end{proof}

\subsubsection{Alternative descriptions of the rank}

We are now ready to prove the main result of this subsection: we show that the
rank of \(\Vset{A}{f}\), as recursively computed by \(\Psi\), is equal to the
quotient \(A/{\sim_\U}\), thus providing a simpler non-recursive description of
its rank.

\begin{theorem}[\flink{Theorem-45}]\label{rank-as-quotient}
  The ordinals \(\Psi(\Vset{A}{f})\) and \(A/{\sim_\U}\) are equal.
\end{theorem}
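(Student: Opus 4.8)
The plan is to exhibit an isomorphism of ordinals \(\Psi(\Vset{A}{f}) \cong A/{\sim_\U}\); since both sides are ordinals in the \emph{same} universe \(\U\), the claimed equality then follows by univalence. I would build this isomorphism as a chain passing through the two intermediaries \(\pa*{\totalspace(\Vset{A}{f}),{\in}}\) and \(\pa*{A/{\sim},{\prec}}\), each already tied to one of the endpoints by earlier results.

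On the \(A/{\sim_\U}\) side: the relations \(\sim_\U\) and \(\sim\) on \(A\) are logically equivalent, since \(f\,a \approx f\,b\) iff \(f\,a = f\,b\) by \cref{cumulative-hierarchy-is-locally-small}, so the evident equivalence of \cref{quotient-is-image-qua-sets} sends \([a]\) to \([a]\). It preserves and reflects the orders: \([a] \prec_\U [b]\) unfolds to \(f\,a \insmall f\,b\) and \([a] \prec [b]\) to \(f\,a \in f\,b\), and \(\insmall\) agrees with \({\in}\). Hence \(A/{\sim_\U} \cong A/{\sim}\) by \cref{iso-of-ordinals}, and by \cref{quotient-is-image-qua-ordinals} the latter equals \(\pa*{\totalspace(\Vset{A}{f}),{\in}}\).

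On the \(\Psi(\Vset{A}{f})\) side: I would first record the general fact that for \emph{every} ordinal \(\beta\), the map \(b \mapsto \Phi(\beta \initseg b)\) is an order-isomorphism from \(\beta\) onto \(\pa*{\totalspace(\Phi\,\beta),{\in}}\). It is injective because \(\Phi\) is injective (\cref{Phi-properties}) and \(b \mapsto \beta \initseg b\) is injective by extensionality of \(\beta\); it is surjective onto \(\totalspace(\Phi\,\beta)\) directly from the definitions of \(\Phi\) and \({\in}\); and it preserves and reflects the order because \(\Phi(\beta \initseg b) \in \Phi(\beta \initseg b')\) is equivalent to \(\beta \initseg b < \beta \initseg b'\) by \cref{Phi-properties}, which is equivalent to \(b <_\beta b'\) by \cref{iterated-initial-segments}. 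Applying \cref{iso-of-ordinals}, this is an isomorphism of ordinals. Instantiating at \(\beta \colonequiv \Psi(\Vset{A}{f})\) and using \(\Phi(\Psi(\Vset{A}{f})) = \Vset{A}{f}\) --- which holds by \cref{Psi-section-of-Phi}, since \(\Vset{A}{f}\) is a set-theoretic ordinal by standing assumption --- yields \(\Psi(\Vset{A}{f}) \cong \pa*{\totalspace(\Vset{A}{f}),{\in}}\).

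Composing the three isomorphisms gives \(\Psi(\Vset{A}{f}) \cong A/{\sim_\U}\), and univalence turns this into the equality of ordinals we want. The main obstacle is the general fact about \(\Phi\) and \(\totalspace\) used on the \(\Psi\) side --- especially identifying the order correctly via \cref{iterated-initial-segments} and \cref{Phi-properties}, and verifying injectivity through extensionality --- together with enough care about universe levels to be sure that \(\Psi(\Vset{A}{f})\) and \(A/{\sim_\U}\) are genuinely \(\U\)-small, which is precisely what makes the final appeal to univalence legitimate (by contrast, \(\totalspace(\Vset{A}{f})\) and \(A/{\sim}\) live in \(\U^+\), so equality there is unavailable and we genuinely need the small quotient \(A/{\sim_\U}\)).
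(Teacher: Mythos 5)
Your proof is correct, but it runs in the opposite direction from the paper's. The paper proves \cref{rank-as-quotient} directly: it uses injectivity of \(\Phi\) to reduce the claim to \(\Phi(A/{\sim_\U}) = \Vset{A}{f}\) in \(\V\), and then establishes \(\Phi\pa*{A/{\sim_\U} \initseg [a]} = f\,a\) by transfinite induction on \(A/{\sim_\U}\) via a double-inclusion argument; \cref{cor:rank-nonrecursively} (the chain \(\Psi(\Vset{A}{f}) \simeq A/{\sim_\U} \simeq A/{\sim} \simeq \totalspace(\Vset{A}{f})\)) is then derived as a corollary. You instead prove the content of that corollary first --- via the general fact that \(b \mapsto \Phi(\beta \initseg b)\) is an ordinal isomorphism \(\beta \simeq \pa*{\totalspace(\Phi\,\beta),{\in}}\), instantiated at \(\beta \colonequiv \Psi(\Vset{A}{f})\) using \cref{Psi-section-of-Phi} --- and then read off the theorem by composing with \cref{quotient-is-image-qua-ordinals,quotient-is-image-qua-sets}. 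This is not circular, since all the lemmas you invoke precede \cref{rank-as-quotient} in the paper, and your auxiliary fact is sound: injectivity of \(b \mapsto \beta \initseg b\) is the uniqueness clause of \cref{simulation-properties}, and the order correspondence follows from \cref{Phi-properties,iterated-initial-segments} as you say. What your route buys is conceptual economy --- the general statement \(\beta \simeq \totalspace(\Phi\,\beta)\) is arguably the real content, and the theorem and its corollary both fall out of it --- at the price of a detour through the large universe \(\U^+\) (harmless, as you note, because the composite equivalence is between two \(\U\)-types, so univalence still yields equality in \(\Ord_\U\)). What the paper's route buys is a proof that stays entirely inside \(\V\) and needs no statement about \(\totalspace\) at all until the corollary.
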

\begin{proof}
  Because \(\Phi\) is injective with inverse \(\Psi\)
  (\cref{Phi-properties,Psi-section-of-Phi}), it suffices to show that
  \[
    \Phi\pa*{A/{\sim_\U}} = \Vset{A}{f}.
  \]
  By definition of \(\Phi\) and equality on \(\V\), it is enough to prove
  \[
    \Phi\pa*{A/{\sim_\U} \initseg [a]} = f \, a
  \]
  for every \(a : A\).
  We slightly generalize this statement so that it becomes amenable to a proof
  by transfinite induction on \(A/{\sim_\U}\). Namely, we show that for every
  \(a' : A/{\sim_\U}\) and every \(a : A\), if \(a' = [a]\), then
  \(\Phi\pa*{A/{\sim_\U} \initseg [a]} = f \, a\) holds.
  So suppose that we have \(a : A\). We first show that
  \(f \, a \subseteq \Phi\pa*{A/{\sim_\U} \initseg [a]}\).
  Now if \(x \in f \, a\), then there exists some \(b : A\) with \(x = f \, b\),
  because \(f \, a\) is a member of the transitive set \(\Vset{A}{f}\).
  But then \(f \, b = x \in f \, a\), so \([b] \prec [a]\) and hence
  \(\Phi\pa*{A/{\sim_\U} \initseg [b]} = f \, b = x\) by the induction hypothesis.
  Further, \(\Phi\pa*{A/{\sim_\U} \initseg [b]}\) is an element of
  \(\Phi\pa*{A/{\sim_\U} \initseg [a]}\), because \([b] \prec [a]\) and
  \(\pa*{A/{\sim_\U} \initseg [a]} \initseg [b] = A/{\sim_\U} \initseg [b]\)
  by~\cref{iterated-initial-segments}.
  Hence,
  \(x = \Phi\pa*{A/{\sim_\U} \initseg [b]} \in \Phi\pa*{A/{\sim_\U} \initseg
    [a]}\), as desired.
  For the other inclusion, suppose that
  \(x \in \Phi\pa*{A/{\sim_\U} \initseg [a]}\).
  By another application of~\cref{iterated-initial-segments}, we see that there
  exists some \(b : A\) such that \([b] \prec [a]\) and
  \(x = \Phi\pa*{A/{\sim_\U} \initseg [b]}\).
  Then \(x = \Phi\pa*{A/{\sim_\U} \initseg [b]} = f \, b\) by the induction hypothesis,
  but also \([b] \prec [a]\), so that \(x = f \, b \in f \, a\), as we wished to show.
\end{proof}

\begin{corollary}[\flink{Corollary-46}]\label{cor:rank-nonrecursively}
  For every \(x : \V_\U\), the ordinals \(\Psi(x)\) and \(\totalspace x\) are
  isomorphic, but not equal, because the latter lives in a larger universe.
\end{corollary}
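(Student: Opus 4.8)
The plan is to read the corollary off by combining \cref{rank-as-quotient}, \cref{quotient-is-image-qua-sets} and \cref{quotient-is-image-qua-ordinals}, using the $\U$-valued membership relation $\insmall$ and local smallness (\cref{cumulative-hierarchy-is-locally-small}) to bridge the small and large quotients. Since $\totalspace x$ carries the structure of a type-theoretic ordinal (ordered by $\in$) exactly when $x$ is a set-theoretic ordinal, as shown above, it is this case that concerns us. First I would reduce to $x \equiv \Vset{A}{f}$ with $\Vset{A}{f}$ a set-theoretic ordinal: every element of $\V$ is merely of this form, and ``$\Psi(x)$ and $\totalspace x$ are isomorphic as ordinals'' is a proposition because an isomorphism of ordinals is uniquely determined (cf.\ \cref{simulation-properties}), so it suffices to construct an ordinal isomorphism $\Psi(\Vset{A}{f}) \simeq \totalspace(\Vset{A}{f})$ under the standing assumption that $\Vset{A}{f}$ is a set-theoretic ordinal.

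I would then chain three steps. By \cref{rank-as-quotient}, $\Psi(\Vset{A}{f}) = (A/{\sim_\U}, {\prec_\U})$ on the nose. Next, the underlying-type equivalence $A/{\sim_\U} \simeq A/{\sim}$ of \cref{quotient-is-image-qua-sets}, sending $[a] \mapsto [a]$, is well-defined because $a \sim_\U b$ and $a \sim b$ agree by \cref{cumulative-hierarchy-is-locally-small}, and it preserves and reflects the order, since $[a] \prec_\U [b]$ unfolds to $f\,a \insmall f\,b$, which (again by \cref{cumulative-hierarchy-is-locally-small}) is equivalent to $f\,a \in f\,b$, i.e.\ to $[a] \prec [b]$; hence by \cref{iso-of-ordinals} it is an isomorphism of ordinals $(A/{\sim_\U},{\prec_\U}) \simeq (A/{\sim},{\prec})$. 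Finally, \cref{quotient-is-image-qua-ordinals} gives $(A/{\sim},{\prec}) = (\totalspace(\Vset{A}{f}),{\in})$, and composing yields the desired isomorphism. For the negative half I would simply observe that $\Psi(x)$ inhabits $\Ord_\U$ while $\totalspace x \colonequiv \Sigma(y : \V_\U).\,y \in x$ inhabits $\Ord_{\U^+}$, because $\V_\U : \U^+$, so the two do not even live in the same type and hence cannot be equal.

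I do not expect a genuine obstacle here: the corollary is bookkeeping on top of \cref{rank-as-quotient}. The two points needing mild care are checking that the bijection of \cref{quotient-is-image-qua-sets} is order-preserving in both directions — which is precisely where the $\U$-valued membership $\insmall$ and its agreement with $\in$ are used — and keeping the universe annotations straight, so that ``isomorphic but not equal'' is a substantive statement rather than a vacuous one.
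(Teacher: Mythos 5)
Your proposal is correct and follows essentially the same route as the paper: reduce by $\V$-induction (using that being isomorphic is a proposition) to the case $x \equiv \Vset{A}{f}$, then chain \cref{rank-as-quotient}, \cref{quotient-is-image-qua-sets} and \cref{quotient-is-image-qua-ordinals}. The only difference is that you spell out the order-preservation check for the middle equivalence $A/{\sim_\U} \simeq A/{\sim}$, which the paper leaves implicit in its citation of \cref{quotient-is-image-qua-sets}.
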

\begin{proof}
  Since we are proving a proposition, \(\V\)-induction implies that it is enough
  to prove that \(\Psi(\Vset{A}{f})\) and \(\totalspace(\Vset{A}{f})\) are
  isomorphic ordinals, for every \(A : \U\) and \(f : A \to \V\).
  But this holds by the following chain of isomorphisms of ordinals:
  \begin{align*}
    \Psi(\Vset{A}{f})
    &\,\simeq\, A/{\sim_\U} &&\text{(by~\cref{rank-as-quotient})} \\
    &\,\simeq\, A/{\sim}  &&\text{(by~\cref{quotient-is-image-qua-sets})} \\
    &\,\simeq\, \totalspace(\Vset{A}{f}) &&\text{(by~\cref{quotient-is-image-qua-ordinals})}.
    \qedhere
  \end{align*}
\end{proof}

\section{Generalizing from ordinals to sets}\label{sec:mewo}

Since we now understand the subtype of $\V$ that consists of exactly the
hereditarily transitive sets, it is a natural goal to characterize the full type
$\V$
\begin{wrapfigure}{l}{0.16\textwidth}
\begin{equation*} 
  \begin{tikzcd}
    \Vord \ar[d,hookrightarrow] \ar[r,"\simeq"] & \Ord \ar[d,hookrightarrow] \\
    \V \ar[r,"\simeq"] & ?
  \end{tikzcd}
\end{equation*}
\end{wrapfigure}
to complete the square on the left, by generalizing the notion of type-theoretic
ordinals.
Since arbitrary $\V$\nobreakdash-sets are not necessarily transitive, we
certainly need to give up transitivity.  However, as discussed in the
introduction, doing so and simply considering \emph{extensional wellfounded
  relations} is insufficient to complete the square.  Our solution is to further
equip them with \emph{covering markings}.  We then develop a generalization of
the theory of type-theoretic ordinals that matches $\V$.

Giving up transitivity as an assumption, we at times need to consider the
transitive and reflexive-transitive closure of a given relation $<$, i.e., the
smallest proposition-valued such relations that include $<$.  We denote them by
$\tc$ and $\trc$ respectively.  In type theory, it is standard to implement $\tc$ and $\trc$ using inductive
families describing sequences of steps, which then can be propositionally
truncated to ensure proof-irrelevance.

\subsection{Mewos: marked extensional wellfounded order relations}

We start by defining the generalization of type-theoretic ordinals that we need
to complete the above square.

\begin{definition}[\flink{Definition-47} Mewo]\label{def:ewos}
  A \emph{marked extensional wellfounded order} (\emph{mewo}) is a triple
  $(X,<,\mar)$, where $X$ is a type, $<$ is a binary relation on $X$ that is
  extensional, wellfounded, and valued in propositions, i.e., $(X, <)$ is an
  ensemble in the sense of Taylor~\cite{Taylor1996}, and $\mar : X \to \Prop$ is
  a prop-valued predicate on $X$ (called a \emph{marking}).

  We say that $x:X$ is \emph{marked} if $\mar(x)$, and \emph{covered} if there
  exists a marked $x_0$ such that $x \trc x_0$.
  A \emph{covered mewo} is a mewo where every element is covered.
\end{definition}

We write $\mewo$ for the type of mewos, and $\cewo$ for its subtype of covered
mewos.  From now on, we keep the order and the marking implicit, overloading the
symbols $<$ and $\mar$ whenever required, and denote a mewo only by its carrier
$X$.  The \emph{subtype of marked elements} of $X$ is the total space of $\mar$,
\[
  \Mar_X \defeq \Sigma(x:X). \mar(x)
\]
and we implicitly apply the first projection to treat elements of $\Mar_X$ as
elements of $X$.  With this convention, a mewo is covered if we can show
\[
  \forall(x:X). \exists (x_0:\Mar_X). x \trc x_0.
\]

\begin{remark}\label{rem:mewo-equality-small}
  As for ordinals, the extensionality of the relation $<$ implies that $X$ is necessarily a set.
  Further note that, by univalence, an equality $X = Y$ between mewos is an
  equivalence $e : X \simeq Y$ that preserves and reflects both order and
  marking, i.e., satisfies $(x_1 < x_2) \leftrightarrow (e \, x_1 < e \, x_2)$
  and $\mar(x) \leftrightarrow \mar(e \, x)$.  The identical characterization
  holds if the mewos in consideration are covered since coveredness is a
  propositional property.
\end{remark}

Our second main result is that $\cewo$ is the missing corner in the discussed
square as shown on the right, cf.
\begin{wrapfigure}[7]{r}{0.25\textwidth}
  \begin{equation}\label{eq:square-completed}
    \begin{tikzcd}
      \Vord \ar[d,hookrightarrow] \ar[r,"\simeq"] & \Ord \ar[d,hookrightarrow] \\
      \V \ar[r,"\simeq"] & \cewo
    \end{tikzcd}
  \end{equation}
\end{wrapfigure}
\cref{thm:second-main-result}.  This means that a covered mewo simultaneously
behaves like a generalized type-theoretic ordinal \emph{and} a set in $\V$.  The
first connection is easy to make precise:

\begin{example}[Ordinals as covered mewos]\label{ex:canonical-ewo-from-ord}
  Given a type-theoretic ordinal, we get a covered mewo by forgetting
  the transitivity of the order and marking everything.
\end{example}

We will later see that, if we view a mewo as a $\V$-set, it is exactly the
marked elements that become elements of the set (while the unmarked ones become
elements of elements of \ldots).  Therefore, for type-theoretic ordinals,
everything will be an element of the corresponding $\V$-set.  This is already
determined by the top horizontal map in \eqref{eq:square-completed}, i.e., the
map $\Phi$ from \cref{def:Phi}.  Based on this observation,
\cref{ex:canonical-ewo-from-ord} guides and motivates much of our theory of
mewos.

\subsection{Order relations between mewos}

The main concepts that we need to generalize from type-theoretic ordinals are
the relations $\leq$ and $<$ between mewos. The above square
\eqref{eq:square-completed} means that these relations necessarily need to
correspond to the relations $\subseteq$ and $\in$ between $\V$-sets.  To begin,
the concept of a \emph{simulation} between type-theoretic ordinals is
straightforward to generalize to mewos:
\begin{definition}[\flink{Definition-50} Simulation, \({\leq}\)]
  Given mewos $X$ and $Y$, a function $f : X \to Y$ is a \emph{simulation} if it
  fulfills the following properties:
  \begin{enumerate}[(i)]
  \item it preserves the markings: $\forall x. \mar(x) \to \mar(f\, x)$;
  \item it is monotone: $x_1 < x_2 \to f \, x_1 < f \, x_2$;
  \item it has the initial segment property, i.e., its image is downwards closed
    in a strong sense:
    \begin{equation}\label{eq:simu-property-for-mewo}
      \forall x_2. \forall (y < f\, x_2). \exists(x_1 < x_2). f \, x_1 = y.
    \end{equation}
  \end{enumerate}
  We write $X \leq Y$ for the type of simulations.
\end{definition}

An example of a function that fails to be a simulation precisely because it does
not preserve markings is the identity function on the order \(0 < 1\), if we mark
both \(0\) and \(1\) in the domain, but only \(1\) in the codomain.
In set theory, this corresponds to the fact that
\(\set{\emptyset,\set{\emptyset}}\) is not a subset of
\(\set{\set{\emptyset}}\).

\begin{lemma}[\flink{Lemma-51}]\label{lem:standard-properties-of-simus}
  For mewos $X$, $Y$, and $Z$, we have the following properties of simulations:
  \begin{enumerate}[(i)]
  \item The underlying function $f$ of a simulation $X \leq Y$ is injective:
    $f\, x_1 = f\, x_2$ implies $x_1 = x_2$.
  \item There is at most one simulation between any two mewos, i.e., $X \leq Y$
    is a proposition.
  \item Simulations are antisymmetric, i.e.
    \[
      X \leq Y \to Y \leq X \to X = Y.
    \]
  \item We have the trivial simulation $X \leq X$ and simulations can be
    composed, i.e.
    \[
      X \leq Y \to Y \leq Z \to X \leq Z.
    \]
  \item $X = Y$ is a proposition, i.e., $\mewo$ is a set.
  \item In the property \eqref{eq:simu-property-for-mewo} of the definition of a
    simulation, the symbol $\exists$ can equivalently be replaced by $\Sigma$.
  \end{enumerate}
\end{lemma}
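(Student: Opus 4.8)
The plan is to establish the six items in the order (i), (vi), (ii), (iv), (iii), (v), since the later ones build on the earlier. The two arguments with real content are (i) and (ii); both are wellfounded inductions, essentially identical to the corresponding facts about simulations of type-theoretic ordinals. The marking is just carried along and never obstructs anything, but one must be vigilant never to invoke transitivity of $<$, since mewos are not assumed transitive.

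For (i), I would do wellfounded induction on $X$. To show $f\,x_1 = f\,x_2 \to x_1 = x_2$, I appeal to extensionality of $X$ and prove $z < x_1 \iff z < x_2$ for every $z$. If $z < x_1$ then $f\,z < f\,x_1 = f\,x_2$ by monotonicity, so the initial segment property of $f$ supplies some $z' < x_2$ with $f\,z' = f\,z$ (the $\exists$ may be eliminated as the goal $z < x_2$ is a proposition); the induction hypothesis at $z$ then gives $z = z'$, hence $z < x_2$. The converse is symmetric, using the initial segment property of $f$ at $x_1$. For (vi), note that by injectivity (i) the type $\Sigma(x_1 : X).\,(x_1 < x_2)\times(f\,x_1 = y)$ is a proposition: two inhabitants agree in their first coordinate by (i), and the rest is propositional because $<$ is prop-valued and $Y$ is a set. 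A proposition is equivalent to its propositional truncation, which here is exactly the $\exists$-clause \eqref{eq:simu-property-for-mewo}, so the $\Sigma$ can replace the $\exists$.

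For (ii), being a simulation is a proposition ($\mar$, $<$, and the $\exists$-clause are all prop-valued, and $X \to Y$ is a set since $Y$ is), so it remains to show the underlying function is unique. Given simulations $f, g : X \to Y$, I would prove $f\,x = g\,x$ by wellfounded induction on $X$ using extensionality of $Y$: if $z < f\,x$, the initial segment property of $f$ gives $x' < x$ with $f\,x' = z$, the induction hypothesis gives $g\,x' = f\,x' = z$, and monotonicity of $g$ gives $z < g\,x$; symmetrically for $z < g\,x$. For (iv), the identity is plainly a simulation, and for a composite $g\circ f$ the three clauses follow by composing the corresponding clauses of $f$ and $g$, the nested existential in the initial segment property being discharged by truncation recursion. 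Then (iii) is formal: from $f : X \leq Y$ and $g : Y \leq X$, the composites $g\circ f$ and $f\circ g$ are simulations by (iv), hence equal to the respective identities by uniqueness (ii); so $f$ is an equivalence with inverse $g$, and since $f$ and $g$ both preserve order and marking, $f$ also reflects them, which by \cref{rem:mewo-equality-small} makes $f$ an identification $X = Y$. Finally, for (v): by \cref{rem:mewo-equality-small} (i.e.\ by univalence), $X = Y$ is equivalent to the type of order- and marking-preserving-and-reflecting equivalences $X \simeq Y$; each such equivalence is in particular a simulation, so by (ii) that type has at most one element, hence $\mewo$ is a set.

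I do not expect any genuine obstacle beyond bookkeeping: the only non-formal steps are the two wellfounded inductions in (i) and (ii), which are the mewo analogues of standard facts, and the one point demanding care throughout is to avoid any appeal to transitivity of the order.
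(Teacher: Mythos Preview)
Your proposal is correct and follows precisely the approach the paper indicates: the paper's proof is a one-line citation stating that ``the arguments are copies of the proofs for type-theoretic ordinals'' from the HoTT book, and you have faithfully spelled out those arguments (the two wellfounded inductions for injectivity and uniqueness, the formal derivations of the remaining items), while correctly noting that transitivity is never needed. There is nothing to add.
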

\begin{proof}
  The arguments are copies of the proofs for type-theoretic ordinals
  (cf.~\cite[Lem~10.3.12, Cor~10.3.13\&15, Lem~10.3.16]{HoTTBook}).
\end{proof}

\begin{definition}[\flink{Definition-52} Initial segment, \({X \down x}\)]%
  \label{def:mewo-initial-segment}
  If $X$ is a mewo and $x : X$, then the \emph{initial segment} $X \down x$ is
  the mewo of elements transitively below $x$, with the canonical inherited
  order. The marked elements are the immediate predecessors \nolinebreak of \nolinebreak\(x\).
\end{definition}

That is, in detail, the carrier of \(X \down x\) is given by
the type ${\Sigma(x' : X).}{(x' \tc x)}$, the order by
$(x_1, s) < (x_2, t) \defeq {(x_1 < x_2)}$, and the marking by 
$\mar (x_1,s) \defeq (x_1 < x)$.

\begin{lemma}[\flink{Lemma-53}]
  The mewo $(X \down x)$ is covered for every $x$.
\end{lemma}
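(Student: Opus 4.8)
The plan is to unfold the definition of coveredness for the initial segment $X \down x$ and check that it holds for every element. Recall that a carrier element of $X \down x$ is a pair $(x', s)$ with $s : x' \tc x$, and that $(x',s)$ is marked precisely when $x' < x$. So I must show that for every $(x', s) : X \down x$, there exists a marked element $(x_0, t)$ of $X \down x$ with $(x', s) \trc (x_0, t)$.

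First I would observe that the order on $X \down x$ is inherited from $X$ in the sense that $(x_1, s) < (x_2, t)$ iff $x_1 < x_2$, and hence the same holds for the transitive and reflexive-transitive closures: $(x_1, s) \trc (x_2, t)$ iff $x_1 \trc x_2$ (using that any witnessing chain of steps in $X$ below $x$ lifts to a chain in $X \down x$, since the intermediate points are still transitively below $x$ by composing with the steps already taken — this is where I would have to be slightly careful, threading the $\tc x$ witnesses along the chain). Next, I would prove the key fact by induction on the witness $s : x' \tc x$. Since $x' \tc x$ is the propositional truncation of a nonempty sequence of $<$-steps from $x'$ to $x$, and we are proving a proposition (coveredness is prop-valued), I may assume an actual such sequence. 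If the sequence has length one, then $x' < x$, so $(x', s)$ is \emph{itself} marked, and reflexivity of $\trc$ finishes the case. If the sequence is longer, it factors as $x' < x''$ followed by a chain $x'' \tc x$; then $(x'', s'')$ (with the appropriate shortened witness $s'' : x'' \tc x$) is a valid element of $X \down x$, we have $(x', s) < (x'', s'')$ hence $(x', s) \trc (x'', s'')$, and by the induction hypothesis $(x'', s'')$ is covered, i.e. reaches a marked element via $\trc$; composing gives $(x', s) \trc (x_0, t)$ with $(x_0, t)$ marked.

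Actually, the cleanest phrasing avoids the induction: for any $(x', s) : X \down x$, the witness $s$ says $x' \tc x$, and one can truncate the sequence at its \emph{first} step $x' < x_1$ (with $x_1 \tc x$ possibly trivial, i.e. $x_1 = x$ allowed if we read $\tc$ as at-least-one-step). The element $(x_1, s')$ where $s' : x_1 \trc x$ is not quite in the carrier unless $x_1 \tc x$, so the boundary case $x_1 = x$ needs separate treatment — but in that case $x' < x$ directly, so $(x', s)$ is marked. In all other cases $(x_1, s') \in X \down x$ is marked (since $x_1 < x$ follows from the remaining steps being nonempty, or rather from $x_1 \tc x$... here I need to be careful again about whether a single remaining step suffices). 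To sidestep these boundary subtleties I would in fact just run the induction on the step-sequence as above, which handles the base and inductive cases uniformly.

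The main obstacle is purely bookkeeping: correctly manipulating the $\tc$- and $\trc$-witnesses as one moves along a chain inside the initial segment, making sure each intermediate point genuinely lands in $\Sigma(x':X).(x' \tc x)$ and that the lifted chain is a valid $\trc$-chain there. There is no real mathematical difficulty — the statement is essentially immediate once the definitions are unwound — so I would keep the proof short, remark that we are proving a proposition (hence may destruct the truncated chains), and do the two-case induction on the length of the chain witnessing $x' \tc x$.
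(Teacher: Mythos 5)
Your proof is correct and follows essentially the same route as the paper: since coveredness is a proposition, one may destruct the truncated witness $s : x' \tc x$ into an explicit chain, and the element of that chain sitting immediately below $x$ is marked by the definition of the marking on $X \down x$. The paper simply names that last element $x_n$ directly rather than extracting it by induction on the chain length, but this is only a difference in exposition, not in substance.
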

\begin{proof}
  Given $(x_1,s)$ with $s : (x_1 \tc x)$, we wish to show that there exists a
  marked $p$ such that $x_1 \trc p$.  Since we are proving a proposition, we may
  assume that $s$ is a sequence $x_1 < \ldots < x_n < x$, and $x_n$ is marked by
  definition.
\end{proof}

For type-theoretic ordinals, a bounded simulation is a simulation whose domain
is equivalent to an initial segment under a certain element of its codomain.
For mewos, we ensure that the latter property is true by definition. Some
caveats and subtleties are discussed in \cref{subsec:mewo-simulations-failures}
below.

\begin{definition}[\flink{Definition-54} Bounded simulation, \(<\)]%
  \label{def:bounded-sim}
  A \emph{bounded simulation} between mewos $X$ and $Y$ is a pair $(y,e)$, where
  $y : M_Y$ is a marked element in $Y$ and $e : X \simeq (Y \down y)$ an
  equivalence of mewos.  We write $X<Y$ for the type of such pairs.
\end{definition}
It is important that the above definition specifies that $y$ is marked, in line
with our earlier explanation that exactly the marked elements of a mewo
correspond to elements of a $\V$\nobreakdash-set.  We will later
(\cref{cor:bounded-sims-are-unique}) see that the type $X<Y$ is a proposition.
For now, let us observe the following:

\begin{lemma}[\flink{Lemma-55}]\label{lem:<-is-wellfounded}
  The relation $<$ is wellfounded on $\mewo$ and $\cewo$.
\end{lemma}
\begin{proof}
  Since $\cewo \hookrightarrow \mewo$ is order-preserving, it suffices to check
  that $\mewo$ is wellfounded.  Thus, we need to show that every mewo $X$ is
  accessible, i.e., that all its predecessors are accessible.  By definition,
  every predecessor is of the form $X \down x_0$ for some marked $x_0 : X$.

  Exploiting that the order on $X$ itself is wellfounded, we show by transfinite
  induction on $x$ the more general statement that every $X \down x$ is
  accessible, no matter whether $x$ is marked.  Thus, assume that, for all
  $z<x$, we have that $X \down z$ is accessible.  We need to prove that all
  predecessors of $X \down x$, i.e., all $(X \down x) \down (x_1,p)$, are
  accessible.  An adaption of \cref{iterated-initial-segments} for mewos shows
  that this mewo is equal to $X \down x_1$, which is accessible by the induction
  hypothesis.
\end{proof}

\subsection{Subtleties caused by markings}\label{subsec:mewo-simulations-failures}

Observing how bounded simulations interact with other (possibly bounded)
simulations reveals the complete change of view we are forced to make when
generalizing from ordinals to mewos.  This is indeed intended since we claim
(and prove) that $\leq$ and $<$ correspond to $\subseteq$ and $\in$, and for
arbitrary sets, the latter relations fail to have many properties that one might
associate with the former relations.

The first point is that $X < Y$ generally does not imply $X \leq Y$.  A bounded
simulation $X < Y$ gives rise to a function $f : X \to Y$ via the composition of
the function underlying $e : X = (Y \down y)$ and the first projection
$(Y \down y) \to Y$.  However, the first projection is in general not a
simulation as it may not preserve markings.
A counter-example is the covered mewo $\circ \leftarrow \bullet$, i.e., the mewo
with two comparable elements, the larger of which is marked (denoted by
$\bullet$), while the smaller is not (denoted by $\circ$).  Since
$(\circ \leftarrow \bullet) \down \bullet$ is, by definition, simply $\bullet$,
there is a bounded simulation from $\bullet$ to $\circ \leftarrow \bullet$.
However, there is no simulation as the marking is not preserved. The crux here
is that the operation $\down$ changes the marking.
The translation to the language of sets is that $\{\emptyset\}$ is an element,
but not a subset, of $\{\{\emptyset\}\}$.

Secondly, the \(<\) relation on mewos is not transitive. The principle that
``an initial segment of an initial segment is an initial segment'' does not
hold.  A simple counter-example is the empty mewo $\emptyset$ together with
$\bullet$ and $\circ \leftarrow \bullet$.  We have bounded simulations
$\emptyset < \bullet < (\circ \leftarrow \bullet)$, but no bounded simulation
$\emptyset < (\circ \leftarrow \bullet)$.  In this case, the translation is that
$\emptyset$ is an element of $\{\emptyset\}$, which itself is an element of
$\{\{\emptyset\}\}$; the latter however does not have $\emptyset$ as an element.

For technical reasons, it is occasionally useful to use mewos that ensure that
the discussed properties do hold. This can be achieved by changing the marking
to the trivial one:
\begin{definition}[\flink{Definition-56} Trivializing the marking,
  \(\overline X\)]\label{def:mark-all}
  If $X$ is a mewo, we write $\overline X$ for the mewo that has the same
  carrier and order as $X$, but where every element is marked.
\end{definition}

In the language of $\V$-sets, $\overline X$ is the union of all the sets
represented by elements (of elements of elements \ldots) of~$X$.  Note that
$\overline X$ is still not transitive and thus not a type-theoretic
ordinal. Nevertheless, this operation allows us to recover several important
properties of type-theoretic ordinals:

\begin{lemma}[\flink{Lemma-57}]
  For given mewos $X$, $Y$, and $Z$, we have:
  \begin{enumerate}[(i)]
  \item for every $x:X$, the first projection $(X \down x) \to \overline X$ is a
    simulation;
  \item $X < Y \to X \leq \overline Y$;
  \item $X < Y \to Y < Z \to X < \overline Z$.
  \end{enumerate}
\end{lemma}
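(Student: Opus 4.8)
The plan is to establish (i) by a direct check and then bootstrap (ii) and (iii) from it, relying on two elementary observations. The first is that the operation $\down$ depends only on the carrier and the order of a mewo, never on its marking: the marking of $X \down x$ sends $(x_1,s)$ to $(x_1 < x)$, which is phrased purely in terms of $<$. Consequently $\overline X \down x$ and $X \down x$ are the same mewo for every $x : X$. The second observation is that an equivalence of mewos is, in particular, a simulation; so by \cref{rem:mewo-equality-small} an $X < Y$ yields an \emph{equality} $X = Y \down y$ with $y$ marked in $Y$, which we will freely use to transport hypotheses.

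For (i), let $\pi : (X \down x) \to \overline X$ be the first projection $(x',s) \mapsto x'$. Marking preservation is trivial since every element of $\overline X$ is marked, and $\pi$ is monotone by the definition of the order on $X \down x$. For the initial segment property, given $(x_2,t) : X \down x$ and $y < \pi(x_2,t) = x_2$, note that $y < x_2$ together with the datum $t : x_2 \tc x$ gives some $s : y \tc x$ by transitivity of $\tc$; then $(y,s) < (x_2,t)$ in $X \down x$ and $\pi(y,s) = y$, as required.

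For (ii), I would use the second observation to assume $X = Y \down y$ with $y$ marked; then $\pi : X = (Y \down y) \to \overline Y$ is a simulation by (i), so $X \leq \overline Y$. For (iii), I would similarly first turn $Y < Z$ into $Y = Z \down z$ with $z$ marked in $Z$, so that the hypothesis $X < Y$ becomes $X = (Z \down z) \down (z_0,p)$ where $(z_0,p)$ is a marked element of $Z \down z$, that is, $z_0 < z$ holds in $Z$. By the mewo analogue of \cref{iterated-initial-segments} --- the variant already invoked in the proof of \cref{lem:<-is-wellfounded} --- we have $(Z \down z) \down (z_0,p) = Z \down z_0$, and by the first observation $Z \down z_0 = \overline Z \down z_0$. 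Since $z_0$ is (trivially) marked in $\overline Z$, the equality $X = \overline Z \down z_0$ exhibits a bounded simulation $X < \overline Z$.

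No step is technically hard; the point demanding care is the marking bookkeeping in (iii). The element $z_0$ produced there is marked in the \emph{initial segment} $Z \down z$, which means only that $z_0 < z$; it need not be marked in $Z$ itself, so one genuinely cannot conclude $X < Z$. Passing to $\overline Z$ is precisely what repairs this, consistent with the failure of transitivity of $<$ recorded in \cref{subsec:mewo-simulations-failures}.
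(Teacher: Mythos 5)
Your proof is correct and takes essentially the same route as the paper, which simply observes that the marking conditions in the codomain become vacuous after applying $\overline{(\cdot)}$ so that the standard arguments for type-theoretic ordinals go through; you have fleshed out exactly that argument, including the (correct) key observations that $\down$ is independent of the marking and that the mewo analogue of \cref{iterated-initial-segments} handles the bookkeeping in (iii).
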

\begin{proof}
  As the conditions involving markings now are vacuously true, the arguments for
  type-theoretic ordinals apply.
\end{proof}

As a demonstration of how this is useful, we can show the following technical lemma:

\begin{lemma}[\flink{Lemma-58}]\label{lem:down-is-injective}
  Given a mewo $X$, the function $x \mapsto X \down x$ is injective:
  $(X \down x_1) = (X \down x_2)$ implies $x_1 = x_2$.
\end{lemma}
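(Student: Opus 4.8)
The plan is to prove the statement by transfinite induction on $x_1$ with respect to $<$, establishing the prop-valued family
\[
  P(x_1) \defeq \forall (x_2 : X).\, \pa*{(X \down x_1) = (X \down x_2) \to x_1 = x_2}
\]
(it is prop-valued because $X$ is a set by \cref{rem:mewo-equality-small}). Fixing $x_1$ together with the induction hypothesis, and given $x_2$ and an isomorphism $e : (X \down x_1) \simeq (X \down x_2)$ of mewos --- that is, by \cref{rem:mewo-equality-small}, an equivalence preserving and reflecting both order and marking --- I would use extensionality of $<$ on $X$ to reduce the goal $x_1 = x_2$ to proving $z < x_1 \leftrightarrow z < x_2$ for every $z : X$.

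For the forward direction, suppose $z < x_1$, and consider the element $(z, p) : (X \down x_1)$, where $p$ is the one-step witness of $z \tc x_1$. By \cref{def:mewo-initial-segment}, the marking of $X \down x_1$ at $(z,p)$ is exactly $z < x_1$, so $(z,p)$ is marked; since $e$ preserves markings, $e(z,p) = (z',q)$ is a marked element of $X \down x_2$, and hence $z' < x_2$. Moreover, being an order-isomorphism, $e$ restricts to an isomorphism of initial segments $(X \down x_1) \down (z,p) \simeq (X \down x_2) \down (z',q)$, and the mewo analogue of \cref{iterated-initial-segments} --- the same fact used in the proof of \cref{lem:<-is-wellfounded} --- identifies the two sides with $X \down z$ and $X \down z'$, respectively. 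Thus $(X \down z) = (X \down z')$, and applying the induction hypothesis at $z < x_1$ gives $z = z'$, whence $z < x_2$. The backward direction is symmetric, applying $e^{-1}$ and then the induction hypothesis at the element of $X$ below $x_1$ that it produces. Note that coveredness of $X$ plays no role.

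The step that needs the most care is the mewo analogue of \cref{iterated-initial-segments}, i.e.\ that $(X \down x) \down (x',p) = (X \down x')$ for any $p : x' \tc x$: besides matching carriers and orders, one must check that the markings agree, and they do precisely because the marking of $(X \down x) \down (x',p)$ at $(y,\_)$ is $y < x'$, which is also the marking of $X \down x'$. I would also emphasize that the induction is arranged so that the induction hypothesis is only ever invoked at an element strictly below $x_1$; this is the reason for quantifying over $x_2$ \emph{inside} $P$ rather than inducting on $x_2$ as well. Once the iterated-initial-segment fact is in place, the remainder is a direct transcription of the extensionality argument that proves $\alpha \initseg a = \alpha \initseg b \Rightarrow a = b$ for type-theoretic ordinals.
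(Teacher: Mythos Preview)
Your argument is correct, but it takes a genuinely different route from the paper's. The paper avoids transfinite induction entirely: given a simulation $f : (X \down x_1) \leq (X \down x_2)$, it observes that both first projections $X \down x_i \to \overline X$ (into the mewo with trivialized marking, \cref{def:mark-all}) are simulations, so by uniqueness of simulations (\cref{lem:standard-properties-of-simus}) the triangle commutes and $f$ acts as the identity on underlying elements of $X$. Hence any $z < x_1$, being marked in $X \down x_1$, is sent to itself and is marked in $X \down x_2$, i.e.\ $z < x_2$; extensionality finishes. Your approach instead reconstructs ``$f$ acts as the identity on first components'' by transfinite induction, via the iterated-initial-segment identity and the induction hypothesis. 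The paper's proof is shorter and is the intended payoff of the $\overline X$ machinery introduced just before this lemma; your proof is more self-contained (it needs neither $\overline X$ nor uniqueness of simulations) and makes explicit the role of the iterated-segment identity already used in \cref{lem:<-is-wellfounded}.
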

\begin{proof}
  We show that $f: (X \down x_1) \leq (X \down x_2)$ implies that any
  predecessor of $x_1$ is also a predecessor of $x_2$; extensionality of $X$
  then gives the claimed injectivity.
  To do this, let us consider the following diagram:
  \[
      \begin{tikzpicture}[x=3cm,y=-1cm,baseline=(current bounding box.center)]
        \node (X1) at (0,0) {$X \down x_1$};
        \node (X) at (.5,1) {$\overline X$};
        \node (X2) at (1,0) {$X \down x_2$};

        \draw[->] (X1) to node[above] {$f$} (X2);
        \draw[->] (X1) to node[below left] {$\mathsf{fst}$} (X);
        \draw[->] (X2) to node[below right] {$\mathsf{fst}$} (X);
      \end{tikzpicture}
  \]
  All maps are simulations and, by uniqueness of simulations
  (\cref{lem:standard-properties-of-simus}), the diagram necessarily commutes.
  Given a predecessor ${x < x_1}$, it is marked in $X \down x_1$ by
  construction, and since $f$ preserves markings, $f \, x$ is marked as well,
  i.e., we have $f \, x < x_2$.  But since the diagram commutes, we have
  $f\, x = x$ as elements of $X$.
\end{proof}

A consequence is that bounded simulations are unique:
\begin{corollary}[\flink{Corollary-59}]\label{cor:bounded-sims-are-unique}
  For mewos $X$ and $Y$, the type $X < Y$ of bounded simulations is a
  proposition.
\end{corollary}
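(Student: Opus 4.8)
The plan is to reduce the uniqueness of bounded simulations to the injectivity of the map $x \mapsto Y \down x$ established in \cref{lem:down-is-injective}, together with the fact that there is at most one equivalence of mewos between any two mewos (a consequence of uniqueness of simulations, \cref{lem:standard-properties-of-simus}(ii), since an equivalence of mewos is in particular a simulation whose inverse is also a simulation). A bounded simulation $X < Y$ is a pair $(y,e)$ with $y : \Mar_Y$ and $e : X \simeq (Y \down y)$. So suppose we are given two such pairs $(y_1,e_1)$ and $(y_2,e_2)$; we must produce an identification of pairs.

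First I would show $y_1 = y_2$. From $e_1$ and $e_2$ we obtain an equivalence of mewos $(Y \down y_1) \simeq X \simeq (Y \down y_2)$, hence by univalence (cf.\ \cref{rem:mewo-equality-small}) an equality $(Y \down y_1) = (Y \down y_2)$ of mewos. By \cref{lem:down-is-injective} applied to $Y$, this gives $y_1 = y_2$ as elements of $Y$; since $\mar$ is prop-valued, this upgrades to an equality $y_1 = y_2$ in $\Mar_Y$. Transporting along this equality, it remains to identify the two equivalences $e_1$ and $e_2$, now both of type $X \simeq (Y \down y_1)$. Since $Y \down y_1$ is a mewo, its order relation is extensional and hence (\cref{rem:mewo-equality-small}) its carrier is a set; an equivalence of mewos is a $\Sigma$-type whose first component is an equivalence of the underlying sets satisfying prop-valued conditions, so the type of equivalences of mewos between two mewos is itself a set, and by \cref{lem:standard-properties-of-simus}(ii) it has at most one element (an equivalence of mewos being in particular a simulation). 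Therefore $e_1 = e_2$ over the identification $y_1 = y_2$, and the two bounded simulations are equal.

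To package this cleanly as "the type $X < Y$ is a proposition'', I would phrase the above as: the map $(y,e) \mapsto y$ from $X < Y$ to $\Mar_Y$ has propositional fibres (a fibre being a type of mewo-equivalences $X \simeq (Y \down y)$, which is a proposition by uniqueness of simulations), and its image is a subsingleton of $\Mar_Y$ by the injectivity argument above; a type that admits a map with propositional fibres onto a subsingleton is itself a proposition.

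The main obstacle is the bookkeeping around the dependency of $e$ on $y$: once $y_1 = y_2$ is known, one must transport $e_2$ (or $e_1$) along this equality before comparing, and argue that the comparison lands in a proposition. This is routine given that the relevant carriers are sets and simulations are unique, but it is the step where one has to be slightly careful to invoke \cref{lem:standard-properties-of-simus}(ii) for the \emph{right} pair of mewos (namely $X$ and $Y \down y_1$) rather than for $X$ and $Y$.
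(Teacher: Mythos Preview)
Your proposal is correct and follows essentially the same route as the paper: use \cref{lem:down-is-injective} to identify the first components, then argue that the second components live in a proposition. The paper's version is slightly more economical in the second step, invoking directly that $\mewo$ is a set (\cref{lem:standard-properties-of-simus}(v)) so that the identity type $X = (Y \down y)$ is a proposition, rather than re-deriving propositionality of the type of mewo-equivalences from uniqueness of simulations.
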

\begin{proof}
  By definition, $X < Y \equiv \Sigma(y : Y). (X = (Y \down y))$.  Assume
  $(y, p), (y', q) : X < Y$. By the above lemma, we then have $y = y'$ since
  $(Y \down y) = X = (Y \down y')$, and $p = q$ since $\mewo$ is a set. Hence
  $(y, p) = (y', q)$, as desired.
\end{proof}

\subsection{Simulations and coverings}

As we have seen, bounded simulations and simulations are tricky to compare.  The
first step towards improving this situation is to characterize a simulation via
initial segments:

\begin{lemma}[\flink{Lemma-60}]\label{lem:sim-equal-initial-segments-new}
  Let $X$ and $Y$ be mewos. Further, let $f : X \to Y$ be a function between the
  carriers that preserves markings, i.e., such that $\mar(x) \to \mar(f \, x)$.  The
  following are equivalent:
  \begin{enumerate}[(i)]
  \item\label{item:f-simu-vs-partial-simu-new-1} $f$ is a simulation.
  \item\label{item:f-simu-vs-partial-simu-new-2} for all $x:X$, we have
    $(X \down x) = (Y \down (f \, x))$.
  \end{enumerate}
\end{lemma}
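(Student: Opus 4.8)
The plan is to prove the two implications separately, and throughout to use \cref{rem:mewo-equality-small} to translate an equality of mewos $X \down x = Y \down (f\, x)$ into an equivalence of their carriers that preserves and reflects both the order and the marking. Recall (\cref{def:mewo-initial-segment}) that the marking on $X \down x$ sends $(x',s)$ to the proposition $x' < x$, and similarly for $Y \down (f\, x)$; this is the fact that links the marking-data of initial segments back to $f$.

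For the direction from (i) to (ii), I would fix $x : X$ and construct the evident map $g : (X \down x) \to (Y \down f\, x)$ sending $(x', s)$ to $(f\, x', s')$, where $s' : f\, x' \tc f\, x$ is obtained by pushing the chain $s : x' \tc x$ through $f$ using monotonicity (well-defined since $\tc$ is prop-valued). Monotonicity of $f$ immediately gives that $g$ preserves order and, since markings of initial segments are given by $<$, also the marking. For the reflections I would argue that simulations reflect order: if $f\, x_1 < f\, x_2$, the initial segment property of $f$ gives some $x_1' < x_2$ with $f\, x_1' = f\, x_1$, whence $x_1' = x_1$ by injectivity of simulations (\cref{lem:standard-properties-of-simus}), so $x_1 < x_2$. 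Finally $g$ is an equivalence: it is injective because $f$ is, and it is surjective because --- surjectivity being a proposition --- one may assume a witness of $y \tc f\, x$ is a concrete chain $y < \dots < f\, x$ and pull it back step by step using the initial segment property of $f$.

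For the direction from (ii) to (i), the given equivalences $e_x : (X \down x) \simeq (Y \down f\, x)$ are a priori unrelated to $f$, so the crucial step is to show that each $e_x$ agrees with $f$ on first components: $\mathsf{fst}(e_x(x', s)) = f\, x'$ for every $(x', s) : X \down x$. The plan here is: first note that $e_x$, being an equivalence of mewos, is in particular a simulation (it preserves and reflects order and marking, and surjectivity supplies the initial segment property); then apply the already-proved implication to $e_x$ at the element $(x', s)$ and combine it with the mewo analogue of \cref{iterated-initial-segments} to obtain
\[
  X \down x' \;=\; (X \down x) \down (x', s) \;=\; (Y \down f\, x) \down (e_x(x', s)) \;=\; Y \down \mathsf{fst}(e_x(x', s)).
\]
Since the hypothesis also gives $X \down x' = Y \down f\, x'$, we get $Y \down \mathsf{fst}(e_x(x', s)) = Y \down f\, x'$, and \cref{lem:down-is-injective} yields $\mathsf{fst}(e_x(x', s)) = f\, x'$. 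Given this, monotonicity and the initial segment property of $f$ (which already preserves markings by hypothesis) follow by feeding one-step elements of initial segments through the equivalences: if $x_1 < x_2$ then $(x_1,-) : X \down x_2$ is marked, so $e_{x_2}(x_1,-)$ is marked in $Y \down f\, x_2$, which unfolds to $f\, x_1 < f\, x_2$; and given $y < f\, x_2$, the marked element $(y,-) : Y \down f\, x_2$ has a marked preimage $e_{x_2}^{-1}(y,-) = (x_1,-) : X \down x_2$, giving $x_1 < x_2$ with $f\, x_1 = \mathsf{fst}(e_{x_2}(x_1,-)) = y$.

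The main obstacle is precisely this last direction --- forcing the abstract equivalences $e_x$ to coincide with $f$ --- and it is what makes \cref{lem:down-is-injective} (together with the iterated-initial-segments lemma) indispensable; everything else is a routine transcription of the corresponding facts about type-theoretic ordinals, with the one genuinely new piece of bookkeeping being the marking, handled above via the identification of the marking on $X \down x$ with the relation $x' < x$.
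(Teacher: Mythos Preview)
Your argument is correct. The direction (i)$\Rightarrow$(ii) matches the paper's proof essentially line for line: restrict $f$ to $X \down x$, use monotonicity for order- and marking-preservation, order reflection via the initial segment property plus injectivity, and surjectivity by lifting a witnessing $\tc$-chain step by step.

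For (ii)$\Rightarrow$(i), however, you take a genuinely different route. The paper proceeds by transfinite induction on $x$ with respect to $\tc$: the induction hypothesis makes the composite $(X \down x) \xrightarrow{\mathsf{fst}} \overline X \xrightarrow{f} \overline Y$ a simulation, and then \emph{uniqueness of simulations} (\cref{lem:standard-properties-of-simus}) forces the square with $e_x$ and the two projections to commute, yielding $\mathsf{fst}(e_x\, z) = f\, z$. You instead bootstrap the already-proved direction (i)$\Rightarrow$(ii), applying it to the equivalence $e_x$ itself, then invoke the mewo version of iterated initial segments and the injectivity of $(\_\down\_)$ (\cref{lem:down-is-injective}) to conclude the same identity directly. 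Your approach trades the transfinite induction and the uniqueness-of-simulations machinery for two structural lemmas that are already available; it is more algebraic and avoids setting up an induction hypothesis. The paper's approach, in turn, is more self-contained and makes explicit how the simulation structure builds up inductively along $\tc$. Both are sound; yours is arguably the slicker of the two.
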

\begin{proof}
 $(\ref{item:f-simu-vs-partial-simu-new-1}) \Rightarrow (\ref{item:f-simu-vs-partial-simu-new-2})$:
 An equality of mewos is a surjective simulation that preserves and reflects the
 markings.  The simulation $f$ is monotone and thus can be restricted to a
 simulation $\bar f : X \down x \leq Y \down (f \, x)$.  Monotonicity of $f$
 guarantees that markings are preserved, while the initial segment property
 ensures that markings are reflected.  Finally, by induction on the number of
 steps, the initial segment property for $<$ can be extended to $\tc$; hence
 every $y$ in $Y \down (f \, x)$ has a preimage.

 $(\ref{item:f-simu-vs-partial-simu-new-2}) \Rightarrow
 (\ref{item:f-simu-vs-partial-simu-new-1})$: Assume that, for every $x$, we have
 an equality $e_x : X \down x = Y \down (f \, x)$ of mewos.

 It is a standard result that the transitive closure of a wellfounded relation
 is wellfounded.  Using this we show, by transfinite induction on $x$, that $f$
 is a simulation at point $x$:
 \begin{itemize}
 \item for $x_1 < x$ we have $f \, x_1 < f \, x$;
 \item for $y_1 < f \, x$, there is $x_1 < x$ such that $f \, x_1 = y_1$.
 \end{itemize}
 The induction hypothesis states that $f$ is a simulation at every point $z$
 with $z \tc x$ or, in other words, that the composition
 $(X \down x) \xrightarrow{\mathsf{fst}} \overline X \xrightarrow f \overline Y$
 is a simulation (cf.~\cref{def:mark-all}).  Therefore, the diagram
\begin{equation}\label{eq:diagram-using-unique-simulations}
    \begin{tikzpicture}[x=3cm,y=-1cm,baseline=(current bounding box.center)]
      \node (X1) at (0,0) {$X \down x$};
      \node (X) at (0,1) {$\overline X$};
      \node (Y1) at (1,0) {$Y \down (f \, x)$};
      \node (Y) at (1,1) {$\overline Y$};

      \draw[->] (X1) to node {} (X);
      \draw[->] (X1) to node[above] {$e_x$} (Y1);
      \draw[->] (X) to node[above] {$f$} (Y);
      \draw[->] (Y1) to node {} (Y);
    \end{tikzpicture}
\end{equation}
commutes by uniqueness of simulations (\cref{lem:standard-properties-of-simus}).
We can now easily check that $f$ is a simulation at point $x$.  First, ${z < x}$
means that $z$ is a marked element in $X \down x$, thus $e_x \, z$ is marked in
$Y \down (f \, x)$, translating to $\mathsf{fst} (e_x \, z) < f \, x$, and
commutativity of \eqref{eq:diagram-using-unique-simulations} implies
$\mathsf{fst} (e_x \, z) = f \, z$.  Second, let $y_1 < f\, x$ be given. This
means that $y_1$ is marked in $Y \down (f \, x)$ and we get the marked $x_1$ as
the unique preimage of $y_1$ under the equivalence $e_x$.
\end{proof}

While we have seen in \cref{subsec:mewo-simulations-failures} that $<$ is not
transitive and does not necessarily imply $\leq$, we now get the following
familiar property:
\begin{corollary}[\flink{Corollary-61}]
  For mewos $X$, $Y$ and $Z$, we have
  \[
    X < Y \to Y \leq Z \to X < Z.
  \]
\end{corollary}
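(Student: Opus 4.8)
The plan is to assemble the required bounded simulation $X < Z$ directly from the data of the hypotheses. Unfolding definitions, a bounded simulation $X < Y$ is a pair $(y, e)$ with $y : \Mar_Y$ a marked element of $Y$ and $e : X \simeq (Y \down y)$ an equivalence of mewos, while $Y \leq Z$ is a simulation $g : Y \to Z$. Since simulations preserve markings, $g\,y$ is a marked element of $Z$; this will serve as the ``bound'' of the simulation $X < Z$ that we build, so it remains to produce an equivalence of mewos $X \simeq (Z \down (g\,y))$.

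The key step is to observe that $(Y \down y)$ and $(Z \down (g\,y))$ are already equal as mewos. Indeed, $g : Y \to Z$ is a simulation and hence preserves markings, so \cref{lem:sim-equal-initial-segments-new} applies and gives $(Y \down y') = (Z \down (g\,y'))$ for every $y' : Y$; instantiating at $y' = y$ yields the equality we want. Transporting along it produces an equivalence of mewos $(Y \down y) \simeq (Z \down (g\,y))$, which we compose with the given $e : X \simeq (Y \down y)$ to obtain an equivalence of mewos $X \simeq (Z \down (g\,y))$. The pair consisting of $g\,y$ together with this equivalence is then an element of $X < Z$, as required.

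I expect no serious obstacle: once \cref{lem:sim-equal-initial-segments-new} is in hand the argument is essentially routine. The one point deserving attention is checking that the composite $X \simeq (Z \down (g\,y))$ is an equivalence \emph{of mewos}, i.e., preserves and reflects both order and marking in the sense of \cref{rem:mewo-equality-small}; this is immediate because it is a composite of two such equivalences (equalities of mewos). A more direct route --- defining a function $X \to Z$ and verifying the simulation axioms by hand --- would be considerably more painful, since one would have to re-derive the initial-segment property along the transitive closure; routing everything through \cref{lem:sim-equal-initial-segments-new} avoids this. Note also that neither transitivity nor coveredness is used, which is why the corollary holds for arbitrary mewos.
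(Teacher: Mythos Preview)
Your proposal is correct and follows essentially the same route as the paper's proof: the paper also uses \cref{lem:sim-equal-initial-segments-new} to obtain $(Y \down y) = (Z \down f\,y)$ and then chains this with $X = (Y \down y)$ from the assumption. Your version is simply more explicit about the marking of $g\,y$ and about the final pairing, both of which the paper leaves implicit.
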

\begin{proof}
  We have $X = (Y \down y)$ by assumption and $(Y \down y) = (Z \down f\, y)$ by
  \cref{lem:sim-equal-initial-segments-new}.
\end{proof}

One may view \cref{lem:sim-equal-initial-segments-new} as stating that a
function is a simulation if and only if it behaves like a simulation pointwise
(or \emph{locally}).  We now consider such functions that are only defined on
the marked elements:

\begin{definition}[\flink{Definition-62} Partial simulation, \({\simbelow}\)]
  A \emph{partial simulation} between mewos \(X\) and \(Y\) is a function
  ${f : \Mar_X \to \Mar_Y}$ that preserves initial segments,
  \[
    \presinitial {} f \; \defeq \; \forall (x:\Mar_X). (X \down x) = (Y \down f
    \, x),
  \]
  and we write
  \( X \simbelow Y \; \defeq \; \Sigma (f : \Mar_X \to \Mar_Y). \presinitial {}
  f.  \)
\end{definition}

A convenient alternate representation is the following:
\begin{lemma}[\flink{Lemma-63}]\label{cor:sim-vs-ptwise}
  The type of partial simulations $X \simbelow Y$ is equivalent to the type
  \[
    \forall (x : \Mar_X). \exists (y : \Mar_Y). X \down x = Y \down y.
  \]
  and hence a proposition.
\end{lemma}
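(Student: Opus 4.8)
The plan is to reduce the type of partial simulations to the stated $\forall$--$\exists$ form by the type-theoretic axiom of choice, and then to collapse the existential to a $\Sigma$-type using a uniqueness argument. First I would unfold the definition, $X \simbelow Y \;\equiv\; \Sigma(f : \Mar_X \to \Mar_Y).\, \forall(x : \Mar_X).\, (X \down x) = (Y \down f\,x)$, and apply the standard distributivity equivalence of $\Sigma$ over $\Pi$ (the type-theoretic axiom of choice, which here is a plain equivalence of types requiring no choice principle). This rewrites $X \simbelow Y$ as $\Pi(x : \Mar_X).\, \Sigma(y : \Mar_Y).\, (X \down x) = (Y \down y)$.

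The remaining work is to show that, for each fixed $x : \Mar_X$, the fibre $\Sigma(y : \Mar_Y).\, (X \down x) = (Y \down y)$ is a proposition; once that is established, it is canonically equivalent to its own propositional truncation $\exists(y : \Mar_Y).\, (X \down x) = (Y \down y)$, and congruence of $\Pi$ then yields the claimed equivalence. The key input is \cref{lem:down-is-injective}: the map $y \mapsto (Y \down y)$ is injective, and since the marking $\mar$ is prop-valued the first projection $\Mar_Y \to Y$ is an embedding, so the composite sending a marked $y$ to $(Y \down y)$ remains injective. Hence if $(y,p)$ and $(y',q)$ both inhabit the fibre, then $(Y \down y) = (X \down x) = (Y \down y')$ forces $y = y'$, and then $p = q$ since $\mewo$ is a set (\cref{lem:standard-properties-of-simus}); so the fibre is indeed a proposition.

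Finally, to conclude that $X \simbelow Y$ is a proposition, I would observe that the equivalent type is a $\Pi$-type (written with $\forall$) indexed over $\Mar_X$ whose fibres are propositional truncations and hence propositions, and a $\Pi$ of propositions is a proposition.

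The main obstacle, such as it is, is the uniqueness step: one must check that injectivity of $\down$ genuinely transfers along the embedding $\Mar_Y \hookrightarrow Y$, i.e.\ that distinct marked elements determine distinct initial segments. This is immediate from \cref{lem:down-is-injective} together with prop-valuedness of the marking, so no real difficulty arises; everything else is routine rearrangement of $\Sigma$- and $\Pi$-types.
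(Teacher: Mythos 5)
Your proof is correct and follows essentially the same route as the paper: unfold the definition of $\simbelow$, apply the untruncated axiom of choice to commute $\Sigma$ past $\Pi$, and then collapse the $\Sigma$ to an $\exists$ using injectivity of $y \mapsto Y \down y$ from \cref{lem:down-is-injective}. The only extra detail you supply---that injectivity must be transferred along the embedding $\Mar_Y \hookrightarrow Y$ of marked elements---is left implicit in the paper but is indeed immediate from prop-valuedness of the marking.
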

\begin{proof}
  We can calculate
  \begin{alignat*}{9}
    X \simbelow Y
    & \equiv & \quad & \Sigma (f : \Mar_X \to \Mar_Y). \forall x. (X \down x) = (Y \down f \, x)   \\
    & \simeq & \quad & \Pi(x : \Mar_X). \Sigma(y : \Mar_X). (X \down x) = (Y \down y) \\
    & \simeq & \quad & \hspace{1pt}\forall\hspace{1pt}(x : \Mar_X). \hspace{1pt}\exists\hspace{1pt}(y : \Mar_X). (X \down x) = (Y \down y),
  \end{alignat*}
  where the first step is the definition of $\simbelow$, the second is the
  ``untruncated axiom of choice''~\cite[Thm~2.15.7]{HoTTBook}, and the last step
  uses that $(Y \down y_1) = (Y \down y_2)$ implies ${y_1 = y_2}$ by
  \cref{lem:down-is-injective}, which means that $\Sigma$ and $\exists$ are
  equivalent.
\end{proof}

A mewo can have the property that its marking alone already fully determines how
it maps into other mewos.  The notions introduced above allow us to make this
precise:
\begin{definition}[\flink{Definition-64} Principality]\label{def:principality}
  The marking of a mewo \(X\) is \emph{principal} if, for all mewos $Y$, the
  canonical restriction map $(X \leq Y) \to (X \simbelow Y)$ given by
  \cref{lem:sim-equal-initial-segments-new} is an equivalence.
\end{definition}

In other words, for any chosen codomain $Y$, the marking of $X$ is principal if
a (necessarily unique) partial simulation out of $X$ already determines a
(necessarily unique) simulation out of $X$.  However, being principal is
actually simply a ``relative'' description of the ``absolute'' property of being
covering:

\begin{lemma}[\flink{Lemma-65}]\label{lem:cover-iff-principal}
  A marking covers if and only if it is principal.
\end{lemma}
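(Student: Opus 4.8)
The plan is to prove the two implications separately. It is worth recording at the outset that the types $X \leq Y$, $X \simbelow Y$, the equality type of any two mewos, and the predicate ``$X$ is covered'' are all propositions --- by \cref{lem:standard-properties-of-simus}, \cref{cor:sim-vs-ptwise}, and the fact that $\mewo$ is a set, while coveredness is manifestly propositional --- so that whenever we are proving one of these we may discard propositional truncations and freely case-split on a witness of $x \trc x_0$. For the forward direction, assume that the marking of $X$ covers and fix an arbitrary mewo $Y$. Since $X \leq Y$ and $X \simbelow Y$ are propositions, to show that the restriction map from \cref{def:principality} is an equivalence it is enough to construct a map $(X \simbelow Y) \to (X \leq Y)$. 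Given a partial simulation $f : X \simbelow Y$, I would define $\tilde f : X \to Y$ by sending $x$ to the unique $y : Y$ satisfying $(X \down x) = (Y \down y)$. Such a $y$ is unique by \cref{lem:down-is-injective}, so the type $\Sigma(y : Y).\,(X \down x) = (Y \down y)$ is a proposition, and it suffices to show that it is merely inhabited; this is the only place coveredness is used. Picking a marked $x_0$ with $x \trc x_0$, we case-split: if $x = x_0$ then $x$ is marked and $(X \down x) = (Y \down f\,x)$ holds by definition of a partial simulation; if $x \tc x_0$, then $x$ (together with its proof $x \tc x_0$) is an element of $X \down x_0$, and applying the mewo analogue of \cref{iterated-initial-segments} (already invoked in the proof of \cref{lem:<-is-wellfounded}) to the equality $(X \down x_0) = (Y \down f\,x_0)$ produces a suitable $y$. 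By construction $(X \down x) = (Y \down \tilde f\,x)$ for every $x$, and $\tilde f$ preserves markings, since for marked $x$ the marked element $f\,x$ also witnesses $(X \down x) = (Y \down f\,x)$, whence $\tilde f\,x = f\,x$ by uniqueness. Hence $\tilde f$ is a simulation by \cref{lem:sim-equal-initial-segments-new}, and it restricts to $f$ on $\Mar_X$, as required.

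For the converse, assume the marking of $X$ is principal. I would form the sub-mewo $X_{\mathsf{cov}}$ of $X$ whose carrier is the type of covered elements of $X$, with order and marking inherited. A predecessor of a covered element is again covered (extend the chain witnessing coveredness), so this carrier is downwards closed; consequently $X_{\mathsf{cov}}$ inherits prop-valuedness, wellfoundedness and extensionality from $X$ and is a mewo --- indeed a covered one, because a marked element of $X$ lying above a covered $x$ is itself covered and the chain from $x$ up to it stays inside $X_{\mathsf{cov}}$. The inclusion $\iota : X_{\mathsf{cov}} \to X$ is a simulation, its image being downwards closed for the same reason, and the evident map $\Mar_X \to \Mar_{X_{\mathsf{cov}}}$ (every marked element is covered) is a partial simulation $X \simbelow X_{\mathsf{cov}}$: for marked $x$, both $X \down x$ and $X_{\mathsf{cov}} \down x$ have as carrier the elements transitively below $x$ --- all of which are automatically covered --- with matching orders and markings. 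Principality of $X$ now upgrades this partial simulation to a simulation $g : X \leq X_{\mathsf{cov}}$. Then $\iota \circ g$ is a simulation $X \leq X$, so $\iota \circ g = \mathrm{id}_X$ by uniqueness of simulations (\cref{lem:standard-properties-of-simus}); hence $\iota$ is split surjective, which is exactly the statement that every element of $X$ is covered.

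I expect the main obstacle to be the well-definedness of $\tilde f$ in the forward direction: making precise that for an \emph{unmarked} $x$ the value $y$ with $(X \down x) = (Y \down y)$ merely exists, which forces one to state and apply the mewo form of ``an initial segment of an initial segment is an initial segment'' ($(X \down x_0) \down (x,p) = X \down x$ for $p : x \tc x_0$) and to keep careful track of the reflexive--transitive closure $\trc$. The remaining verifications --- that $\tilde f$ preserves markings and is a simulation, and that $X_{\mathsf{cov}}$ is genuinely a covered mewo with $\iota$ a simulation --- are routine adaptations of the corresponding facts for type-theoretic ordinals.
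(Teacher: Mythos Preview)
Your proposal is correct and follows essentially the same route as the paper. Both directions match: for covers $\Rightarrow$ principal you extend the partial simulation to all of $X$ via coveredness and the iterated-initial-segment identity, exactly as the paper does; for principal $\Rightarrow$ covers your $X_{\mathsf{cov}}$ is the paper's $\widehat X$, and your conclusion via $\iota \circ g = \mathrm{id}_X$ is just the antisymmetry argument spelled out (antisymmetry in \cref{lem:standard-properties-of-simus} is proved precisely by observing that the two composites are the identity by uniqueness of simulations).
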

\begin{proof}
  Let $\mar$ be a marking on a mewo $X$.
  \paragraph{\textbf{covers} $\Rightarrow$ \textbf{principal}}
  Assume we have a partial simulation $f : X \simbelow Y$.  For a given $x : X$,
  we need to find a (necessarily unique) $y : Y$ such that
  $X \down x = Y \down y$.  By the covering property, there exists
  $x_0 : \Mar_X$ with $p : x \trc x_0$.  By analyzing $p$, we get either
  $x = x_0$, in which case the goal is given by the partial simulation, or
  $x \tc x_0$.  In the latter case, we get
  $e : (X \down x_0) = (Y \down f \, x_0)$ from the partial simulation.
  Applying the function underlying $e$ on $x$, we generate an element $y:Y$ that
  satisfies the required property.  If $x$ is marked, then the (unique) $y$ that
  we find is necessarily equal to the one given by the partial simulation, which
  is marked by assumption.

\paragraph{\textbf{principal} $\Rightarrow$ \textbf{covers}}
Assume $\mar$ is principal. Let $\widehat X$ be the mewo of all elements covered
by $\Mar_X$, defined as
\[
  \widehat X \; \defeq \; \Sigma(x:X). \exists (x_0 : \Mar_X). (x \trc x_0),
\]
with order and marking inherited from $X$.  We have ${\widehat X \leq X}$ by
projection.  We also have $X \simbelow \widehat X$ by definition and thus
$X \leq \widehat X$ by principality, meaning that the two mewos are equal by
antisymmetry. In other words, $\mar$ covers all of $X$.
\end{proof}

We have seen in \cref{lem:<-is-wellfounded} that $<$ is wellfounded on $\mewo$
and $\cewo$.  The observation that principality and covering coincide allows us
to show that, in the latter case, the order is also extensional:

\begin{theorem}[\flink{Theorem-66}]\label{thm:covered-ewo-extensional}
  The structure $(\cewo,<)$ is an extensional wellfounded order.
\end{theorem}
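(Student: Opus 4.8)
The plan is to note that two of the three properties comprising an extensional wellfounded order are already available: the relation $<$ on $\cewo$ is valued in propositions by \cref{cor:bounded-sims-are-unique} and is wellfounded by \cref{lem:<-is-wellfounded}. So the only real work is to prove extensionality. To that end I would assume covered mewos $X$ and $Y$ with $(Z < X) \leftrightarrow (Z < Y)$ for every covered mewo $Z$, and reduce the goal $X = Y$ to constructing simulations $X \leq Y$ and $Y \leq X$, using antisymmetry of simulations (\cref{lem:standard-properties-of-simus}); by symmetry it then suffices to build $X \leq Y$.

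To produce $X \leq Y$, I would use that $X$ is covered, hence its marking is principal by \cref{lem:cover-iff-principal}, so it is enough to give a partial simulation $X \simbelow Y$; and by \cref{cor:sim-vs-ptwise} this reduces further to proving $\forall(x : \Mar_X).\,\exists(y : \Mar_Y).\,(X \down x) = (Y \down y)$. Given a marked $x : \Mar_X$, the crucial observations are that the initial segment $X \down x$ is again a \emph{covered} mewo (by the earlier lemma that every initial segment is covered) and that the pair $(x, \mathrm{id})$ is a bounded simulation witnessing $(X \down x) < X$ --- this last step is exactly where being marked is used. Feeding the covered mewo $X \down x$ into the extensionality hypothesis yields $(X \down x) < Y$, i.e.\ a marked $y : \Mar_Y$ together with an equivalence $(X \down x) \simeq (Y \down y)$, which by univalence (\cref{rem:mewo-equality-small}) is an equality $(X \down x) = (Y \down y)$. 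This supplies the required witness; running the symmetric argument for $Y \leq X$ and invoking antisymmetry then gives $X = Y$.

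I do not anticipate a deep obstacle --- the proof is mostly a careful assembly of the preceding lemmas --- but there are two points to handle with care. First, extensionality of $(\cewo, <)$ only lets us test against \emph{covered} mewos $Z$, so it is essential that each $X \down x$ is itself covered; this is precisely why that fact was established earlier. Second, one must be mindful of the three-way relationship between $<$, $\leq$, and $\simbelow$: a bounded simulation into $X$ or $Y$ is not automatically a simulation, since the first projection out of an initial segment need not preserve markings, but routing the pointwise data produced by the extensionality hypothesis through $\simbelow$ and principality circumvents this and upgrades it to an honest simulation.
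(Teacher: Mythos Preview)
Your proposal is correct and follows essentially the same route as the paper: wellfoundedness from \cref{lem:<-is-wellfounded}, then extensionality by reducing $X = Y$ to two simulations via antisymmetry, and obtaining $X \leq Y$ by combining principality (\cref{lem:cover-iff-principal}) with \cref{cor:sim-vs-ptwise} so that the goal becomes exactly the extensionality hypothesis instantiated at the covered mewos $X \down x$. You are in fact slightly more explicit than the paper about why $X \down x$ is an admissible test object (it is covered) and why $(x,\mathrm{id})$ witnesses $(X \down x) < X$, but the argument is the same.
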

\begin{proof}
  Wellfoundedness has been established in \cref{lem:<-is-wellfounded}.
  Extensionality follows antisymmetry
  (\cref{lem:standard-properties-of-simus}) as soon as we can show that
  \begin{equation}\label{eq:half-ext} \forall (Z : \cewo). (Z < X) \to (Z < Y)
  \end{equation}
  implies $X \leq Y$.  Thus, let us prove this property.

  Let $X$ and $Y$ be covered mewos.  By principality and
  \cref{cor:sim-vs-ptwise}, we need to show that for every \(x : \Mar_X\) there
  exists some \(y : \Mar_Y\) with \(X \down x = Y \down y\).  By definition, the
  predecessors of $X$ are exactly the mewos of the form $X \down x$ for marked
  $x$, so that this formula is equivalent to the assumption~\eqref{eq:half-ext}.
\end{proof}

In contrast, the relation $<$ is clearly not extensional on $\mewo$, as there are
many different mewos without predecessors, namely exactly those with completely
empty markings.

\subsection{Constructions on mewos}

Recall the rank function $\Psi : \V \to \Ord$ from \cref{def:Psi}.  Since
different $\V$-sets can have the same rank, $\Psi$ is not injective and thus
certainly not a simulation.  We have seen that we can turn it into a simulation
by restricting its domain to $\Vord$.  This is of course not sufficient anymore
for our current goal of characterizing \emph{all} of $\V$; instead, we extend
the codomain from $\Ord$ to $\mewo$.  Doing this requires us to generalize the
operations on $\Ord$ that we used to construct the rank function.
In \cref{sum-of-ordinals}, we recalled the addition of type-theoretic ordinals.
While it would be possible to phrase this definition in full generality for
mewos, we restrict ourselves for simplicity to the case of interest (the
successor), which already contains the crucial ideas.

There is however an important difference.  The successor operation for
type-theoretic ordinals, if translated to and written in the notation of set
theory, maps a set $S$ to $S \cup \{S\}$.
This is of course required in order not to leave the realm of \emph{transitive}
sets (and orders).  For mewos, we need to slightly refine the function so that
it corresponds to the (non-transitive) \emph{singleton operation}
$S \mapsto \{S\}$.

\begin{definition}[\flink{Definition-67} Singleton, \(\sing{X}\)]
  For a given mewo $X$, we define the \emph{singleton order} $\sing{X}$ to be
  the marked order with carrier $X + \mathbf 1$ and the order given as follows:
  \begin{itemize}
  \item $(\inl x < \inl y)$ if and only if $x < y$;
  \item $(\inl x < \inr \star)$ if and only if $\mar(x)$;
  \item $(\inr \star < z)$ false for all $z$.
  \end{itemize}
  Finally, we mark the single point $\inr \star$.
\end{definition}
It is worth pointing out how this \emph{almost} generalizes the successor
operation of a type-theoretic ordinal.  Since such an ordinal is a completely
marked (and transitive) mewo, the second clause above matches exactly the sum
operation given in \cref{sum-of-ordinals} when the second summand is $\One$.
However, a faithful generalization of \cref{sum-of-ordinals} would in the end
mark not only $\inr \star$, but also all elements that were marked in $X$.

Another critical point to note is that, for an arbitrary mewo $X$, the
singleton $\sing{X}$ need not be a mewo.  As an example, consider the mewo
$\circ$ with exactly one element, which is unmarked. If we now take its
singleton, neither this existing element nor the newly added element has any
predecessors. Since they are not equal, extensionality is missing.
The obstacle in this example is that the original marking is insufficient.
Fortunately, if we start with a \emph{covered} mewo, the successor is not only
extensional but also covered again:
\begin{lemma}[\flink{Lemma-68}]
  If $X$ is a covered mewo, then so is $\sing{X}$.
\end{lemma}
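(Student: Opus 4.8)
The plan is to split the proof into two parts: first that $\sing{X}$ is a mewo, and then that it is covered. Prop-valuedness of the order is immediate, since each of the three defining clauses is a proposition of $X$, the proposition $\mar(x)$, or $\Zero$. For wellfoundedness I would first show, by transfinite induction on $x$ using the wellfoundedness of the order on $X$, that $\inl x$ is accessible in $\sing{X}$: the predecessors of $\inl x$ are exactly the $\inl y$ with $y <_X x$ (the point $\inr \star$ is below nothing), so the induction hypothesis applies verbatim. Then $\inr \star$ is accessible, since its only predecessors are the $\inl x$ with $\mar(x)$, each of which we have just shown to be accessible.

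The substantive part is extensionality, and it is here that the assumption that $X$ is \emph{covered} (rather than merely a mewo) is used. Given $u, v : \sing{X}$ with the same predecessors, I would case-split on whether each of them is of the form $\inl{(-)}$ or is $\inr \star$. If $u = \inl a$ and $v = \inl b$, then instantiating the hypothesis at the elements $\inl c$ yields $c < a \leftrightarrow c < b$ for all $c : X$, so $a = b$ by extensionality of $X$; if $u = v = \inr \star$ there is nothing to prove. The remaining case is $u = \inl a$, $v = \inr \star$ (together with its mirror image), where, since $\inl a \neq \inr \star$, the goal is to derive a contradiction from the resulting hypothesis $\forall c : X.\ (c < a \leftrightarrow \mar(c))$. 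Since $X$ is covered there is a marked $a_0$ with $a \trc a_0$; from $\mar(a_0)$ the hypothesis gives $a_0 < a$, and composing the reflexive-transitive path $a \trc a_0$ with the step $a_0 < a$ produces $a \tc a$. This contradicts the irreflexivity of $\tc$, which holds because the transitive closure of a wellfounded relation is wellfounded. As usual, because we are ultimately proving a proposition, the witness $a \trc a_0$ may be taken to be a concrete chain.

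For coveredness of $\sing{X}$ it suffices to cover each element: $\inr \star$ is marked, and for $\inl x$ the only marked element available is $\inr \star$, so we must show $\inl x \trc \inr \star$. Covering $x$ in $X$ gives a marked $x_0$ with $x \trc x_0$, which the map $\inl{(-)} : X \to \sing{X}$ transports to $\inl x \trc \inl x_0$; since $\mar(x_0)$ we have $\inl x_0 < \inr \star$, and hence $\inl x \trc \inr \star$ by transitivity. I expect the extensionality case $\inl a$ versus $\inr \star$ to be the crux, since it is the only genuinely new ingredient compared with the successor of a type-theoretic ordinal and is precisely where coveredness is indispensable --- it is the failure of exactly this step for the one-point unmarked mewo $\circ$ that makes $\sing{\circ}$ non-extensional.
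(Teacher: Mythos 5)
Your proof is correct and follows essentially the same route as the paper's: the crux in both is the extensionality case comparing $\inl a$ with $\inr \star$, resolved by using coveredness to produce a marked $a_0$ with $a \trc a_0$, which is a predecessor of $\inr\star$ but cannot be one of $\inl a$ since $a_0 < a$ together with $a \trc a_0$ would yield a cycle contradicting wellfoundedness. Your treatment is merely more explicit about the routine parts (wellfoundedness via transfinite induction and the $\inl a$ vs.\ $\inl b$ case), which the paper dismisses as immediate.
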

\begin{proof}
  Wellfoundedness is immediate. Regarding extensionality, the interesting case
  is comparing an element of the form $\inl x$ with $\inr \star$. It suffices to
  show that their predecessors are not the same. To do so, observe that $x$ is
  covered in $X$, i.e., there exists $x_0$ with $x \trc x_0$.  By construction,
  $x_0$ is a predecessor of $\inr \star$, while wellfoundedness ensures that it
  cannot possibly be a predecessor of $\inl x$.
  Coveredness: The element $\inr \star$ is marked and thus trivially covered.
  To see that an arbitrary $\inl x$ is covered, note that there exists a marked
  $x_0$ with $x \trc x_0$ in $X$. By construction, we have
  $\inl x_0 < \inr \star$, implying that $\inl x$ is covered.
\end{proof}

The second important construction that we discussed for type-theoretic ordinals
is computing suprema (\cref{sup-of-ordinals}).  In the case of mewos, the better
intuition is to think of \emph{unions}, although the universal property of the
supremum is satisfied too, as we will see shortly in
\cref{lem:mewo-sup-is-sup}.

\begin{definition}[\flink{Definition-69} Union of mewos, \(\bigcup F\)]%
  \label{def:union-of-ewos}
  The \emph{union} $\bigcup F$ of a family of mewos $F : A \to \mewo$ is defined
  as follows:
  \begin{itemize}
  \item The carrier is $\Sigma(a:A).F \, a$ quotiented by $\approx$, where we
    define $(a,x) \approx (b,y)$ to be \( (Fa \down x) \simeq (Fb \down y) \) as
    (covered) mewos;
  \item and $[a,x] < [b,y]$ is defined as
    \( {(F \, a \down x) < (F \, b \down y)}.  \)
  \end{itemize}
  We mark $s : \bigcup F$ if and only if there exist $a_0 : A$ and
  $x_0 : F\,a_0$ with $s = [a_0,x_0]$ such that $x_0$ is marked in $F \, a_0$.
\end{definition}
\begin{remark}
  The explanation given in \cref{rem:equivalence-instead-of-equality} applies.
  A~priori, the type $(Fa \down x) = (Fb \down y)$ is too large as it lives in a
  higher universe than the mewos in consideration, which is why we use $\simeq$
  in the definition above. The issue is also extensively discussed in
  \cref{subsec:revisit-rank}.
\end{remark}

Continuing the observation that mewos act as sets and simultaneously generalize ordinals, we note that the union is also a supremum:

\begin{lemma}[\flink{Lemma-71}]\label{lem:mewo-sup-is-sup}
  $\bigcup F$ is the least upper bound of all $F(a)$.
\end{lemma}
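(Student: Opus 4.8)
The plan is to transcribe the proof that the supremum of a family of ordinals is a least upper bound (cf.\ \cite[Lem~10.3.22]{HoTTBook} and \cite[Thm~5.8]{deJongEscardo2022}) to the setting of mewos, carrying the markings along. I will freely use that an equivalence of mewos is an equality by univalence (\cref{rem:mewo-equality-small}), and the mewo analogue of \cref{iterated-initial-segments} --- i.e.\ $(X \down x) \down (x',p) = X \down x'$ --- already used in the proof of \cref{lem:<-is-wellfounded}. Write $\eta_a : F\,a \to \bigcup F$ for the map $x \mapsto [a,x]$.

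First I would confirm that $\bigcup F$ is a mewo, so that the statement is meaningful. Prop-valuedness of $<$ is \cref{cor:bounded-sims-are-unique}. Wellfoundedness holds because $[a,x] \mapsto (F\,a \down x)$ is an order-preserving map into $\mewo$, and order-preserving maps reflect accessibility, so each $[a,x]$ inherits accessibility from $\mewo$ (\cref{lem:<-is-wellfounded}). For extensionality, unwinding the quotient together with the mewo analogue of \cref{iterated-initial-segments} shows that the predecessors of $[a,x]$ are exactly the $[a,x']$ with $x' < x$ in $F\,a$. Hence if $[a,x]$ and $[b,y]$ have the same predecessors, then by \cref{cor:sim-vs-ptwise} we obtain partial simulations $(F\,a \down x) \simbelow (F\,b \down y)$ and conversely; since every initial segment of a mewo is covered, its marking is principal (\cref{lem:cover-iff-principal}), so these upgrade to simulations in both directions, and antisymmetry (\cref{lem:standard-properties-of-simus}) gives $(F\,a \down x) = (F\,b \down y)$, i.e.\ $[a,x] = [b,y]$.

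Next, for the upper-bound half, I would check that each $\eta_a$ is a simulation. It preserves markings directly by the definition of the marking on $\bigcup F$. It is monotone: given $x_1 < x_2$ in $F\,a$, the element $(x_1, p)$ (with $p$ a witness of $x_1 \tc x_2$) is marked in $F\,a \down x_2$ and exhibits a bounded simulation $(F\,a \down x_1) = (F\,a \down x_2) \down (x_1, p) < (F\,a \down x_2)$, i.e.\ $[a,x_1] < [a,x_2]$. And it has the initial segment property: a witness of $[b,y] < [a,x]$ selects, by the same unwinding, some $x' < x$ with $(F\,b \down y) = (F\,a \down x')$, hence $[b,y] = [a,x'] = \eta_a\,x'$.

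Finally, for the least-upper-bound half, let $Y$ be a mewo with simulations $g_a : F\,a \leq Y$ for all $a$, and define $h : \bigcup F \to Y$ on representatives by $h[a,x] \defeq g_a\,x$. The step I expect to be the main obstacle is showing that $h$ is well-defined on the set quotient: if $(a,x) \approx (b,y)$, i.e.\ $(F\,a \down x) \simeq (F\,b \down y)$, then \cref{lem:sim-equal-initial-segments-new} applied to $g_a$ and $g_b$ gives $(Y \down g_a\,x) = (F\,a \down x) = (F\,b \down y) = (Y \down g_b\,y)$, whence $g_a\,x = g_b\,y$ by \cref{lem:down-is-injective} --- this is exactly where the theory of mewos is needed rather than a naive patching argument. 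Once this is in place, that $h$ is a simulation is routine: marking preservation reduces (after eliminating the truncation in the marking of $\bigcup F$, legitimate since being marked is a proposition) to marking preservation of the $g_a$; monotonicity unwinds $[a,x] < [b,y]$ to $(Y \down g_a\,x) < (Y \down g_b\,y)$ and then, via injectivity of $\down$, to $g_a\,x < g_b\,y$; and the initial segment property of $h$ follows from that of $g_a$, since any $w < g_a\,x$ in $Y$ equals $g_a\,x'$ for some $x' < x$, giving $[a,x'] < [a,x]$ with $h[a,x'] = w$. Uniqueness of $h$ is automatic from \cref{lem:standard-properties-of-simus}, so $\bigcup F \leq Y$, completing the proof. (The one other place where a genuinely new ingredient beyond the ordinal case enters is the extensionality of $\bigcup F$ above, via the identification of coveredness with principality.)
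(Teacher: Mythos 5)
Your proof is correct and follows essentially the same route as the paper's: the paper's terse ``induction on $z$, using the uniqueness of $y$'' is precisely your quotient recursion with well-definedness secured by \cref{lem:sim-equal-initial-segments-new} and \cref{lem:down-is-injective}, and the pointwise reformulation of $\bigcup F \leq Y$ that you use is the calculation the paper cites from \cref{cor:sim-vs-ptwise}. Your additional verification that $\bigcup F$ is in fact a mewo (notably extensionality via coveredness of initial segments and \cref{lem:cover-iff-principal}) is not part of the paper's proof of this lemma but is a correct and necessary prerequisite that the paper handles implicitly.
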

\begin{proof}
  $F \, a \leq \bigcup F$ is easy to check.  Assume now that we have
  $F \, a \leq Y$ for every $a$; we want to prove $\bigcup F \leq Y$.  By a
  calculation analogous to the one in \cref{cor:sim-vs-ptwise}, this goal means
  we need to show that, for any $z : \bigcup F$, there exists a $y: Y$ such that
  $(\bigcup F \downarrow^+ z) = (Y \downarrow^+ y)$ and $\mar(z) \to \mar (y)$.
  This follows by induction on $z$, using the uniqueness of $y$ and the assumption
  for the marking condition.
\end{proof}

\begin{lemma}[\flink{Lemma-72}]
  If $F$ is a family of covered mewos, then $\bigcup F$ is covered.
\end{lemma}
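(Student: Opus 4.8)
The plan is to reduce the statement to a representative of the quotient carrier of $\bigcup F$. Since being covered is a proposition, the induction principle of the set quotient in \cref{def:union-of-ewos} lets us assume that an arbitrary element of $\bigcup F$ is of the form $[a,x]$ with $a : A$ and $x : F\,a$, and it then suffices to exhibit a marked element reflexive-transitively above it.

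Since $F\,a$ is a covered mewo, there merely exists a marked $x_0 : \Mar_{F\,a}$ with $x \trc x_0$ in $F\,a$; as coveredness of $[a,x]$ is itself a proposition, we may unwrap this existential. I would then transport this witness along the canonical map $x \mapsto [a,x] : F\,a \to \bigcup F$, which is a simulation by (the proof of) \cref{lem:mewo-sup-is-sup}. As a simulation it preserves markings, so $[a,x_0]$ is marked in $\bigcup F$ --- which is exactly an instance of the marking clause of \cref{def:union-of-ewos}. As a simulation it is also monotone, hence it preserves the reflexive-transitive closure; unfolding $x \trc x_0$ into a chain of $<$-steps and applying monotonicity step by step yields $[a,x] \trc [a,x_0]$. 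Together these give a marked element reflexive-transitively above $[a,x]$, so $[a,x]$ is covered.

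I do not expect a genuine obstacle here: the argument is a routine ``push forward along a simulation'' once the quotient reduction is in place. The only point that needs a moment's care is checking that the marking thus obtained on $\bigcup F$ is literally the one from \cref{def:union-of-ewos} rather than merely a propositionally equivalent notion --- but this is immediate from the explicit description of the marked elements of $\bigcup F$ as those of the form $[a_0,x_0]$ with $x_0$ marked in $F\,a_0$. The auxiliary fact that monotone maps preserve $\trc$ is standard, proved by induction on the number of steps.
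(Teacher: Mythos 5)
Your proof is correct and follows essentially the same route as the paper: take a representative $[a,x]$, obtain a covering witness $x_0$ for $x$ in $F\,a$, and push the chain $x \trc x_0$ forward into $\bigcup F$. The paper phrases the key monotonicity step as ``$(F\,a \down \_)$ preserves $<$ and hence $\trc$,'' which by the definition of the order on $\bigcup F$ is exactly the monotonicity of the component inclusion $x \mapsto [a,x]$ that you invoke, so the two arguments coincide in substance.
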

\begin{proof}
  Let $[a,x]$ be an element of $\bigcup F$; we want to show that $[a,x]$ is
  covered.  By assumption, $x$ is covered in $F \, a$ by some $x_0$.  Since the
  operation $(F \, a \down \_)$ preserves $<$, it also preserves $\trc$ and we
  get $F \, a \down x \trc F \, a \down x_0$, giving $[a,x] \trc [a,x_0]$ as
  required.
\end{proof}

\begin{remark}
  Note that, in the situation of \cref{def:union-of-ewos}, we can have
  $(a,x) \approx (b,y)$ such that $x$ is marked while $y$ is not.  The simplest
  example when this happens is the union of the mewos
  $\bullet \leftarrow \bullet$ and $\circ \leftarrow \bullet$
  (cf.~\cref{subsec:mewo-simulations-failures} for the notation), in
  set-theoretic notation corresponding to the union of
  $\{\{\emptyset\},\emptyset\}$ and $\{\{\emptyset\}\}$.  Therefore, it is
  important to phrase the marking condition in \cref{def:union-of-ewos} using an
  \emph{exists} instead of \emph{forall}.
\end{remark}

\subsection{\texorpdfstring{$\V$}{V}-sets and covered mewos coincide}

We are ready to prove our second main theorem, and complete the
square~\eqref{eq:square-completed} by showing that $\V$ and $\cewo$ coincide.
We have seen that the relation $\in$ on $\V$ is wellfounded and extensional.  By
marking everything, $\V$ is therefore a (large) mewo.  Similarly, $\cewo$ itself
is a (large) mewo, using \cref{thm:covered-ewo-extensional} and total marking.
To show that they are equal as such, we construct simulations between them.

\begin{lemma}[\flink{Lemma-74}]\label{lem:V-sim-cewo}
  We have a simulation $\V \leq \cewo$.
\end{lemma}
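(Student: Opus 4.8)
The plan is to construct a map $\widetilde\Psi : \V \to \cewo$ by recursion, in the spirit of the rank function $\Psi$ of \cref{def:Psi} but with the ordinal successor $\beta \mapsto \beta + \One$ replaced by the singleton operation $X \mapsto \sing{X}$ and with suprema of ordinals replaced by unions of mewos (\cref{def:union-of-ewos}):
\[
  \widetilde\Psi(\Vset{A}{f}) \;\defeq\; \bigcup_{a : A}\sing{\widetilde\Psi(f\,a)}.
\]
For the recursion to be well-typed I rely on the facts that the singleton of a covered mewo is covered and that a union of a family of covered mewos is covered, so that the right-hand side indeed lands in $\cewo$ once $\widetilde\Psi \circ f : A \to \cewo$. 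For the recursion to be well-defined, two things are needed: that $\cewo$ is a set --- which holds since $\mewo$ is a set by \cref{lem:standard-properties-of-simus} and coveredness is a proposition --- and that the defining clause respects the equal-images constructor of $\V$. The latter follows from the universal property of $\bigcup$ (\cref{lem:mewo-sup-is-sup}): if $f$ and $g$ have equal images, then every $\sing{\widetilde\Psi(f\,a)}$ equals some $\sing{\widetilde\Psi(g\,b)}$ and is therefore $\leq \bigcup_{b}\sing{\widetilde\Psi(g\,b)}$, so this union is an upper bound of the family $a \mapsto \sing{\widetilde\Psi(f\,a)}$; antisymmetry of $\leq$ together with the symmetric argument gives the desired equality, and the truncations present in ``equal images'' do no harm because $X \leq Y$ is a proposition.

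Next I would check that $\widetilde\Psi$ is a simulation from $\V$ to $\cewo$, where both carry the total marking. Preservation of markings is immediate, since everything is marked on either side. For monotonicity and the initial-segment property the load-bearing observation is that for any \emph{covered} mewo $X$ one has an equality of mewos $\sing{X} \down (\inr\star) = X$: coveredness ensures that every $\inl x$ lies transitively below the marked top point $\inr\star$, and then the carrier, order and marking of $\sing{X} \down (\inr\star)$ restrict exactly to those of $X$, with $x \mapsto \inl x$ providing an equivalence that preserves and reflects order and marking (cf.~\cref{rem:mewo-equality-small}). Combining this with the simulation $\sing{\widetilde\Psi(f\,a)} \leq \bigcup_{a : A}\sing{\widetilde\Psi(f\,a)}$ from \cref{lem:mewo-sup-is-sup} and the characterization of simulations by initial segments (\cref{lem:sim-equal-initial-segments-new}) yields
\[
  \widetilde\Psi(f\,a) \;=\; \Bigl(\,\bigcup_{a : A}\sing{\widetilde\Psi(f\,a)}\,\Bigr) \down [a, \inr\star];
\]
since $[a, \inr\star]$ is marked in the union, this is precisely a bounded simulation $\widetilde\Psi(f\,a) < \widetilde\Psi(\Vset{A}{f})$. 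Monotonicity follows because $x \in \Vset{A}{f}$ provides some $a$ with $f\,a = x$, and $<$ between mewos is a proposition by \cref{cor:bounded-sims-are-unique} so the existential can be discharged. For the initial-segment property, suppose $Z < \widetilde\Psi(\Vset{A}{f})$, i.e.\ $Z = \widetilde\Psi(\Vset{A}{f}) \down s$ for a marked $s$; by the marking clause of \cref{def:union-of-ewos}, $s$ has the form $[a, w]$ with $w$ marked in $\sing{\widetilde\Psi(f\,a)}$, which forces $w = \inr\star$, and then the displayed computation gives $Z = \widetilde\Psi(f\,a)$, with $f\,a \in \Vset{A}{f}$ the required preimage.

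I expect the effort to be dominated by bookkeeping rather than by one hard idea. The single essential lemma is the identity $\sing{X} \down (\inr\star) = X$ for covered $X$; proving it cleanly amounts to unfolding the definition of the singleton order and of the initial-segment construction (\cref{def:mewo-initial-segment}), and it is exactly here that coveredness of $X$ is indispensable: it is what makes the singleton operation behave, on initial segments, the way the ordinal successor does for transitive orders. The remaining care goes into handling the set quotient underlying $\bigcup F$ and the various propositional truncations --- in $\in$, in ``equal images'', and in the marking and coveredness predicates --- all of which are discharged using that $\cewo$ is a set and that $\leq$ and $<$ between mewos are propositions.
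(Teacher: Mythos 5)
Your proposal is correct and follows essentially the same route as the paper: the same recursive definition $\Psi(\Vset{A}{f}) \defeq \bigcup_{a:A}\sing{\Psi(f\,a)}$, well-definedness via the least-upper-bound property of $\bigcup$, and the simulation properties via the observation that the predecessors of the union are exactly the $\Psi(f\,a_0)$. The only difference is that you spell out the identity $\sing{X}\down(\inr\star)=X$ for covered $X$, which the paper leaves as ``a quick calculation''; your identification of where coveredness is used is accurate.
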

\begin{proof}
  We define the function $\Psi : \V \to \cewo$ underlying the simulation by
  induction on the input by defining
  \[
    \Psi(\Vset{A}{f}) \; \defeq \; \bigcup_{a : A} \left(\sing{\Psi(f \,
        a)}\right).
  \]
  We need to verify that extensionally equal representatives are mapped to equal
  mewos, which follows from \cref{lem:mewo-sup-is-sup}.

  The following observation is helpful to see that $\Psi$ is a simulation: the
  predecessors (i.e., elements) of $\Vset{A}{f}$ are exactly the elements of the
  form $f(a_0)$ for $a_0 : A$, and similarly, via a quick calculation, the
  predecessors of $\bigcup_{a : A} \left(\sing{\Psi(f \, a)}\right)$ are of the
  form $\Psi(f\, a_0)$.

  Regarding monotonicity, assume we have elements ${v_1 \in v_2}$ in $\V$.  By
  induction on $v_2$, we may assume that it is of the form $\Vset{A}{f}$, and
  its predecessor $v_1$ is therefore of the form $f(a_0)$.  As we have just
  seen, we then have the desired $\Psi(f \, a_0) < \Psi(\Vset{A}{f})$.
  Regarding the second property, we proceed similarly.  Given any
  $y \in \Psi(\Vset{A}{f})$, we know that $y$ is of the form $\Psi(f \, a_0)$,
  and hence we have $f(a_0) \in \Vset{A}{f}$ as required.
\end{proof}

\begin{lemma}[\flink{Lemma-75}]\label{lem:cwo-sim-V}
  We have a simulation $\cewo \leq \V$.
\end{lemma}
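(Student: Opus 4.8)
The plan is to construct the underlying function $\Theta : \cewo \to \V$ by wellfounded recursion on the order $<$ of $\cewo$, which is legitimate since $(\cewo, <)$ is wellfounded by \cref{lem:<-is-wellfounded}. Given a covered mewo $X$, its predecessors are exactly the mewos $X \down x$ for marked $x : \Mar_X$; I would therefore define
\[
  \Theta(X) \;\defeq\; \Vset{\Mar_X}{\lambdadot{x}{\Theta(X \down x)}}.
\]
This is well-defined because each $X \down x$ is a genuine predecessor of $X$ in $\cewo$ (it is a covered mewo by \cref{lem:down-is-injective}'s surrounding lemmas, and a bounded simulation witnesses $X \down x < X$). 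Conceptually, this mirrors $\Phi$ from \cref{def:Phi}, with $\Mar_X$ playing the role of the carrier $\alpha$ and $X \down x$ the role of the initial segment $\alpha \initseg a$; the key structural input is that $\Theta$ transports the iterated–initial–segment identity (the mewo analogue of \cref{iterated-initial-segments}, already invoked in the proof of \cref{lem:<-is-wellfounded}) to the corresponding identity on $\V$.

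Next I would verify that $\Theta$ is a simulation $\cewo \leq \V$ in the sense of \cref{def:mewo-simulations}, using the characterization of \cref{lem:sim-equal-initial-segments-new}: it suffices to show $\Theta$ preserves markings and that $(\cewo \down X) = (\V \down \Theta(X))$ for all $X$ — equivalently, since both sides are extensional wellfounded orders (ordinals even, after trivializing markings), that $\Theta$ restricts to a bijection between the predecessors of $X$ in $\cewo$ and the members of $\Theta(X)$ in $\V$, compatibly with the orders. Marking preservation is automatic because $\V$ is totally marked. For the initial–segment bijection: a predecessor of $X$ is of the form $X \down x_0$ with $x_0 : \Mar_X$, which $\Theta$ sends to $\Theta(X \down x_0)$, and by definition of set membership $\Theta(X \down x_0) \in \Theta(X)$; conversely every member of $\Theta(X) = \Vset{\Mar_X}{\ldots}$ arises this way. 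Injectivity of $x_0 \mapsto X \down x_0$ on $\Mar_X$ (\cref{lem:down-is-injective}) combined with an induction showing $\Theta$ itself is injective — proved exactly as the injectivity half of \cref{Phi-properties}, by transfinite induction on $\cewo$ using the iterated–initial–segment identity — gives that the correspondence is a genuine bijection. Order compatibility, $X \down x_0 < X \down x_1$ in $\cewo$ iff $x_0 < x_1$ in $X$ iff $\Theta(X \down x_0) \in \Theta(X \down x_1)$, follows from unfolding the definitions and the monotonicity already built into $\down$.

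The main obstacle will be the bookkeeping around \emph{coveredness and the restriction to marked elements}: unlike the ordinal case where $\Phi$ used the whole carrier, here $\Theta$ must use only $\Mar_X$, and one has to repeatedly invoke that $X$ is covered to know that $\Mar_X$ genuinely generates all predecessors — in particular when matching up an arbitrary member of $\Theta(X)$ with a predecessor of $X$, and when checking that the recursion on $X \down x_0$ is being fed a covered mewo whose marked elements are the immediate $<$-predecessors of $x_0$. A secondary subtlety, exactly as flagged in \cref{rem:equivalence-instead-of-equality} and the remark after \cref{def:union-of-ewos}, is keeping the identity $(\cewo \down X) = (\V \down \Theta(X))$ at the right universe level; this is handled, as elsewhere in the paper, by working with the small ($\simeq$-based) formulation of bounded simulation and by local smallness of $\V$ (\cref{cumulative-hierarchy-is-locally-small}). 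Modulo this, once the simulation is in hand, $\V = \cewo$ will follow by antisymmetry (\cref{lem:standard-properties-of-simus}) together with \cref{lem:V-sim-cewo}, and I expect the paper to record this as \cref{thm:second-main-result}.
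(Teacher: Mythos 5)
Your proposal is correct and takes essentially the same route as the paper: the paper defines the very same map $\Phi(X) \defeq \Vset{\Mar_X}{\lambdadot{x_0}{\Phi(X \down x_0)}}$ and verifies the simulation properties via the same correspondence between predecessors $X \down x_0$ (for marked $x_0$) and members $\Phi(X \down x_0)$ of $\Phi(X)$. The only cosmetic difference is that the paper checks monotonicity and the initial-segment property directly, mirroring the proof of \cref{lem:V-sim-cewo}, rather than routing through \cref{lem:sim-equal-initial-segments-new}.
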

\begin{proof}
We define the function $\Phi : \cewo \to \V$ by
\[
\Phi(X) \; \defeq \; \Vset{\Mar_X}{\lambdadot{x_0}{\Phi(X \down x_0)}}.
\]
The predecessors of $X$ are of the form $X \down x_0$ for $x_0 : \Mar_X$, while
the elements of $\Phi(X)$ are $\Phi(X \down x_0)$ for $x_0 : \Mar_X$.
Therefore, the simulation properties for $\Phi$ follow analogously to how we
derived them in the proof of \cref{lem:V-sim-cewo}.
\end{proof}

By \cref{lem:V-sim-cewo,lem:cwo-sim-V}, and antisymmetry, we get:
\begin{theorem}[\flink{Theorem-76}]\label{thm:second-main-result}
  The structures $(\V,\in)$ and $(\cewo,<)$ are equal as covered mewos.\qedNoProof
\end{theorem}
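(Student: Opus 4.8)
The plan is to obtain the equality directly from the two simulations already constructed, via antisymmetry. First I would recall that both sides are (large) covered mewos when equipped with the total marking: for $(\V,\in)$ this uses that $\in$ is wellfounded and extensional (\cref{extensionality-and-induction}) together with the fact that a completely marked mewo is trivially covered, while for $(\cewo,<)$ it uses \cref{lem:<-is-wellfounded} and \cref{thm:covered-ewo-extensional}. Then \cref{lem:V-sim-cewo} supplies a simulation $\Psi : \V \leq \cewo$ and \cref{lem:cwo-sim-V} a simulation $\Phi : \cewo \leq \V$, so antisymmetry of simulations (\cref{lem:standard-properties-of-simus}) yields $\V = \cewo$ as mewos.

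The remaining step is to upgrade this to an equality of \emph{covered} mewos. By \cref{rem:mewo-equality-small}, an equality of mewos unfolds to an equivalence of carriers preserving and reflecting both order and marking, and --- since coveredness is a propositional property --- the same characterization applies verbatim to covered mewos; hence the equality just produced is already the desired element of the identity type in $\cewo$. Since moreover both sides carry the total marking, the marking-preservation clause is automatically satisfied, so what the equality really records is an order-preserving bijection $\V \simeq \cewo$, which is exactly what one would hope for.

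I do not expect a genuine obstacle here: all the substance lies in the constructions of $\Psi$ and $\Phi$ and the verification that they are simulations, which was carried out in the two preceding lemmas. The one point meriting a second look is that the theory of mewos --- in particular antisymmetry --- is here invoked one universe level up, since $\V_\U$ and $\cewo$ (for mewos over $\U$) both live in $\U^+$; but this is harmless, as the proof of antisymmetry referenced in \cref{lem:standard-properties-of-simus} is universe-polymorphic, being a transcription of the corresponding fact for type-theoretic ordinals. Thus the proof is essentially a one-line appeal to \cref{lem:V-sim-cewo}, \cref{lem:cwo-sim-V}, and antisymmetry, followed by the observation that coveredness adds no further content to the identity type.
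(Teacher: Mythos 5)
Your proof is correct and follows exactly the paper's argument: the theorem is obtained from \cref{lem:V-sim-cewo}, \cref{lem:cwo-sim-V}, and antisymmetry of simulations (\cref{lem:standard-properties-of-simus}), with coveredness being a propositional property so that the equality of mewos is automatically an equality of covered mewos. Your additional remarks about the total markings and universe polymorphism are accurate but not needed beyond what the paper states.
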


\section{Conclusion}
Working in homotopy type theory, we have shown that the set-theoretic ordinals in
\(\V\) coincide with the type-theoretic ordinals.
Moreover, by generalizing from type-theoretic ordinals to covered mewos, we
have captured \emph{all} sets in \(\V\).

A natural question is whether similar results can be obtained by working inside
set theory instead. E.g., we expect the type-theoretic ordinals in the cubical
sets model~\cite{BezemCoquandHuber2014} of homotopy type theory to coincide with
the set-theoretic ordinals, using the Mostowski collapse
lemma~\cite{Mostowski1949}.
Another, orthogonal question is whether the presentation of \(\V\) as the type
of covered mewos can shed any light on the open problem~\cite[below
Cor~10.5.9]{HoTTBook} of whether \(\V\) satisfies the \emph{strong collection}
and \emph{subset collection} axioms of Constructive ZF set theory.
Moreover, it would be interesting to study how other, different notions of
constructive ordinals, such as Taylor's \emph{plumb ordinals}~\cite{Taylor1996},
behave in a type-theoretic setting.

\ifCLASSOPTIONcompsoc
  \section*{Acknowledgments}
\else
  \section*{Acknowledgment}
\fi

We would like to thank Andreas Abel, who asked us how the type-theoretic
ordinals and the ordinals in Aczel's interpretation of set theory in type theory
might be related.
We are also grateful to Mart\'in Escard\'o for discussions on ordinals and the
ability to build on his Agda development.
Finally, we are thankful to Paul Levy for several valuable suggestions.

Funding: This work was supported by The Royal Society (grant reference URF\textbackslash{}R1\textbackslash{}191055) and the UK National Physical Laboratory Measurement Fellowship project ``Dependent types for trustworthy tools''.

\IEEEtriggeratref{3}


\bibliographystyle{IEEEtran}
\bibliography{IEEEabrv,references}

\end{document}